\def\eps{\varepsilon}
\theoremstyle{plain}
\newtheorem{theorem}{Theorem}
\newtheorem{lemma}[theorem]{Lemma} 
\newtheorem{corollary}[theorem]{Corollary}  
\newtheorem{proposition}[theorem]{Proposition}  
\newtheorem{fact}[theorem]{Fact}
\newtheorem{observation}[theorem]{Observation}
\newtheorem{definition}[theorem]{Definition}
\newtheorem{claim}[theorem]{Claim} 
\newtheorem{conjecture}{Conjecture}
\newtheorem{question}{Question}
\author[1]{Amir Abboud\footnote{This work is part of the project CONJEXITY that has received funding from the European Research Council (ERC) under the European Union's Horizon Europe research and innovation programme (grant agreement No.~101078482). Additionally, it is supported by an Alon scholarship and a research grant from the Center for New Scientists at the Weizmann Institute of Science. }}
\author[1]{Nathan Wallheimer}
\affil[1]{
Weizmann Institute of Science\\
\href{mailto:amir.abboud@weizmann.ac.il}{amir.abboud@weizmann.ac.il}, \href{mailto:nathan.wallheimer@weizmann.ac.il}{nathan.wallheimer@weizmann.ac.il}}
\title{Worst-Case to Expander-Case Reductions: \\ Derandomized and Generalized} 
\begin{document}

\maketitle

\begin{abstract}
A recent paper by Abboud and Wallheimer [ITCS 2023] presents self-reductions for various fundamental graph problems, 
which transform worst-case instances to expanders, thus proving that the complexity remains unchanged if the input is assumed to be an expander. 
An interesting corollary of their self-reductions is that if some problem admits such reduction, then the popular algorithmic paradigm based on expander-decompositions is useless against it. 
In this paper, we improve their core gadget, which augments a graph to make it an expander while retaining its important structure. Our new core construction has the benefit of being simple to analyze and generalize while obtaining the following results: 
\begin{itemize}
\item A derandomization of the self-reductions, showing that the equivalence between worst-case and expander-case holds even for deterministic algorithms, 
and ruling out the use of expander-decompositions as a derandomization tool. 
\item An extension of the results to other models of computation, such as the Fully Dynamic model 
and the Congested Clique model. In the former, we either improve or provide an alternative approach 
to some recent hardness results for dynamic expander graphs by Henzinger, Paz, and Sricharan [ESA 2022]. 
\end{itemize}
In addition, we continue this line of research by designing new self-reductions for more problems, such as Max-Cut and dynamic Densest Subgraph, and demonstrating that the core gadget 
can be utilized to lift lower bounds based on the OMv Conjecture to expanders. 
\end{abstract}

\thispagestyle{empty}
\newpage
\setcounter{page}{1}

\section{Introduction}
\label{sec:introduction}
When studying the complexity of any graph problem, it is natural to ask whether the problem can be solved faster on \emph{expanders}, i.e., random-like, well-connected graphs that satisfy a certain definition of expansion. 

\begin{question}
\label{q:question1}
Are expanders worst-case instances of my problem?\end{question}

The motivation for such a question comes from multiple sources.
First, it is inherently interesting to understand how the rich mathematical structure of expanders affects the complexity of fundamental problems such as shortest paths, cuts, matchings, subgraph detection, and so on. After all, expanders are among the most important graph families in computer science.
Second, expanders exhibit some of the most algorithmically useful properties of \emph{uniformly random} graphs, and so this question may help understand the \emph{average-case} complexity. 
Third, graphs that arise in applications may be expanders (e.g., in network architecture). 
Moreover last but not least is the hope that if we solve a problem faster on expanders, we will also be able to solve it in the worst case by utilizing the popular \emph{expander decomposition method}, which we discuss soon. 
Unless explicitly stated otherwise, we use the conductance-based notion of $\phi$-expanders, whose precise definition can be found in \cref{sec:preliminaries}, and we say that a graph is an expander if it is an $\Omega(1)$-expander.

It is possible to cook up problems for which the answer to \cref{q:question1} is negative. For instance, we can solve connectivity in constant time if the input is promised to be an expander, but it requires linear time in the worst case. A less obvious example with such gaps is counting spanning trees~\cite{li2023new}. 
However, for many (perhaps most) interesting graph problems, the answer seems to be \emph{positive}: expanders do not make the problem any easier. 
In other words, the \emph{expander-case} is also worst-case. But how do we prove that? Let us discuss three methods and their drawbacks. 
\begin{enumerate}
\item The first and most obvious method is to prove a lower bound for the problem on expander instances that matches the worst-case upper bound. 
Technically, this may follow directly from the existing lower bounds for the problem since they are often proved on random-like graphs (e.g., for distance oracles \cite{sommer2009distance}), 
or it may require some modifications to the lower bound proofs (e.g., for dynamic graph problems \cite{henzinger2022fine}).

The main drawback with this approach is that we are interested in answering \cref{q:question1} even when (or rather, especially when) we have not already resolved the worst-case 
time complexity of our problem, in which case we do not even have a matching lower bound (e.g., the Maximum Matching problem). 
Another drawback is that when asking for the \emph{fine-grained complexity} of problems, the existing lower bounds are usually conditioned on strong assumptions, 
and one may hope to get an unconditional answer to \cref{q:question1}. 

The next two approaches resolve these drawbacks since they are based on \emph{worst-case to expander-case self-reductions} (WTERs). 
Such techniques show equivalence between the expander-case complexity and the worst-case complexity. 

\item The second approach uses the \emph{expander decomposition method}.
This is a popular paradigm in recent years that suggests we can solve graph problems by (1) decomposing the
graph into vertex-disjoint expanders with a small number of edges between them, (2) solving the problem on each expander separately, and (3) combining all the answers efficiently. Step (3) requires problem-specific techniques.
If we can solve steps (1) and (3) for a problem, then we have effectively shown that any improvement on expanders will yield an improvement on worst-case graphs, giving a positive answer to \cref{q:question1}.
Efficient algorithms for computing such expander decompositions (for step (1)) are known both in the Word-RAM ~\cite{Kaplan22,saranurak2019expander,kannan2004clusterings,spielman2004nearly,orecchia2008partitioning} and in other models of computation including dynamic~\cite{saranurak2019expander}, distributed~\cite{chang2019improved}, and recently even in streaming~\cite{filtser2022expander}.
Applications of this paradigm have led to many breakthroughs in recent years to problems such as Maximum Flow~\cite{chen2022maximum}, Dynamic Connectivity~\cite{goranci2021expander}, Gomory-Hu Trees~\cite{AKT21}, Minimum Spanning Trees~\cite{NSW17}, and 
Triangle Enumeration~\cite{chang2019improved}.

Self-reductions of this form are called \emph{ED-WTER}s. 
A recent paper by Abboud and Wallheimer~\cite{abboud2023worst} proposed an alternative, simpler method of self-reductions that \emph{do not} use expander decompositions.
They call this method \emph{Direct-WTER}s, which we discuss next, and it is not only simpler but also yields stronger qualitative and quantitative results.

\item The third and most direct method is to show that any graph can be turned into an expander without affecting the solution to the problem or increasing the size of the graph by too much. 
In their paper, Abboud and Wallheimer~\cite{abboud2023worst} gave the following definition, which we slightly reframe to fit our discussion more accurately: 
\begin{definition}[Direct-WTER{~\cite[Definition 2]{abboud2023worst}}]
\label{def:direct}
A direct worst-case to expander-case self-reduction to a graph problem $\mathcal{A}$, is an algorithm that given any instance $G$ with $n$ vertices and $m$ edges, 
computes in $\tilde{O}(n+m)$ time a graph $G_{exp} := (V_{exp},E_{exp})$ with the following guarantees:
\begin{itemize}
\item $G_{exp}$ is an $\Omega(1)$-expander with high probability. 
\item The \emph{blowup} in the number of vertices and edges in $G_{exp}$ is $|V_{exp}| \leq K$ and $|E_{exp}| \leq M$ for some $K := K(n,m)$ and $M := M(n,m)$. 
\item The solution $\mathcal{A}(G)$ can be computed from the solution $\mathcal{A}(G_{exp})$ in $\tilde{O}(m+n)$ time. 
\end{itemize}
\end{definition}

\noindent
Direct-WTERs can be used to show equivalence between the complexity of polynomial-time problems and their complexity on $\Omega(1)$-expanders. 
Namely, if problem $\mathcal{A}$ is a polynomial-time problem that admits a Direct-WTER, then $\Omega(1)$-expanders are worst-case instances of $\mathcal{A}$ (ignoring poly-logarithmic factors).

The main contribution of Abboud and Wallheimer was to show that some fundamental problems, such as $k$-Clique Detection and Maximum Matching, admit 
 simple Direct-WTERs. In particular, their Direct-WTERs make a graph an expander by employing a \emph{core gadget} that augments it with $O(n)$ vertices and $O(m + n\log n)$ random edges 
and then applies additional gadgets that control the solution. In particular, they obtain a near-linear blowup. 
Their results are surprising because such Direct-WTERs do not employ any of the heavy machinery that usually comes with expander decompositions, 
yet they output quantitatively better expanders: the outputs of Direct-WTERs are $\Omega(1)$-expanders (by definition), whereas expander decompositions can only produce $O(1/\log n)$-expanders (that are not as expanding)~\cite{saranurak2019expander,alev2017graph}. 


Furthermore, the simplicity of such Direct-WTERs leads to interesting and important messages to algorithm designers, as observed in~\cite{abboud2023worst}: 
The expander decomposition method is useless in the presence of Direct-WTERs because decomposing a graph into $o(1)$-expanders is meaningless when we can 
assume that the input graph is already an $\Omega(1)$-expander after a simple modification. 
This addressed (with a negative answer) a question that many researchers have wondered about as they looked for the next breakthrough to be obtained via the expander decomposition method: 
\begin{question} 
\label{q:question2} 
Are expander decompositions the key to solving my problem? 
\end{question}
\end{enumerate}

\paragraph{This work.} Motivated by the appeal of the method of Direct-WTERs towards answering \cref{q:question1} and \cref{q:question2}, our goal is to develop this theory further. 
Toward that, we address the two main limitations that were highlighted in~\cite{abboud2023worst}: (1) the Direct-WTERs are \emph{randomized} whereas ED-WTERs are deterministic~\cite{chuzhoy2020deterministic}, 
and (2) the results are restricted to the Word-RAM model, whereas expander decompositions are popular tools in other models as well. 
In addition, we continue their line of work by providing Direct-WTERs to additional problems. 
Let us motivate these two topics before stating our results formally. 

\subsection{Deterministic Direct-WTERs}
\label{sub:derandomization}

The randomized Direct-WTERs of Abboud and Wallheimer~\cite{abboud2023worst} prove that $\Omega(1)$-expanders are worst-case instances of many problems \emph{if} we allow algorithms to be randomized. 
They leave us wondering if perhaps $\Omega(1)$-expanders are truly easier for deterministic algorithms.
We remark that while it is believed that all algorithms can be derandomized by incurring a small polynomial blowup (as in $P=BPP$), it is far from clear that this blowup can be made $n^{o(1)}$ 
(see \cite{chen2023guest} for the state-of-the-art on such results). 
Can we provide a positive answer to \cref{q:question1} with respect to \emph{deterministic} algorithms by designing \emph{deterministic} Direct-WTERs?

Additional motivation comes from the hope of using the expander decomposition method in order to get breakthrough \emph{derandomization} results, along the lines of \cref{q:question2}, for problems where the current randomized algorithms are much faster than the current deterministic algorithms.
Indeed, deterministic expander decompositions~\cite{chuzhoy2020deterministic} have already played a major role in some of the most remarkable derandomization results of recent years, e.g. for Global Min-Cut \cite{KT19,saranurak2021simple,LP20,li2021deterministic,henzinger2024deterministic}. 
\begin{question}
\label{q:question3}
Are expander decompositions the key to derandomizing my algorithm?
\end{question}
\noindent
Similarly to the observation in~\cite{abboud2023worst}, \emph{deterministic Direct-WTER} also convey a message to algorithm designers: that the answer to the above question is negative, i.e., 
expander decompositions are useless for derandomizing the problem. 

Motivated by this, our first result is a derandomization of the \emph{core gadget} in~\cite{abboud2023worst}, 
resulting in deterministic Direct-WTERs for various problems. In particular, we show that 
all problems admitting \emph{randomized} Direct-WTERs in~\cite{abboud2023worst}, and some additional problems, such as the Max-Cut problem, admit deterministic direct-WTERs.
\begin{theorem}
\label{thm:oldproblems}
The following problems admit deterministic Direct-WTERs: 
Maximum Matching, Minimum Vertex Cover, $k$-Clique Detection, $k$-Clique Counting, 
Max-Clique, Max-Cut, Minimum Dominating Set, and $H$-Subgraph Detection ($m = \tilde{O}(n)$ and $H$ does not contain pendant vertices). 
\end{theorem}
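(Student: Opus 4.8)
The plan is to derandomize the core gadget from Abboud--Wallheimer: the only randomness in their Direct-WTERs is in the construction of a sparse expander on the $O(n)$ augmenting vertices and in the $\tilde O(n+m)$ random edges connecting them to the original graph. Since explicit constant-degree expanders are known unconditionally (e.g.\ via zig-zag or Margulis-type constructions, or LPS Ramanujan graphs), we can replace the random sparse expander with an explicit one computable in $\tilde O(n)$ time. The more delicate point is replacing the random \emph{connection} edges: here I would argue that what the analysis actually needs is not true randomness but a bipartite-expansion / mixing property between the original vertex set and the gadget, which can be supplied either by an explicit bipartite expander or by carefully routing each original vertex to a bounded number of gadget vertices in a load-balanced, deterministic fashion. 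The key lemma to prove (or lift from the improved core construction promised in the introduction) is: \emph{given any $n$-vertex, $m$-edge graph $G$, one can in deterministic $\tilde O(n+m)$ time add $O(n)$ vertices and $\tilde O(n+m)$ edges so that the result is an $\Omega(1)$-expander, while every original vertex retains its original neighborhood among original vertices.}

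Once the deterministic core gadget is in hand, the proof of the theorem is a problem-by-problem verification that the \emph{outer} gadgets used in \cite{abboud2023worst} (the ones that preserve the value of Maximum Matching, Minimum Vertex Cover, $k$-Clique, etc.) are already deterministic, and that they compose with the new core gadget without reintroducing randomness. For each listed problem I would: (i) recall the outer gadget that encodes the solution; (ii) check that attaching it and then applying the deterministic core gadget still yields an $\Omega(1)$-expander --- this should follow because the core gadget's expansion guarantee is robust to the bounded-degree, bounded-size modifications the outer gadgets make; (iii) check the solution can be recovered in $\tilde O(n+m)$ deterministic time. Max-Cut and the other ``new'' problems require designing a fresh (but still deterministic) outer gadget, which I would do by the same template: pad the instance with a deterministic structure whose contribution to the objective is a fixed, computable quantity, so that the optimum on $G_{exp}$ determines the optimum on $G$. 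For $H$-Subgraph Detection, the hypothesis that $H$ has no pendant vertices is exactly what makes a local, deterministic attachment of the gadget safe --- no spurious copy of $H$ can use a gadget vertex as a degree-one endpoint.

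I expect the main obstacle to be the deterministic construction of the core gadget's connection edges with the right expansion \emph{and} the right near-linear sparsity simultaneously. Off-the-shelf explicit bipartite expanders typically give either the wrong degree profile or only lossy expansion, and the analysis in \cite{abboud2023worst} crucially uses that random edges spread out every cut; to reproduce this deterministically I anticipate needing the ``simple to analyze and generalize'' improved core construction advertised in the abstract --- likely a construction where one attaches an explicit constant-degree expander $X$ on a fresh vertex set of size $\Theta(n)$ and then connects $G$ to $X$ via a fixed perfect-matching-like or round-robin assignment, and proves expansion by a direct cut-counting argument (any sparse cut in $G_{exp}$ would have to cut many edges of $X$ or many connection edges). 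The rest of the argument --- composing gadgets, bounding blowup, recovering solutions --- is routine bookkeeping that mirrors \cite{abboud2023worst} and should go through with only cosmetic changes once the deterministic core is established.
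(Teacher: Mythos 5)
Your high-level plan matches the paper's: keep the outer, solution-controlling gadgets of Abboud--Wallheimer essentially as they are, and replace the random core by a deterministic one built from an explicit expander plus a round-robin-style connection of $G$ to the gadget. Your final-paragraph guess is in fact close to the paper's construction. However, the specific instantiations you lean on are exactly the ones the paper rules out, and this is the crux rather than a detail. Connecting each original vertex to a \emph{bounded} number of gadget vertices (your ``perfect-matching-like'' option) fails outright: if $G$ contains an isolated $\sqrt{n}$-clique $S$, then $vol(S)=\Theta(n)$ while only $O(\sqrt{n})$ connection edges leave $S$, so $\phi(S)=O(1/\sqrt{n})$. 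Likewise, a \emph{constant-degree} explicit expander $X$ does not suffice when $G$ is dense and structured: the round-robin connection graph is highly periodic, so one can find a set $A_V\cup A_L$ that is nearly closed under the $V$-to-$L$ edges, and then the only escape is through $X$; with $X$ of constant degree this gives conductance $O(1/d)$ where $d\approx 2m/n$. The paper's construction pins down the two quantitative choices you leave open: every $v$ gets $\deg_G(v)+3$ neighbors (degree-proportional, not bounded) in a balancing layer $L$, assigned by round-robin so that all degrees in $L$ are $\lceil (2m+3n)/N\rceil$ up to $\pm 1$, and $X$ is an explicit \emph{bipartite $d$-regular} expander between $L$ and a second layer $R$ with $d=\lceil (2m+3n)/N\rceil$, i.e.\ degree matching the average degree of $G$ (obtained from Alon's strongly explicit construction via a $2$-lift). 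The two-layer split is what lets one use an off-the-shelf regular explicit expander despite $G$ being irregular; the expansion proof then splits on whether the cut's volume sits mostly in $L\cup R$ (use $X$ and the degree bound on $L$) or mostly in $V$ (use that each $v$ has at least as many $L$-neighbors as $G$-neighbors).

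A second genuine gap: the theorem includes exponential-time problems (Minimum Vertex Cover, Max-Clique, Minimum Dominating Set, exact Max-Cut), for which a vertex blowup of $c\cdot n$ with $c>1$ is already too costly; your verification template does not address this. The paper handles it with a blowup--conductance tradeoff variant of the core gadget (blowup $\varepsilon n$, conductance $\Omega(\varepsilon)$, via $\lceil \varepsilon\deg_G(v)\rceil+3$ connection edges and $N=\Theta(\varepsilon n)$), and for Max-Clique and Minimum Dominating Set it additionally needs a small ``hitting set'' $Q$ of sparse cuts, whose randomized construction is derandomized by exhaustive search in $2^{O(\varepsilon n)}$ time --- an affordable but necessary extra step your outline omits. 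Finally, Max-Cut needs more than padding by a fixed additive quantity: the paper adds symmetric $V$-to-$R$ connections (so $\deg_L(v)=\deg_R(v)$) together with two bi-cliques $F_L,F_R$ that force $L$ and $R$ to opposite sides of any maximum cut, and only then does $MC(G_{exp})=MC(G)+7dN$ follow; this is in the spirit of your template but requires a concrete argument you have not supplied.
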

\noindent
We provide formal definitions and an overview of all problems mentioned below in \cref{app:problems}. 

An important feature of our deterministic core gadget is that it remains simple and efficient. Interestingly, this stands in contrast with other derandomization results 
in fine-grained complexity, which often tends to involve sophisticated methods and some slowdown (see, e.g., ~\cite{chan2020reducing,fischer2023deterministic}). 
As discussed earlier, simplicity is an important aspect of Direct-WTERs, not just because it makes them more accessible to the community, 
but also because it strengthens the message that expander decompositions become useless in the presence of Direct-WTERs. 
In \cref{sec:overview}, we provide an intuitive overview of how our derandomization is obtained by a modification to the core gadget of Abboud and Wallheimer~\cite{abboud2023worst} 
and in \cref{sec:coregadget} we provide the construction itself. 

Interestingly, our Direct-WTERs also improve upon the blowup in the number of added edges over the Direct-WTERs in~\cite{abboud2023worst}, resulting in $\Omega(1)$-expanders with $O(n)$ vertices and $O(m+n)$ edges, whereas in~\cite{abboud2023worst}, 
the expanders have $O(m + n\log n)$ edges. Note that this blowup is optimal since any expander is connected and, therefore, must contain $\Omega(n)$ edges. 
We also demonstrate that our core gadget preserves the following graph properties: 
  (1) Bipartiteness-preserving; we can modify the core gadget so that if $G$ is bipartite, then so is the expander, and (2) Degree-preserving; if the maximum degree in $G$ is $\Delta$, then the maximum-degree in $G_{exp}$ is $2\Delta + O(1)$. 

\paragraph{Remarks:}
\begin{enumerate}

\item For the exponential-time problems in \cref{thm:oldproblems}: Minimum Vertex Cover, Minimum Dominating Set, Max-Clique, and Max-Cut, 
the blowup in the number of vertices in the output graph must be subject to stronger restrictions than for polynomial-time problems. In particular, the blowup should be $n+o(n)$, 
to show that expanders are worst-case instances.\footnote{
Otherwise, an exponential speed-up on expanders does not necessarily translate to an exponential speed-up on general graphs.}
For such problems, we employ a generalized core gadget, which gives a tradeoff between the conductance and the blowup, resulting in Direct-WTERs providing, for every $0 < \eps \leq 1$,  
conductance $\Omega(\eps)$ and blowup $\eps n$. 

\item For $k$-Clique Detection, we slightly improve the parameters over the Direct-WTERs given in~\cite{abboud2023worst} for this problem.
In~\cite{abboud2023worst}, the output is an $\Omega(1/k^2)$-expander $G_{exp}$, with a blowup of $\Theta(nk)$ vertices and $\Theta(k^2m)$ edges, where each $k$-clique in $G$ corresponds to 
$k!$ $k$-cliques in $G_{exp}$. Our improved Direct-WTER produces an $\Omega(1)$-expander $G_{exp}$ with $O(n)$ vertices and $O(m+n)$ edges (even if $k = \omega(1)$), such that every $k$-clique in $G$ corresponds to $k+1$ $k$-cliques in $G_{exp}$. This enhancement makes our reduction more suitable for parameterized algorithms and larger (non-constant) values of $k$. 

\item For the Max-Cut problem, there is an interesting related work on the approximation variant of the problem on expanders. 
A famous algorithm by Goemans and Williamson~\cite{goemans1995improved} obtains a $>0.878$-approximation for the maximum cut in general graphs, based on a Semidefinite Programming relaxation. 
In search of other, perhaps simpler methods for approximating the max-cut beyond the trivially obtained $1/2$-approximation\footnote{Which follows from the fact that the maximum cut is at least $m/2$ in any graph with $m$ edges.}, Trevisan~\cite{trevisan2009max} posed the following question: Is there a combinatorial algorithm that achieves better than  $1/2$-approximation? 

A positive answer to this question was given by Kale and Seshadhri~\cite{kale2010combinatorial}, which remains the current-best combinatorial algorithm for this problem.
A recent paper by Peng and Yoshida~\cite{peng2023sublinear} addresses this question on expanders, providing a combinatorial algorithm for approximating the maximum cut on $\phi$-expanders. 
Namely, the authors provide an algorithm that given $\eps$, computes a $(1/2+\eps)$-approximation, subject to $\eps = O( \phi^2 )$. In more detail, it computes a value $x$ such that $(1/2+\eps)MC(G) \leq x \leq MC(G)$, where $MC(\cdot)$ denotes the cardinality of the maximum cut in $G$. Moreover, its running time is sublinear when $\phi$ is a constant. 
Our Direct-WTER for Max-Cut in \cref{thm:oldproblems}, on the other hand, is a reduction that given $G$ and $0<\phi \leq 1$, 
outputs an $\Omega(\phi)$-expander $G_{exp}$, and the maximum cut in $G_{exp}$ is $MC(G_{exp}) \leq (1+4\phi)MC(G)$, assuming the input graph is not too sparse, say, $m = \omega(n)$. 

Can we apply the Direct-WTER and then use the algorithm of Peng and Yoshida to get a combinatorial algorithm that $(1/2+\eps)$-approximates the maximum cut in general graphs? 
Perhaps we may even improve upon the algorithm by Kale and Seshadhri, as both the Direct-WTER and the algorithm of Peng and Yoshida are very efficient. However, the approximate value we get from this approach is $(1/2+\eps)MC(G) \leq x  \leq (1+4\phi)MC(G)$, 
or equivalently $(1/2+\eps)/(1+4\phi) MC(G) \leq x \leq MC(G)$. 
For this approximation ratio to be larger than $1/2$, we need to pick $\eps > 2\phi$, but recall the constraint $\eps \leq O(\phi^2)$. Hence, this approach fails. 
This is not so surprising, since our Direct-WTER is quite elementary so we do not expect to make use of it as a subroutine inside another algorithm. 
Instead, this result should be interpreted as a limitation to algorithms for Max-Cut on $\phi$-expanders; that one cannot obtain a $(1/2+\eps)$-approximation for some $2\phi<\eps \leq 1/2$ 
unless this (unlikely) approach works. 
\noindent

\end{enumerate}

\subsection{Direct-WTERs in the Fully Dynamic setting}
\label{sec:models}
Before this work, Direct-WTERs were limited to the (randomized) Word-RAM model, whereas ED-WTERs could address many other models. 
One particular area in which expanders, expander decomposition, and derandomization are important subjects is the area of dynamic graph algorithms. 
Let us focus on the Fully Dynamic model of computation, where the goal is to maintain the solution of a problem in a graph undergoing edge updates, i.e., edge insertions and deletions. 
Dedicated tools have been developed for maintaining an expander decomposition in this model and subsequently achieved major breakthroughs (e.g., \cite{NSW17}).
Notably, derandomization is a central concern in the dynamic setting because deterministic algorithms are essentially the only ones that work against adaptive adversaries (see, e.g., \cite{beimel2022dynamic}). 
For these reasons, the three main questions outlined above are particularly interesting in this model. Our first question, in this context, is whether Direct-WTERs can be adapted to the 
Fully Dynamic setting. 

An important related work is a recent paper by Henzinger, Paz, and Sricharan~\cite{henzinger2022fine} (abbreviated as HPS), 
who initiated the study of \cref{q:question1} in the dynamic model, regarding the complexity of fundamental problems on \emph{dynamic expanders}. 
A dynamic expander is a dynamic graph that undergoes edge updates but remains an $\Omega(1)$-expander 
at any point in time. They adapt lower bound proofs from fine-grained complexity, so that they hold even on dynamic expanders. 
Their techniques differ from the self-reduction approach of Direct-WTERs and correspond to the first of the three methods outlined above to answer \cref{q:question1}. 
In particular, they base their results on the \emph{Online Matrix Vector} (OMv) conjecture, which was introduced by Henzinger et al.~\cite{henzinger2015unifying} 
to prove lower bounds for various dynamic problems. 
HPS obtained their results by adapting these lower bound proofs to the case of constant-degree expanders, 
proving that dynamic expanders whose maximum degree remains bounded by a constant are OMv-hard. The problems they consider in this context are Maximum Matching, Densest Subgraph, and $st$-Shortest Path (abbreviated as $st$-SP). 
This leaves us wondering with the following questions: 
\begin{enumerate} 
\item Are expanders in higher density regimes, whose maximum degree is not bounded by a constant, also OMv-hard instances? 
\item Can the techniques that are used in the static setting to prove \cref{thm:oldproblems} contribute to this study by, e.g., providing a simpler, 
or alternative method to prove that certain problems remain OMv-hard on expanders? 
\end{enumerate} 
\noindent 

Our second main result is an adaptation of the deterministic core gadget to the dynamic setting, resulting in a deterministic, dynamic algorithm for maintaining a dynamic $\Omega(1)$-expander, 
whose running time is amortized $\tilde{O}(1)$ per edge update. Subsequently, we show \emph{Dynamic Direct-WTERs} (abbreviated as DD-WTERs) to various problems,\footnote{Namely, to all problems in the previous theorem, except that we do not discuss the exponential-time problems in the dynamic model in this work.} 
thus proving that $\Omega(1)$-expanders are worst-case instances and that the expander decomposition method is useless against them. We present and discuss the formal definition of DD-WTERs in \cref{app:ddwter}.
\begin{theorem}
\label{thm:dynamic}
The following problems admit a DD-WTER: 
Maximum Matching, Bipartite Perfect Matching, Densest Subgraph (in graphs with $m > 42n$ edges), $k$-Clique Detection, $k$-Clique Counting, and $H$-Subgraph Detection (where $m = \tilde{O}(n)$ and $H$ does not contain pendant vertices). 
\end{theorem}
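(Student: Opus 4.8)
The plan is to first establish the dynamic analogue of the core gadget: a deterministic algorithm that, given a dynamic graph $G$ on $n$ vertices undergoing edge insertions and deletions, maintains an augmented graph $G_{exp}$ that is an $\Omega(1)$-expander at every point in time, using $O(n)$ extra vertices, $O(m+n)$ edges, and amortized $\tilde O(1)$ update time. Assuming this (it is the ``second main result'' referenced in the text just before the theorem), the proof of Theorem~\ref{thm:dynamic} is then problem-by-problem: for each listed problem we compose the core gadget with the problem-specific control gadgets already used in the static setting to obtain Theorem~\ref{thm:oldproblems}, and argue that (a) those control gadgets can themselves be maintained under edge updates with $\tilde O(1)$ overhead, and (b) the solution to the problem on $G$ can be recovered from the solution on $G_{exp}$, update-by-update, in $\tilde O(1)$ time — matching the DD-WTER definition in Appendix~\ref{app:ddwter}.

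I would organize the argument in the following order. \emph{Step 1: the dynamic core gadget.} The static construction augments $G$ with $O(n)$ auxiliary vertices and a deterministically chosen set of $\tilde O(n+m)$ edges forming an expander on the auxiliary part, wired to $V(G)$ so that the whole graph expands regardless of $G$'s structure. To make this dynamic, I would keep the auxiliary ``expander skeleton'' essentially static, and handle each edge update to $G$ by a local, bounded-size repair of the interface edges between $V(G)$ and the skeleton; since one edge update changes the degrees of only two vertices of $G$, only $O(1)$ interface edges need to be retargeted, and standard charging gives amortized $\tilde O(1)$ (the $\tilde O$ absorbing a rebuild-from-scratch every $\Theta(n)$ updates to keep the vertex count $O(n)$ as $m$ fluctuates). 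I must verify the conductance bound is invariant under these local repairs — this is where the modular, easy-to-analyze nature of the new core gadget (emphasized in the introduction) pays off, since the expansion proof is a simple counting/flow argument rather than a spectral one. \emph{Step 2: composing with control gadgets.} For Maximum Matching and Bipartite Perfect Matching, the static reduction attaches pendant-like gadgets to force the optimum to a known value plus $\mathcal{A}(G)$; these gadgets are attached to a fixed auxiliary set, so an edge update to $G$ touches them only through $O(1)$ incident changes, and the recovery formula (optimum on $G$ $=$ optimum on $G_{exp}$ minus a fixed constant) is maintained trivially. For $k$-Clique Detection, $k$-Clique Counting, and $H$-Subgraph Detection, I use the improved clique gadget from Remark~2 (each $k$-clique in $G$ ↦ $k+1$ cliques in $G_{exp}$, i.e.\ an exact affine correspondence of counts), and the ``no pendant vertices'' / $m=\tilde O(n)$ hypotheses ensure the auxiliary edges never create spurious copies of $H$; maintaining the count under one edge update is $\tilde O(1)$ because the correspondence is linear. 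For Densest Subgraph in graphs with $m>42n$ I would use the density-padding gadget (the constant $42$ presumably coming from the conductance-vs-density tradeoff in the generalized core gadget) and argue that the densest subgraph of $G_{exp}$ either lies inside (a known image of) $G$ or is a fixed auxiliary subgraph of known density, so the answer for $G$ is read off in $\tilde O(1)$.

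\emph{Step 3: assembling the DD-WTER.} Finally I would check each of the three DD-WTER requirements from Appendix~\ref{app:ddwter}: the preprocessing builds $G_{exp}$ in $\tilde O(n+m)$ time; each update to $G$ is translated to $O(1)$ updates to $G_{exp}$ plus $\tilde O(1)$ bookkeeping so that $G_{exp}$ stays an $\Omega(1)$-expander and the maintained solution $\mathcal{A}(G_{exp})$ yields $\mathcal{A}(G)$ in $\tilde O(1)$ extra time; and the determinism is inherited from the derandomized core gadget (no hash families or random edge sampling anywhere in the pipeline). I expect the main obstacle to be \textbf{Step 1}, specifically proving that the conductance stays $\Omega(1)$ under an adversarial sequence of local interface repairs while simultaneously keeping the number of auxiliary vertices $O(n)$ as $m$ grows and shrinks — the static analysis only needs to hold for one fixed graph, whereas here I need a repair invariant that is preserved by every update, which likely forces a careful ``lazy rebuild'' schedule and a potential-function argument tying the current interface to a canonical static configuration. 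A secondary subtlety is handling degree-$0$ or very low-degree vertices of $G$ (created by deletions), since these are exactly the vertices that threaten expansion and must always retain enough interface edges to the skeleton; this is the dynamic counterpart of the ``pendant vertices'' caveat and I would address it by giving every vertex of $G$ a fixed baseline of $\Theta(1)$ skeleton edges that updates never remove.
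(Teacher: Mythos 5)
Your plan follows the same architecture as the paper (dynamic core gadget first, then compose with the static control gadgets problem-by-problem), but it leaves the heart of the theorem unproven. Step 1 is exactly where all the work is, and you explicitly defer it: saying that each update ``retargets $O(1)$ interface edges'' and that ``standard charging gives amortized $\tilde O(1)$'' skips the central difficulty, namely \emph{where} to attach the new $V$-to-$L$ edges so that the graph stays simple (the minimum-degree vertex of $L$ may already be a neighbor of $v$) and so that the degrees in $L$ stay within a constant factor of the expander degree $d_X$ --- the property the conductance proof actually needs. The paper resolves this with a concrete mechanism you do not supply: a lazy-greedy {\tt UPDATE} that waits until $\deg_G(v)$ doubles and then attaches a batch of $\deg_L(v)+3$ edges to minimum-degree vertices of $L\setminus N(v)$ found by successor queries; a {\tt BALANCE} procedure (full Round-Robin rewiring) triggered when $\Delta_L\ge 2\delta_L$, justified by a nontrivial charging argument (with a ``reserve'' of uncharged insertions) showing each such trigger is preceded by $\Omega(m+n)$ insertions; a {\tt RECOMPUTE} when $m$ doubles or halves relative to the last rebuild (not ``every $\Theta(n)$ updates,'' and not to control the vertex count, which is fixed, but to keep $d_X$ proportional to the current average degree); and a robustness lemma (Lemma~\ref{lem:robust}) showing the conductance survives these constant-factor degree fluctuations, including during the intermediate graphs of a rebuild. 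Acknowledging that ``a careful lazy rebuild schedule and a potential-function argument'' is needed is not the same as giving one, so as written the proposal does not establish the theorem.

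A secondary error: for $H$-Subgraph Detection you claim the ``no pendant vertices'' and $m=\tilde O(n)$ hypotheses ``ensure the auxiliary edges never create spurious copies of $H$.'' They do not; the expander part of $G_{exp}$ certainly contains cycles and other small subgraphs. The paper prevents spurious copies by subdividing every edge of $E(V,L)$ and $E(L,R)$ into a path of length $|V(H)|/2$ (the no-pendant hypothesis is what makes that subdivision argument work, and $m=O(n)$ is needed because subdivision costs $\Omega(m)$ new vertices); dynamically, each interface edge update becomes a path update. Your Densest Subgraph sketch is also looser than the paper's (which shows every auxiliary vertex has degree below $m_{exp}/n_{exp}$, so no maximizer touches $L\cup R$ and $\rho(G_{exp})=\rho(G)$ exactly, with the constant $42$ and a smaller $\eps$ absorbing the dynamic degree slack), though that part could plausibly be repaired along your lines.
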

\noindent 

In addition, our results also have some interesting implications related to the work of HPS and the questions above. 
\begin{enumerate}
\item For the Densest Subgraph problem, Henzinger et al.~\cite{henzinger2015unifying} prove a lower bound of $n^{1/3-o(1)}$ per update under OMv for \emph{general graphs}. By modifying their reduction, HPS were able to show a weaker $n^{1/4-o(1)}$ lower bound for \emph{expanders}.\footnote{
We remark that the proof in Henzinger et al.~\cite[Corollary 3.26]{henzinger2015unifying} gives an $n^{1/3-o(1)}$ lower bound under OMv, while the introductions of both~\cite{henzinger2015unifying} and HPS~\cite{henzinger2022fine} mention, erroneously, an $n^{1/2-o(1)}$ lower bound.}

As a consequence of our DD-WTER, we conclude that the $n^{1/3-o(1)}$ lower bound for general graphs also holds for expanders. One subtlety towards this result is that the reduction of Henzinger et al. produces very sparse graphs with $m\leq 2n$ edges while our DD-WTER assumes that $m> 42n$. In \cref{app:densest}, we discuss how to modify the original reduction so that denser graphs are produced. 
Another strength of the DD-WTER compared to HPS is that it is a self-reduction, hence it does not depend on the OMv Conjecture to get a lower bound (at least when the $m > 42n$ assumption is made)\footnote{We remark that the limitation of $m > 42n$ in this approach is due to the fact that to augment a graph to become an expander requires at least adding some amount of edges, which unavoidably affects the densest subgraph.}. 

\item For the Maximum Matching problem on graphs of maximum degree $O(n^t)$, for any $0 \leq t \leq 1$, we 
show that this problem is OMv-hard on $\Omega(1)$-expanders as well. This improves upon HPS, who only prove it for the constant-degree case (i.e. $t = 0$)~\cite[Theorem 12]{henzinger2022fine}, while implicitly leaving an open question regarding $t>0$. While intuitively, the constant-degree case should be the most difficult to prove a lower bound for, it is not immediate, as techniques that artificially increase the degrees in the graph (e.g. attaching a star to every vertex) tend to also increase the number of vertices, thus resulting in weaker lower bounds. Instead, our result is obtained by combining a lower bound by HPS for graphs (not expanders) of maximum degree $O(n^t)$~\cite[Theorem 12]{henzinger2022fine}, with our degree-preserving DD-WTER for Maximum Matching. Hence, we get the following corollary:
\begin{corollary}
For any $0\leq \eps,t\leq 1$ and any constant $\eps >0$, there is no dynamic algorithm for maintaining a maximum matching on $\Omega(1)$-expanders with maximum degree $O(n^{t})$, 
with amortized $O(n^{(1+t)/2-\eps})$ update time and $O(n^{1+t-\eps})$ query time, unless the OMv Conjecture is false.
\end{corollary}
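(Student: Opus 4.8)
The plan is to compose two ingredients. The first is a (reading of the) HPS lower bound: for every $0\le t\le 1$ and every constant $\eps>0$, under the OMv Conjecture there is no dynamic algorithm maintaining a maximum matching in an $n$-vertex graph of maximum degree $O(n^{t})$ with amortized update time $O(n^{(1+t)/2-\eps})$ and query time $O(n^{1+t-\eps})$. HPS state the constant-degree case ($t=0$) under the additional constraint that the graph stays an expander; for general $t$, one instead reads their Boolean matrix--vector reduction directly, as it naturally produces a bipartite dynamic graph on $\Theta(n)$ vertices with $\Theta(n^{1+t})$ edges and maximum degree $\Theta(n^{t})$, but \emph{without} the expander guarantee. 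The second ingredient is the degree-preserving DD-WTER for Maximum Matching from \cref{thm:dynamic}, combined with the degree bound of the core gadget from \cref{sec:coregadget}: given a dynamic graph $G$ on $n$ vertices with maximum degree $\Delta$, it maintains a dynamic $\Omega(1)$-expander $G_{exp}$ on $N=O(n)$ vertices with maximum degree $2\Delta+O(1)$, from whose maximum matching $MM(G)$ is obtained by subtracting a fixed, explicitly known constant, while each edge update to $G$ is simulated by at most $\tilde{O}(1)$ edge updates to $G_{exp}$ in amortized $\tilde{O}(1)$ additional time.

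The proof is then by contradiction. Suppose $\mathcal{B}$ is a dynamic algorithm for maximum matching on $\Omega(1)$-expanders of maximum degree $O(N^{t})$ with amortized update time $O(N^{(1+t)/2-\eps})$ and query time $O(N^{1+t-\eps})$, where $N$ is the number of vertices of its input. Take the hard dynamic instances $G$ from the HPS-based lower bound, so $G$ has $n$ vertices and maximum degree $\Delta=O(n^{t})$. Maintain $G_{exp}$ via the DD-WTER and feed its updates to $\mathcal{B}$. Since $N=|V(G_{exp})|=\Theta(n)$ and the maximum degree of $G_{exp}$ is $2\Delta+O(1)=O(n^{t})=O(N^{t})$, the algorithm $\mathcal{B}$ is applicable. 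Each update to $G$ becomes $\tilde{O}(1)$ updates to $G_{exp}$, each handled by $\mathcal{B}$ in amortized $O(N^{(1+t)/2-\eps})=O(n^{(1+t)/2-\eps})$ time, plus amortized $\tilde{O}(1)$ for the DD-WTER's bookkeeping; a query costs $O(N^{1+t-\eps})=O(n^{1+t-\eps})$ to $\mathcal{B}$ plus $O(1)$ to recover $MM(G)$. Hence the composition maintains $MM(G)$ with amortized update time $\tilde{O}(n^{(1+t)/2-\eps})$ and query time $\tilde{O}(n^{1+t-\eps})$, which for any constant $\eps'$ with $0<\eps'<\eps$ is $O(n^{(1+t)/2-\eps'})$ and $O(n^{1+t-\eps'})$, contradicting the HPS-based lower bound and therefore OMv.

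The ``hard part'' is parameter bookkeeping rather than any new idea. One must check: (i) $N=\Theta(n)$, so the exponents $(1+t)/2-\eps$ and $1+t-\eps$ transfer under an arbitrarily small loss in the constant; (ii) the degree regime is preserved exactly, i.e.\ $2\Delta+O(1)=O(N^{t})$ — this is precisely where degree-preservation of the core gadget (not merely that its output is an expander) is essential, since feeding a bounded-degree hard instance into a reduction that blows up the degree would weaken the exponent; and (iii) all polylogarithmic overheads of the DD-WTER (number of simulated updates, per-update simulation cost, answer recovery) are $n^{o(1)}$ and hence absorbed into the $n^{\eps-\eps'}$ slack. A secondary point is to make sure the HPS lower bound we cite is genuinely available for maximum matching on graphs of maximum degree $\Theta(n^{t})$ across the full range $0\le t\le 1$ (their exposition emphasizes $t=0$), and that it is robust to the benign, deterministic gadget edges introduced by the reduction; the latter is guaranteed because the DD-WTER gives the exact identity $MM(G_{exp})=MM(G)+c$ for a known $c$, so the OMv-induced gap in $MM(G)$ is preserved verbatim in $MM(G_{exp})$.
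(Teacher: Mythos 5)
Your proposal is correct and follows essentially the same route as the paper, which proves this corollary exactly by composing the HPS lower bound for maximum matching on graphs of maximum degree $O(n^t)$ with the degree-preserving DD-WTER for Maximum Matching (linear vertex blowup, maximum degree $2\Delta+O(1)$, amortized $\tilde{O}(1)$ update simulation, and exact recovery of $MM(G)$ from $MM(G_{exp})$). Your write-up merely spells out the parameter bookkeeping that the paper leaves implicit, and the caveats you flag (availability of the HPS bound for general $t$, degree preservation, absorbing polylog overheads into the $\eps$ slack) are handled the same way there.
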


\item We demonstrate that our core gadget is useful even outside the context of self-reductions by showing a DD-WTER for \emph{Graphical OMv}, an equivalent graph formulation of the OMv problem. 
This implies that we can lift OMv-based lower bounds to $\Omega(1)$-expanders for many problems, in particular, for problems such as Maximum Matching and $st$-SP, which HPS considered. 
We demonstrate the power of this technique by proving that $st$-SP is OMv-hard on $\Omega(1)$-expanders.\footnote{Note that in comparison with HPS, 
we do not prove hardness for graphs of constant degree, which is outside the scope of our paper.} 

\begin{proposition}[$st$-SP is OMv-hard on $\Omega(1)$-expanders]
\label{prop:st-SP}
For any $\eps > 0$, there is no dynamic algorithm for the dynamic $st$-SP problem on $\Omega(1)$-expanders, with polynomial preprocessing time, 
$O(m^{1/2-\eps})$ update time, and $O(m^{1-\eps})$ query time, assuming the OMv Conjecture. 
\end{proposition}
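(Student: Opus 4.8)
The plan is to route the lower bound through the \emph{Graphical OMv} problem --- for which this paper establishes a DD-WTER --- and to compose that with the classical OMv-based reduction from Henzinger et al.\ to dynamic $st$-SP, checking that the composition keeps the host graph an $\Omega(1)$-expander at every point in time. Thus the only genuinely new ingredient over the known general-graph hardness of dynamic $st$-SP is the ``expander-ification'' supplied by the core gadget.

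Recall that (dynamic) Graphical OMv maintains a bipartite ``template'' graph $G_M$ on $L\cup R$, $|L|=|R|=N$, whose edge set encodes a Boolean matrix $M\in\{0,1\}^{N\times N}$, together with two activation sets $u\subseteq L$ and $v\subseteq R$ modified by edge updates; a query asks whether some active vertex of $u$ is adjacent in $G_M$ to some active vertex of $v$, i.e.\ whether $u^{\mathsf T}Mv=1$. Picking $M$ dense (at least $N^2/4$ ones) so that $G_M$ has $m=\Theta(N^2)$ edges, the OuMv conjecture gives that over the $N$ rounds --- each consisting of $O(N)$ activation updates followed by one query --- the total update-plus-query time must be $N^{3-o(1)}=m^{3/2-o(1)}$, even after polynomial preprocessing. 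Applying the DD-WTER for Graphical OMv, we may assume, at the cost of $\tilde O(1)$ amortized overhead per update and only a constant blow-up in $m$, that $G_M$ together with its activation gadgetry is an $\Omega(1)$-expander at all times; hence dynamic Graphical OMv \emph{on $\Omega(1)$-expanders} still requires $m^{3/2-o(1)}$ total update-plus-query time.

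Next, reduce this expander instance of Graphical OMv to dynamic $st$-SP on an $\Omega(1)$-expander. Add two new vertices $s$ and $t$, connect $s$ to the currently active vertices of $u$ and $t$ to the currently active vertices of $v$, and maintain these two stars under the activation updates (each activation toggle becomes $O(1)$ $st$-SP edge updates); translate each Graphical OMv query into a single query $\mathrm{dist}(s,t)$. In a yes-round there is a path $s-a-b-t$ with $a\in u$, $b\in v$, $ab\in E(G_M)$, so $\mathrm{dist}(s,t)=3$; conversely $\mathrm{dist}(s,t)=3$ forces such a path, and since the core gadget only ever adds edges incident to auxiliary vertices --- never a new $L$--$R$ edge, and $s,t$ are adjacent to no auxiliary vertex --- in a no-round $\mathrm{dist}(s,t)\ge 4$, so the distance query decides the Graphical OMv query. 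Moreover the new graph is still an $\Omega(1)$-expander: at any time $s$ and $t$ contribute together only $O(N)$ edges and two vertices of volume $O(N)$ to a graph of total volume $\Theta(N^2)$, so every cut has, up to a $1\pm o(1)$ factor, the conductance it had in the (expander) Graphical OMv graph. A standard calculation then shows that a dynamic $st$-SP algorithm on $\Omega(1)$-expanders with polynomial preprocessing, $O(m^{1/2-\eps})$ update time, and $O(m^{1-\eps})$ query time would process the whole instance in $\mathrm{poly}(N)+O(N^2\cdot m^{1/2-\eps})+O(N\cdot m^{1-\eps})=\mathrm{poly}(N)+O(N^{3-2\eps})$ time, contradicting the OuMv lower bound and hence the OMv conjecture.

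The main obstacle is making the expansion claim airtight under the dynamic updates: one has to verify that attaching the time-varying $s$- and $t$-stars --- and, beforehand, that the activation gadget inside Graphical OMv together with its DD-WTER --- never drive the conductance below $\Omega(1)$, even in rounds where linearly many vertices are simultaneously active, and one has to keep $s$ and $t$ insulated from every vertex and edge introduced for the sake of expansion, so that the auxiliary part can never serve as the middle of a length-$3$ $s$--$t$ path. Once this bookkeeping is in place, the remaining parameter translation is routine.
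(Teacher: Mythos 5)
Your route is essentially the paper's: expanderify the static template $G_M$ with the bipartiteness-preserving core gadget (this is exactly what the paper calls making Graphical OMv instances $\Omega(1)$-expanders), attach $s$ and $t$ by dynamically maintained stars to the currently active row/column vertices, decide $u^{\mathsf T}Mv$ by testing whether $\mathrm{dist}(s,t)=3$, and run the standard OuMv counting with matrix dimension $\Theta(\sqrt{m})$. The part that is genuinely new in this proposition, however, is precisely the claim that the graph is an $\Omega(1)$-expander \emph{at every time step}, and that is where your argument has a real gap. Your justification --- that the $s$- and $t$-stars contribute only $O(N)$ volume to a graph of volume $\Theta(N^2)$, so every cut keeps its conductance up to $1\pm o(1)$ --- is not a valid argument: cuts containing $s$ or $t$ (e.g.\ $\{s\}$ itself) have no counterpart in the Graphical OMv expander, so there is nothing to perturb. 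The correct statement, proved in the paper as Claim~\ref{claim:omv}, is that attaching a new vertex by $\ell\ge 1$ arbitrary edges to a $\phi$-expander yields a $\phi/4$-expander (a constant-factor, not $1\pm o(1)$, loss), and it needs $\ell\ge 1$. More seriously, in your construction $s$ (resp.\ $t$) is isolated whenever the current vector $u$ (resp.\ $v$) is all-zero, and also in the intermediate graphs between rounds after the old star edges are deleted and before the new ones are inserted; in those states the graph has conductance $0$ and is not a legal input to the hypothetical expander algorithm. You flag this as ``the main obstacle,'' but your insistence that $s$ and $t$ be adjacent to no auxiliary vertex forecloses the fix the paper uses: anchor $s$ to some $x\in L$ and $t$ to some $y\in R$ with $xy\notin E(X)$, so both vertices are permanently attached (hence Claim~\ref{claim:omv} applies at all times, giving a $\phi/16$-expander), while the chosen non-edge guarantees the anchoring creates no $s$--$t$ path of length $<5$, so the $3$ vs.\ $\ge 5$ (in particular $3$ vs.\ not-$3$) distinction is preserved.

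Two smaller points. First, in a reduction \emph{from} OuMv you do not get to ``pick $M$ dense'': $M$ is adversarial. Either justify density by padding (embed $M$ in a larger matrix together with an all-ones block whose rows/columns are never activated) or simply drop the assumption --- it is unnecessary, since the final count only uses $m=O(N^2)$ together with the algorithm's time bounds being stated in terms of $m$. Second, your correctness claim (``no new $L$--$R$ edge is added, so $\mathrm{dist}(s,t)=3$ iff $u^{\mathsf T}Mv=1$'') is fine as far as it goes, but once you add the anchoring edges needed for expansion you must recheck it; this is exactly why the paper chooses the anchors to be a non-edge of $X$, which keeps all $s$--$t$ paths through the gadget of length at least $5$.
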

\end{enumerate}

Finally, let us emphasize another aspect of how our work and HPS differ. One of the main motivations for our work is to address \cref{q:question2} and \cref{q:question3} about the 
applicability of expander decompositions in algorithms, whereas HPS main motivation is to gain a better understanding of the complexity of dynamic problems on various graph families, 
including expanders, along the lines of \cref{q:question1}. 
As we have seen, DD-WTERs imply that expander decompositions are useless because they show equivalence between the worst-case and the expander-case complexities. 
We remark that the results of HPS imply the same, assuming their obtained lower bounds are tight. Hence, it implies a \emph{conditional} answer to these questions, 
whereas self-reductions imply an \emph{unconditional} one. 

\subsection{Expanders in distributed models.}
The above results demonstrate that the impact of Direct-WTERs goes beyond the classic Word-RAM model of computation. A natural continuation of this is to apply these techniques to more models of computation, where expanders, expander decompositions, and derandomization are important subjects. 
In \cref{sec:distributed}, we discuss the applicability of Direct-WTERs in distributed models of computation. We show that while there are limitations in adapting Direct-WTERs to CONGEST, it is possible to do so in CONGESTED-CLIQUE and MPC (Massively Parallel Computation).

\paragraph{Roadmap.} A technical overview is given in Section \ref{sec:overview}, where we also explain the differences compared to~\cite{abboud2023worst}. 
Then, after some preliminaries in Section \ref{sec:preliminaries}, we provide the full details of our derandomized core gadget in Section \ref{sec:coregadget}. 
Section~\ref{sec:variants} presents variants of the core gadget, including a fully dynamic adaptation. 
Section~\ref{sec:wters} presents deterministic Direct-WTERs for Max-Cut, Densest Subgraph, and Graphical OMv. In addition, Section \ref{sec:oldwters} presents a dernadomization and dynamization of all the 
Direct-WTERs that appeared in~\cite{abboud2023worst}, and additional Direct-WTERs for related problems. 
Section \ref{sec:distributed} addresses Direct-WTERs in distributed models of computation. 
Appendix \ref{app:ddwter} contains a formal definition of DD-WTERs.

\section{Technical Overview}
\label{sec:overview}
In this section, we summarize the core gadget of Abboud and Wallheimer~\cite{abboud2023worst} (henceforth, AW) that is used in all their Direct-WTERs, 
and then present our modification. Roughly, their construction boils down to the following procedure.

\paragraph{AW's core gadget.}
 Given a graph $G = (V,E)$, add a set $U$ of $n$ vertices called \emph{expansion layer}. Then, for every vertex $v\in V$, take a sample of $\deg_G(v) + O(\log n)$ vertices in $U$ and make them neighbors of $v$. 
 Clearly, the size of this graph and the running time are both $O(m + n\log n)$. In addition, they prove that it is an $\Omega(1)$-expander with high probability by showing 
that the probability that a cut in $G_{exp}$ ends up being sparse is very low. 
However, there is no guarantee that the problem's solution to $G$ can be computed easily from the solutions to $G_{exp}$. To this end, AW provide additional, mostly simple gadgets to control the solution in $G_{exp}$ in a predictable manner while preserving the conductance of $G_{exp}$ up to a constant factor. 

\paragraph{Derandomizing the core gadget.}
The first limitation of AW's approach that we aim to resolve in this paper is their use of randomness.
To do so, one could attempt to introduce an \emph{explicit}, $d$-regular bipartite $\Omega(1)$-expander $X$ between $V$ and $U$. However, which degree $d$ should we choose? 
For starters, we consider picking some constant $d \geq 3 $. However, this approach might fail if $G$ contains cuts with many internal edges and few out-going edges. 
For instance, if $G$ includes a cut $S$ of $\sqrt{n}$ vertices that induce an isolated $\sqrt{n}$-clique, then in $G_{exp}$ we have $vol(S) = \Theta(n)$, 
while the number of out-going edges added by the expander is $O(|S|d) = O(\sqrt{n})$, thus $\phi(S) = O(1/\sqrt{n})$. 
On the opposite end, we could select a complete bipartite graph with $\Theta(n^2)$ edges. While this will address the problem and indeed result in an $\Omega(1)$-expander, 
this approach is undesirable due to the excessive number of added edges. 
Hence, the most natural strategy to consider is incorporating a $d$-regular expander for $d = \lceil \frac{m}{n} \rceil$. 
To be more precise, since explicit constructions exist only for $d \geq 3$, the approach is to pick $d = \lceil \frac{m}{n} \rceil + 3$ (as it could be the case that, for example, $m=0$). While the number of added edges is now $O(m+n)$, this approach is slightly na\"ive, as it does not even address the above problem of isolated $\sqrt{n}$-clique in a graph with $O(n)$ edges. The crux of the issue lies in the fact that $G$ is not regular. If $G$ were $d$-regular, then the inclusion of a $d$-regular expander would resolve our problem. 
One can look for expanders with a certain degree sequence, matching the one of $ G $, but constructing those seems challenging. 

The starting point for our modification is the observation that the randomness in AW's constructions is coming to accomplish \emph{two things at once}: first and foremost, 
choosing random edges breaks structure and creates a  ``random like'' graph. However, second, and less obviously, it is a way to achieve a certain ``balanced allocation'' of neighbors in $U$, resulting in nearly uniform degrees in $U$, which is crucial for their analysis to work. 
This leads to our modification, in which we substitute their single expansion layer $U$ with two layers: one for balanced allocation denoted $L$, connected to $V$ using a load-balancing algorithm, ensuring that the degrees in this layer are about $\frac{2m}{n}$, and another layer for expansion denoted $R$, connected to $L$ using the edges of a bipartite, $(\frac{2m}{n}+3)$-regular $\Omega(1)$-expander. 
A random graph can accomplish the construction of the first layer, as AW did, but it could also be accomplished \emph{in any other way}, such as the standard \emph{Round-Robin} algorithm without any randomness.
Then, all we have to do is use an explicit construction (described in Appendix \ref{app:explicit}) of a \emph{regular} $\Omega(1)$-expander. 
This is the whole deterministic core gadget in Section \ref{sec:coregadget}. Using it, we follow the steps of AW and add various gadgets to $G_{exp}$, such that the solution for $G$ can be retrieved from the solution for $G_{exp}$. 

\paragraph{Dynamizing the core gadget.}
The above core gadget is not suitable for the fully dynamic setting, where $G$ undergoes edge insertions and deletions. The main issue arising is that when the degree of a vertex $v \in V$ increases due to edge insertions, we need to allocate to $v$ additional neighbors from $L$ while preserving approximately balanced degrees in $L$. 
One attempt to solve this issue is to add an edge from $v$ to a minimum-degree vertex in $L$, which can be computed quickly using a priority queue. 
While this ensures balanced degrees in $L$, there is a subtle issue that the minimum-degree vertex in $L$ might already be a neighbor of $v$; hence we can not add another edge to it because that would create a parallel edge (which we aim to avoid). 
Therefore, we slightly modify this heuristic, and instead, our suggested approach is compute the successor of the minimum-degree vertex in $L$ repeatedly, 
until a minimum-degree vertex in $L \setminus N(v)$ is reached, and then make it a neighbor of $v$. 
The required computation here is proportional to the degree of $v$, which might not be constant, 
but using lazy updates this approach results in amortized cost $O(1)$. However, a possible issue arising in this algorithm is that the minimum-degree vertex in $L \setminus N(v)$ is 
not necessarily a minimum-degree vertex in $L$, so it needs to be clarified that the degrees in $L$ remain balanced, as otherwise the graph might not be an $\Omega(1)$-expander. 
Nonetheless, we prove that the degrees in $L$ become overly imbalanced only after $\Omega(m+n)$ edge insertions. At this stage, there is enough credit to reconstruct $G_{exp}$ from scratch, resulting in a total amortized cost of $O(1)$ per edge insertion. 
In general, edge deletions do not require further computation. However, too many edge deletions or insertions will make $X$ too dense or too sparse compared to $G$, resulting either in a large blowup or small conductance. This may happen only after $\Omega(m+n)$ updates, thus it can also be handled in amortized cost $O(1)$ by periodic reconstruction.

\section{Preliminaries}
\label{sec:preliminaries}
Let $G = (V,E)$ be an undirected simple graph. Throughout the paper, we use $n$ and $m$ to denote the number of edges in vertices in $G$, respectively. 
Denote the neighborhood of $v$ by $N(v) := \{ u \in V \mid uv \in E \}$ and the degree of $v$ by $\deg(v)$. 
We say that $G$ is $d$-regular if $\deg(v) = d$ for all $v \in V$. 
A vertex $v$ is called a \emph{pendant vertex} if $\deg(v) = 1$. 
The set of all edges with one endpoint in $S \subseteq V$ and another endpoint in $T \subseteq V$ is denoted by
$E(S,T) := \{ uv \in E \mid u \in S, v \in T \} $, and its cardinality is denoted by $e(S,T) := |E(S,T)|$. We call $E(S,V \setminus S)$ the \emph{out-going edges} of $S$. 
We employ subscripts to indicate which graphs we refer to when it is not clear from the context. 
For instance, $\deg_H(v)$ denotes the degree of vertex $v$ in a graph $H$.

\paragraph{Conductance and edge-expansion.} 
Let $S \subseteq V$ be a cut. 
The \emph{volume} of $S$ is defined as $vol(S) := \sum_{v \in S} \deg(v)$. 
The \emph{conductance} of $S$ is defined as $\phi(S) := e(S,V \setminus S)/\min(vol(S),vol(V \setminus S))$. 
If $\min(vol(S),vol(V \setminus S)) = 0$, we define $\phi(S) = 0$. 
The conductance of the entire graph $G$ is defined as $\phi_G := \min_{S \subseteq V} \phi(S)$. 
Throughout this paper, unless explicitly indicated otherwise, we adhere to the following definition of expander graphs, in which $0 \leq \phi \leq 1$. 
\begin{definition}
$G$ is a $\phi$-expander if $\phi_G \geq \phi$.
\end{definition}
\noindent
Another related notion of expansion that we use in our proofs is edge expansion. 
\begin{definition}
The edge expansion of $G$ is: 
\[
h_G := \min_{\emptyset \neq S \subseteq V, |S|\leq n/2} \frac{e(S,V \setminus S)}{|S|}.
\]
We will say that $G$ is an $h$-edge expander if $h_G \geq h$.
\end{definition}
\noindent
A known fact states that conductance and edge expansion are interchangeable in regular graphs.
\begin{fact}
\label{prop:expansion}
If $G$ is a $d$-regular $h$-edge expander, then it is also a $\frac{h}{d}$-expander. 
In particular, if $G$ is an $\Omega(d)$-edge expander, then it is also an $\Omega(1)$-expander.
\end{fact}

\section{The Core Gadget}
\label{sec:coregadget}
In this section, we present a derandomized core gadget on which we base our results. 
The core gadget is a deterministic algorithm that takes a graph $G$ and augments it with new vertices and edges to output an 
$\Omega(1)$-expander $G_{exp}$ in $\tilde{O}(m+n)$ time. 
We will utilize an explicit construction of $d$-regular, bipartite, $\phi d$-edge expanders on $2N$ vertices for some constant $\phi > 0$. Assume that given any $d \geq 3$, and sufficiently large $N$, we can construct such graph in $\tilde{O}(Nd)$ time. See \cref{app:explicit} for details about the construction. 

\subsection{The core gadget}
Given a graph $G$, augment $G$ with an initially empty bipartite graph 
featuring $N$ vertices on each side, denoted $L := \{x_1, x_2, \ldots, x_N \}$ and $R := \{ y_1, y_2, \ldots, y_N \}$. 
$N$ is a parameter that can vary depending on the application, but for the most basic construction, we set it to $N = (1+o(1))n$. 
The gadget consists of the following two-step construction:
\begin{enumerate}
    \item (Degree balancing) For every $v \in V$ add, using a Round-Robin algorithm, $\deg_G(v) + 3$ neighbors in $L$. Namely, follow a circular order over $L$ to pick neighbors one at a time. Assuming $N \geq n+2$, every vertex is connected to $\deg_G(v)+3$ \emph{distinct} neighbors in $L$ without wrapping around. 
    The Round-Robin algorithm guarantees that the degrees within $L$ are almost balanced. 
    In more detail, since the total number of edges that we add in this step is $\sum_{v \in V} (\deg_G(v) + 3) = 2m + 3n$, 
    the degree of every vertex in $L$ is either $\lfloor \frac{2m + 3n}{N} \rfloor$ or $\lceil \frac{2m+3n}{N} \rceil$. 
    \item (Expander construction) Set $d := \lceil \frac{2m+3n}{N} \rceil$. Let $X$ be a bipartite, $d$-regular, $\phi d$-edge expander on $L \cup R$, constructed using the algorithm from 
    \cref{app:explicit}. 
\end{enumerate}
\noindent
Denote the resulting graph by $G_{exp} := (V_{exp}, E_{exp})$, illustrated in \cref{fig:coregadget}. 
Note that the running time of this gadget is $\tilde{O}(m + n + dN) = \tilde{O}(m+n)$. 

\begin{figure}[h] 
    \centering 
    \includegraphics[scale=0.5]{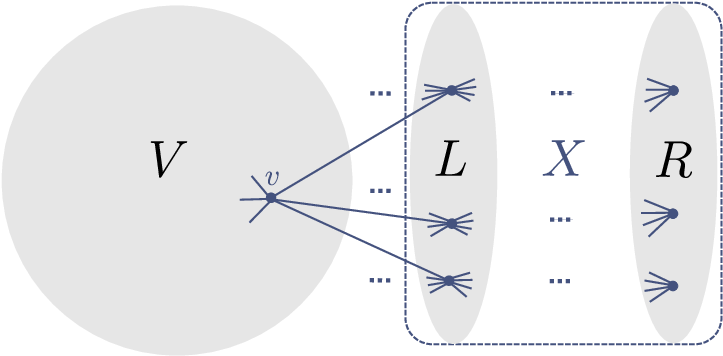}
    \caption{The core gadget augments $G$ with $O(m+n)$ edges connected to a bipartite, $d$-regular, $\phi d$-edge expander, resulting in an $\Omega(1)$-expander $G_{exp}$.}
    \label{fig:coregadget}
\end{figure}
\noindent

\begin{lemma}
    \label{lem:coregadget}
    The graph $G_{exp}$ is an $\Omega(1)$-expander. 
\end{lemma}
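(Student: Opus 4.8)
## Proof Proposal

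The plan is to bound the conductance of $G_{exp}$ from below by a constant by a case analysis on an arbitrary cut $S \subseteq V_{exp}$, using the key structural fact that $X$ (the bipartite $d$-regular $\phi d$-edge expander on $L \cup R$) is highly expanding, and that the Round-Robin step made every degree in $L$ lie in $\{\lfloor (2m+3n)/N \rfloor, \lceil (2m+3n)/N\rceil\} \subseteq \{d-1, d\}$, so $G_{exp}$ is "almost $d$-regular" on $L$ and exactly $d$-regular on $R$. First I would record the gross estimates: $\mathrm{vol}(V) = \Theta(m+n)$ in $G_{exp}$ (each $v \in V$ gained $\deg_G(v)+3$ edges to $L$, plus its original $\deg_G(v)$), $\mathrm{vol}(L) = 2dN = \Theta(m+n)$ counting both the $\approx d$ edges each $x_i$ sends into $R$ via $X$ and the $\approx d$ edges it receives from $V$, and $\mathrm{vol}(R) = dN = \Theta(m+n)$. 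So the total volume of $G_{exp}$ is $\Theta(m+n)$ and, crucially, $\mathrm{vol}(R) = dN \geq \frac{1}{c}\mathrm{vol}(V_{exp})$ for an absolute constant $c$ — a constant fraction of the whole volume sits in $R$, which is only connected through $X$.

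The heart of the argument is then: for any cut $S$, let $S_L = S \cap L$, $S_R = S \cap R$, $S_V = S \cap V$. I would split into the regime $\mathrm{vol}(S) \le \mathrm{vol}(V_{exp} \setminus S)$ (the other is symmetric since $\phi$ uses the min). The dominant case is when $S_R$ (or its complement in $R$) is "large enough" that the expansion of $X$ alone forces many outgoing edges: since $X$ is a $d$-regular $\phi d$-edge expander, $e_X(S_L, R \setminus S_R) + e_X(L \setminus S_L, S_R) \ge \phi d \cdot \min(\text{something})$; more precisely I want to say that the number of $X$-edges leaving $S$ within $L \cup R$ is $\Omega(d \cdot \min(|S_L| + |S_R|, |L \setminus S_L| + |R \setminus S_R|))$ using edge expansion of $X$ on the side that is at most half. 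Since all these edges are genuine outgoing edges of $S$ in $G_{exp}$, this gives $e(S, \overline S) = \Omega(d(|S_L|+|S_R|))$ whenever $|S_L|+|S_R| \le N$, and comparing against $\mathrm{vol}(S) = O(d|S_V| + d|S_L| + d|S_R|) = O(d(|S_V| + N))$ — hmm, this needs care when $|S_V|$ is large relative to $|S_L| + |S_R|$.

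The subtle case — and I expect this to be the main obstacle — is when $S_V$ is large but $S_L$ is small (e.g. the "isolated $\sqrt n$-clique" example from the technical overview): then $\mathrm{vol}(S)$ is dominated by $\mathrm{vol}_{G_{exp}}(S_V)$, and we must find enough outgoing edges. Here I would use the Round-Robin balancing: $S_V$ sends exactly $\sum_{v \in S_V}(\deg_G(v)+3) \ge 3|S_V|$ edges into $L$, and because the degrees in $L$ are at most $d = \lceil (2m+3n)/N \rceil$, these edges must spread over at least $\frac{\sum_{v\in S_V}(\deg_G(v)+3)}{d}$ distinct vertices of $L$. Either a constant fraction of these $L$-vertices lie outside $S_L$ — giving many outgoing edges directly — or $|S_L|$ itself is $\Omega(\mathrm{vol}_{G_{exp}}(S_V)/d)$, in which case $\mathrm{vol}(S_L)$ is comparable to $\mathrm{vol}(S_V)$ and we are back in the earlier regime where $X$'s expansion saves us (as long as $|S_L| \le N/2$; if $|S_L| > N/2$ then $|S_L| = \Theta(N)$ so $\mathrm{vol}(S) = \Theta(\mathrm{vol}(V_{exp}))$, contradicting $\mathrm{vol}(S) \le \mathrm{vol}(\overline S)$ unless we're near the balanced point, where a crude bound suffices). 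Finally, I would also handle the boundary case $|S_L| + |S_R| > N$ by passing to the complement within $L \cup R$ and noting that then the complement has the small side there, while $\mathrm{vol}(S)$ is still at most $\mathrm{vol}(\overline S)$, so the same edge-expansion bound on $X$ applies to $\overline{S} \cap (L \cup R)$ and those edges are outgoing from $S$ too. Combining all cases yields $\phi(S) = \Omega(1)$ with the hidden constant depending only on $\phi$, and invoking Fact~\ref{prop:expansion}-style reasoning on the near-$d$-regular part; then $\phi_{G_{exp}} = \Omega(1)$, proving the lemma.
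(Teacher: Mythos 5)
Your overall structure matches the paper's proof: split according to whether the cut's volume is concentrated in the gadget $L \cup R$ (where the edge expansion of $X$ plus the near-uniform degrees in $L$ give $\Omega(d\,|S_L \cup S_R|)$ crossing edges) or in $V$ (where the $V$-to-$L$ edges must cross). However, your handling of the second, ``subtle'' case has a genuine quantitative gap. You lower-bound the crossing edges by the number of \emph{distinct} vertices of $L \setminus S_L$ that receive at least one edge from $S_V$. Since $S_V$ sends at least $vol(S_V)/2$ edges into $L$ but each vertex of $L$ can absorb up to $d$ of them, this vertex count is only $\Omega(vol(S_V)/d)$, so the branch ``a constant fraction of the hit $L$-vertices lie outside $S_L$'' certifies only $\Omega(vol(S_V)/d)$ outgoing edges, i.e.\ conductance $\Omega(1/d)$. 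For dense inputs $d = \Theta(m/n)$ is polynomial, so this does not prove $\Omega(1)$-expansion. Concretely, take $S = S_V$ to be half the vertices of a dense $G$ with $S_L = S_R = \emptyset$: your first branch fires (every hit vertex is outside $S_L$), the second branch's conclusion ($|S_L| = \Omega(vol(S_V)/d)$, hence volume moves to the gadget side) fails, and the bound you certify is a factor $d$ below what is needed --- even though this cut actually has at least $vol(S_V)/2$ crossing edges, so the lemma itself is not in danger, only your estimate.

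The repair is to count edges rather than vertices, which is exactly what the paper does: every $v \in V$ has $\deg_G(v)+3 > \deg_G(v)$ neighbors in $L$, so $e(S_V,L) \geq vol(S_V)/2$, while in the regime $vol(S_V) > 4\, vol(S_L \cup S_R)$ one has $e(S_V,S_L) \leq vol(S_L \cup S_R) < vol(S_V)/4$; hence $e(S_V, L \setminus S_L) \geq vol(S_V)/4 = \Omega(vol(S))$, with no dichotomy needed. A secondary, smaller issue: the paper sidesteps your awkward boundary case $|S_L \cup S_R| > N$ by choosing, at the outset, the representative of $\{S, V_{exp} \setminus S\}$ with $|S_L \cup S_R| \leq N$ (rather than the one of smaller volume) and proving $e(S, V_{exp}\setminus S) \geq \phi'\, vol(S)$ for that side, which already bounds the conductance since the minimum volume is at most $vol(S)$; your sketch for $|S_L \cup S_R| > N$ (``a crude bound suffices near the balanced point'') is not worked out and would require a separate argument.
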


\noindent
The proof of this lemma, on the high level, splits into two cases. In the first case, we deal with cuts whose volume is mostly concentrated in $X$, and show that there must be many out-going edges because $X$ is an expander and the degrees in $L$ are bounded. In the second case, we deal with cuts whose volume is mostly concentrated in $G$ and show that there must be many out-going $V$-to-$L$ edges in this case. Let us now prove this claim formally. 
For clarity, we henceforth omit the subscript $G_{exp}$ from $e_{G_{exp}}(\cdot,\cdot)$, and keep the subscript $G$ in $e_{G}(\cdot,\cdot)$. Similarly, we do for $E(\cdot,\cdot)$, $\deg(\cdot)$, and $vol(\cdot)$. 

\begin{proof}[Proof of \cref{lem:coregadget}] 
\label{proof:lemma9}
    To prove that $G_{exp}$ is an $\Omega(1)$-expander, consider a non-empty cut $S \subseteq V_{exp}$. 
    We need to show that $\min(vol(S),vol(V_{exp} \setminus S)) > 0$, and 
    that $e(S,V_{exp} \setminus S) \geq \phi \cdot \min(vol(S),vol(V_{exp} \setminus S))$ for some constant $\phi>0$. 
    It is clear that $\min(vol(S),vol(V_{exp} \setminus S)) > 0$ because all vertices have a degree of at least $3$. 
    Let $S_V := S \cap V$, $S_L := S \cap L$, and $S_R := S \cap R$. 
    Assume w.l.o.g.\ that $|S_L \cup S_R| \leq N$, as it holds either for $S$ or its complement. 
    Note that we have $vol(S) = vol(S_V) + vol(S_L \cup S_R)$, and consider the following cases:
    \begin{itemize}
        \item $vol(S_V) \leq 4vol(S_L \cup S_R)$. In this case, $vol(S) \leq 5vol(S_L \cup S_R)$, and since every $x \in L$ has degree $2d$ or $2d-1$, and every $y \in R$ has degree $d$, then 
        $5vol(S_L \cup S_R) \leq 10d |S_L \cup S_R|$. Since $X$ is a $d$-regular, $\phi d$-edge expander on $2N$ vertices, 
        and $|S_L \cup S_R| \leq N$, there is some constant $\phi>0$ such that $e(S_L \cup S_R, (L \cup R) \setminus (S_L \cup S_R)) \geq \phi d |S_L \cup S_R|$. Hence, 
        $e(S,V_{exp} \setminus S) \geq \phi d |S_L \cup S_R| \geq \phi vol(S)/10$, and we are done.
        \item $vol(S_V) > 4vol(S_L \cup S_R)$. In this case, $vol(S) < 5 vol(S_V)/4$. We show that there are many $S_V$-to-$(L \setminus S_L)$ edges. Note that $e(S_V,L) \geq vol(S_V) / 2$, because for 
        every vertex $v \in V$, there are more $v$-to-$L$ edges than $v$-to-$V$ edges. Hence:
        \begin{equation}
        \label{eq:conductance2}
        \begin{split}
        &e(S_V, L \setminus S_L) = e(S_V,L) - e(S_V,S_L) \geq 
         \frac{vol(S_V)}{2} - vol(S_L \cup S_R) > \\ 
        &\frac{vol(S_V)}{2} - \frac{vol(S_V)}{4} = 
        \frac{vol(S_V)}{4} \geq vol(S)/5, 
        \end{split}
        \end{equation}
    and we are done.
    \end{itemize}
\end{proof}
\begin{figure}[h]
    \centering 
    \includegraphics[scale=0.4]{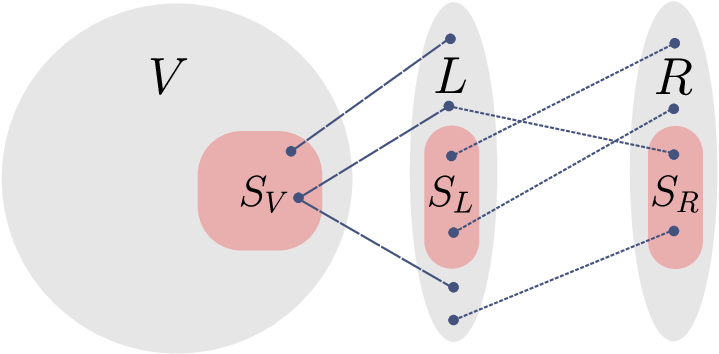}
    \caption{
    An illustration for the proof of \cref{lem:coregadget}. In the first case, we deal with cuts whose volume is mostly concentrated in $X$ and show that there are many short-dashed edges because $X$ is an expander, and the degrees in $L$ are bounded. In the second case, we deal with cuts whose volume is mostly concentrated in $G$ and show that, in this case, there must be many long-dashed edges.}
    \label{fig:gadgetproof}
\end{figure}

We remark that the blowup of the core gadget, i.e., the number of added vertices and edges, is $2N = O(n)$ vertices and $O(Nd) = O(m+n)$ edges. 
Furthermore, if $G$ is a graph of maximum degree $\Delta$, then $G_{exp}$ is a graph of maximum degree $2 \Delta + 3$.

\section{Variants of the Core Gadget}
\label{sec:variants}
In this section, we discuss a few variants of the core gadget and how to adapt it to the fully dynamic setting. 
\subsection{Robust core gadget}
The first variant we consider is a generalization of the core gadget, which allows greater variations in the degrees of the vertices in $G_{exp}$. 
This generalization will be useful in the analysis of various Direct-WTERs both in the static and in the dynamic setting. 
Namely, the next lemma shows that changing the degrees by a constant factor reduces the conductance by, at most, a constant factor. 
\begin{lemma}
    \label{lem:robust}
    For every $0<\eps\leq 1$, $\alpha \geq 1$, and an integer $d_X \geq 3$, consider the following generalization of the core gadget: 
    \begin{enumerate}[label=(\arabic*)]
        \item Every $v \in V$ has at least $\eps \deg_G(v)+1$ neighbors in $L$. 
        \item $X$ is a $d_X$-regular, $\phi d_X$-edge expander, for some some constant $\phi > 0$. 
        \item The degrees of all the vertices in $L$ are within $[d_X, \alpha d_X]$.
    \end{enumerate}
    Then $G_{exp}$ is an $\phi \eps /(5 \alpha)$-expander. 
\end{lemma}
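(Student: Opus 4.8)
The plan is to mimic the proof of Lemma~\ref{lem:coregadget} almost verbatim, tracking how the constants $\eps$, $\alpha$, and $d_X$ propagate through the two cases. As before, fix a non-empty cut $S \subseteq V_{exp}$, write $S_V := S \cap V$, $S_L := S \cap L$, $S_R := S \cap R$, and assume without loss of generality that $|S_L \cup S_R| \leq N$. Positivity of $\min(vol(S), vol(V_{exp}\setminus S))$ is immediate since every vertex still has degree at least $3$ (the vertices in $R$ have degree exactly $d_X \geq 3$, those in $L$ have degree at least $d_X$ plus their $V$-edges, and those in $V$ have at least $\eps\deg_G(v)+1 \geq 1$ edges to $L$ on top of their $G$-edges — actually every $v\in V$ keeps degree $\ge 1$; if there are isolated vertices in $G$ one should note $\eps\deg_G(v)+1 = 1$ still gives degree $\ge 1$, which suffices for positivity together with $d_X\geq 3$ bounding volumes of the expander part). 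The two cases are split on comparing $vol(S_V)$ with $vol(S_L \cup S_R)$, but now the threshold constant must be chosen in terms of $\eps$.

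For the first case, suppose $vol(S_V) \leq c \cdot vol(S_L \cup S_R)$ for the threshold $c$ to be fixed. Then $vol(S) \leq (1+c)\,vol(S_L\cup S_R)$. Every vertex in $L$ has degree at most $\alpha d_X + \deg_L$-contribution; more carefully, within the bipartite expander $X$ each $L$-vertex contributes $d_X$ and the load-balancing edges contribute the rest, and hypothesis~(3) says the total $L$-degree is at most $\alpha d_X$, while each $R$-vertex has degree exactly $d_X \le \alpha d_X$. So $vol(S_L\cup S_R) \leq \alpha d_X \,|S_L\cup S_R|$. Since $X$ is a $d_X$-regular $\phi d_X$-edge expander on $2N$ vertices and $|S_L\cup S_R|\leq N$, Fact~\ref{prop:expansion} (or directly the edge-expansion hypothesis) gives $e(S_L\cup S_R, (L\cup R)\setminus(S_L\cup S_R)) \geq \phi d_X |S_L\cup S_R|$, and all these edges lie inside $E(S, V_{exp}\setminus S)$. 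Combining, $e(S,V_{exp}\setminus S) \geq \phi d_X |S_L\cup S_R| \geq \frac{\phi}{(1+c)\alpha}\,vol(S)$.

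For the second case, $vol(S_V) > c\cdot vol(S_L\cup S_R)$, so $vol(S) < (1 + 1/c)\,vol(S_V)$. The key ratio that changed from the original proof is $e(S_V, L) \geq \frac{\eps}{1+\eps}\,vol(S_V)$: for each $v\in V$ we have at least $\eps\deg_G(v)+1$ edges to $L$ out of a total degree $\deg_G(v) + (\text{edges to }L) $, and since edges-to-$L$ is at least $\eps\deg_G(v)+1$, the fraction of $v$'s edges going to $L$ is at least $\frac{\eps\deg_G(v)+1}{(1+\eps)\deg_G(v)+1} \geq \frac{\eps}{1+\eps}$ — here one uses that $x\mapsto \frac{\eps x + 1}{(1+\eps)x+1}$ is minimized in the relevant range, with the worst case being $\deg_G(v)$ large, giving the clean bound $\eps/(1+\eps)$. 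Then, as in \eqref{eq:conductance2}, $e(S_V, L\setminus S_L) = e(S_V,L) - e(S_V,S_L) \geq \frac{\eps}{1+\eps}vol(S_V) - vol(S_L\cup S_R) > \frac{\eps}{1+\eps}vol(S_V) - \frac{1}{c}vol(S_V) = \bigl(\frac{\eps}{1+\eps} - \frac1c\bigr)vol(S_V)$, and these are genuine outgoing edges. Choosing $c$ large enough relative to $1/\eps$ — concretely $c = 2(1+\eps)/\eps$, so that $\frac{\eps}{1+\eps}-\frac1c = \frac{\eps}{2(1+\eps)}$ — balances the two cases, and then $e(S,V_{exp}\setminus S) \geq \frac{\eps}{2(1+\eps)(1+1/c)}vol(S)$; plugging $c$ back in and simplifying both bounds to a common lower bound should land at $\phi\eps/(5\alpha)$ after coarsening constants (using $1+\eps \le 2$).

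I do not expect a genuine obstacle here; the proof is structurally identical to Lemma~\ref{lem:coregadget}, and the only real work is bookkeeping the constants so that the final expansion is at least $\phi\eps/(5\alpha)$. The one point requiring a little care is the exact statement of the degree bounds used in each case — in particular making sure that hypothesis~(3) is used correctly to bound $vol(S_L\cup S_R)$ from above by $\alpha d_X|S_L\cup S_R|$ in case one, and that hypothesis~(1) yields the fraction $\eps/(1+\eps)$ (rather than a naive $\eps$) of $V$-edges landing in $L$ in case two. A secondary subtlety is choosing the split threshold $c$ to make both bounds simultaneously at least $\phi\eps/(5\alpha)$; a generous choice of $c$ works, and I would verify the arithmetic at the end rather than optimize it.
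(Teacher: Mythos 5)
Your proposal is correct and follows essentially the same route as the paper's proof: split on whether $vol(S_V)$ exceeds a $\Theta(1/\eps)$ multiple of $vol(S_L\cup S_R)$, use the edge expansion of $X$ together with the $\alpha d_X$ degree bound on $L$ in the first case, and the $\eps/(1+\eps)$ fraction of $V$-to-$L$ edges in the second; the paper simply fixes the threshold at $4/\eps$ rather than your $2(1+\eps)/\eps$, and your final constants indeed coarsen to $\phi\eps/(5\alpha)$.
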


\begin{proof}
    The proof is obtained by adapting the proof of \cref{lem:coregadget} to this generalization. For brevity, we only discuss its differences from the proof of \cref{lem:coregadget}. 
    Note that $vol(S) > 0$ for every $S \subseteq V_{exp}$ because all vertices have positive degrees. Let us adapt the two cases as follows:
    \begin{itemize}
        \item $vol(S_V) \leq \frac{4}{\eps} vol(S_L \cup S_R)$. In this case, $vol(S) \leq (1+\frac{4}{\eps}) vol(S_L \cup S_R) \leq (1+\frac{4}{\eps})\alpha d_X |S_L \cup S_R|$. Therefore, since $X$ is a $\phi d_X$-edge expander, we get:
        \begin{equation*}
        \begin{split}
            &e(S_L \cup S_R, (L \cup R) \setminus (S_L \cup S_R)) \geq 
            \phi d_X |S_L \cup S_R| \geq \frac{\phi}{\alpha(1+\frac{4}{\eps})} vol(S) \geq \frac{\phi \eps}{5\alpha} vol(S), 
        \end{split}
        \end{equation*}
        and we are done. 
        \item $vol(S_V) > \frac{4}{\eps} vol(S_L \cup S_R)$. In this case, $vol(S) \leq (1+\frac{\eps}{4}) vol(S_V)$. Note that $e(S_V,L) \geq  \frac{\eps}{\eps+1} vol(S_V) \geq \frac{\eps}{2} vol(S_V) $, because every $v \in V$ has $\deg_G(v)$ neighbors in $V$ and at least $\eps \deg_G(v)$ neighbors in $L$. Hence:
        \begin{equation}
        \begin{split}
            &e(S_V, L \setminus S_L) = e(S_V,L) - e(S_V,S_L) \geq 
             \frac{\eps}{2} vol(S_V) - vol(S_L \cup S_R) > \\ 
            &\frac{\eps}{2} vol(S_V) - \frac{\eps}{4} vol(S_V) = 
            \frac{\eps}{4} vol(S_V) \geq \eps vol(S)/5, 
        \end{split}
        \end{equation}
        and we are done.
    \end{itemize}
\end{proof}

\subsection{Fully-dynamic core gadget} 
\label{sub:dyn}
Our algorithm makes use of three procedures: (1) $\tt{UPDATE}$, which computes and adds a batch of edges from a vertex in $G$ to $L$, (2) $\tt{BALANCE}$, which rebalances the degrees in $L$ using Round-Robin, and (3) $\tt{RECOMPUTE}$, which recomputes the graph using the static core gadget. 
In this section, we use $G^{exp}$ to denote a dynamic $\Omega(1)$-expander output by the dynamic core gadget. 
We use subscripts to indicate the state of a dynamic graph at a certain time, e.g., $G_t$ is the dynamic graph $G$ at time $t$. 
Let us describe the algorithm. 
\noindent

\paragraph{Preprocessing}
In the preprocessing step, given $G_0$, apply the static core gadget to construct an $\Omega(1)$-expander $G_0^{exp}$. 
Store the vertices of $L$ sorted according to their degrees, in a data structure supporting updates and successor queries in $O(1)$ time.\footnote{Na\"ively this would take $\tilde{O}(1)$ time using standard data structures, 
but it can be optimized to $O(1)$ since the degrees are integers in the range $[1,2N]$, and our updates only increase or decrease the degree of a vertex by $1$.}

\paragraph{Edge insertions and deletions}
For every insertion of an edge $uv$ to $G$, begin by inserting $uv$ to $G^{exp}$.
Denote by $m_t$ the number of edges in $G$ at the last time we applied the $\tt{RECOMPUTE}$ procedure (or $m_0$ if we did not apply it yet). 
If $m \geq 2m_t + n$, apply the $\tt{RECOMPUTE}$ procedure to recompute the graph using the static core gadget and finish.

Otherwise, let us describe the process we apply to $v$ and similarly do to $u$. 
Denote by $\deg_L(v)$ the number of neighbors that $v$ has in $L$, i.e., $\deg_L(v) = |N(v) \cap L|$. 
\begin{enumerate}[label=(\alph*)]
    \item If $\deg_{G}(v) < 2 \deg_L(v)$, finish. Otherwise, apply the $\tt{UPDATE}$ procedure to $v$ to add a new batch of neighbors of $v$ in $L$, after which we have $\deg_L(v) = \deg_G(v) + 3$. 
    \item Check if the degrees in $L$ became unbalanced, namely, 
    if $\Delta_L  \geq 2 \delta_L$, where $\Delta_L$ and $\delta_L$ are the maximum and minimum-degree vertices in $L$, respectively.
    If so, apply the $\tt{BALANCE}$ procedure, after which $\Delta_L\in \{\delta_L, \delta_L + 1\}$. 
\end{enumerate}
\noindent
For every deletion of an edge $uv$ from $G$, delete $uv$ from $G^{exp}$. Then, if $n \leq m \leq 0.5 m_t$, apply the $\tt{RECOMPUTE}$ procedure to recompute $X$. 
\noindent

Let us now describe the three procedures used above. 
\begin{itemize}
    \item $\tt{UPDATE}$ Let $k := \deg_L(v)$. Compute $k+3$ minimum-degree vertices in $L \setminus N(v)$, denoted $x_1, x_2, \ldots, x_{k+3}$, 
    by repeatedly making successor queries to the minimum degree vertex in $L$ and skipping vertices which belong to $N(v)$. For every $x_i$, insert an edge $v x_i$ to $G^{exp}$. 
    Note that now we have $\deg_L(v) = \deg_G(v) + 3$. 
    \item $\tt{BALANCE}$
    Compute, using Round-Robin, a new set of $V$-to-$L$ edges, denoted $A$, and then replace $E(V,L)$ with $A$. 
    This is done by inserting the edges of $A$ in the order given by the Round-Robin algorithm before removing $E(V,L) \setminus A$, to ensure that the degrees do not vary too much 
    in the intermediate graphs. 
    \item {$\tt{RECOMPUTE}$}. 
    Apply the static core gadget to compute a set of edges $A$, and $X'$, where $A$ is the set of $V$-to-$L$ edges define above, and $X'$ is the expander on $L \cup R$. 
    In particular, $X'$ is a $d_{X'}$-regular, $\Omega(d_{X'})$-edge expander for $d_{X'} =\lceil \frac{2m + 3n}{N} \rceil$. Replace $X$ with $X'$ by first inserting the edges of $X'$, 
    and then removing the leftover edges of $X$ which do not belong to $X'$. Then insert the edges of $A$ and remove $E(V,L)$, as we did above. 
\end{itemize}

Let us now analyze the dynamic core gadget. 

\begin{proposition}
    Every intermediate graph output by the dynamic core gadget has blowup $O(m+n)$ and conductance $\Omega(1)$. 
\end{proposition}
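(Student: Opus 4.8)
The plan is to show, separately, that every intermediate graph produced during an edge insertion (possibly triggering \texttt{UPDATE}, \texttt{BALANCE}, or \texttt{RECOMPUTE}) and during an edge deletion (possibly triggering \texttt{RECOMPUTE}) satisfies the hypotheses of the robust core gadget lemma (Lemma~\ref{lem:robust}) with some fixed constants $\eps$, $\alpha$, and $d_X \geq 3$, and that the blowup stays $O(m+n)$. Since Lemma~\ref{lem:robust} then yields conductance $\Omega(1)$ for each such graph, the proposition follows. The key quantities to track are: (i) $\deg_L(v)$ versus $\deg_G(v)$ for each $v \in V$, which controls the parameter $\eps$ in condition~(1) of Lemma~\ref{lem:robust}; (ii) the ratio $\Delta_L/\delta_L$ of the extreme degrees in $L$, which controls $\alpha$ in condition~(3); and (iii) the regularity degree $d_X$ of the currently active expander $X$ (or $X'$), which by the \texttt{RECOMPUTE} trigger condition stays within a constant factor of $(2m+3n)/N$.

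First I would handle condition~(1). Between successive \texttt{UPDATE}s at a vertex $v$, the invariant maintained by step~(a) is $\deg_G(v) < 2\deg_L(v)$, i.e.\ $\deg_L(v) > \deg_G(v)/2$, and right after an \texttt{UPDATE} we have $\deg_L(v) = \deg_G(v)+3$; in all intermediate graphs (including the transient states inside \texttt{UPDATE}, \texttt{BALANCE}, and \texttt{RECOMPUTE}, where edges are inserted before the stale ones are removed) $\deg_L(v) \geq \deg_G(v)/2 + 1 \geq$ (say) $\tfrac{1}{2}\deg_G(v)+1$, so condition~(1) holds with $\eps = 1/2$. The only delicate point is edge deletions, which decrease $\deg_G(v)$ but never $\deg_L(v)$, so the inequality is only helped; and the ``insert-before-delete'' discipline in \texttt{BALANCE} and \texttt{RECOMPUTE} guarantees that at no intermediate moment does $\deg_L(v)$ dip below its pre-procedure value minus zero — one should state this carefully but it is routine.

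Next, condition~(3): I would argue that the \texttt{BALANCE} trigger $\Delta_L \geq 2\delta_L$ (checked after every \texttt{UPDATE}) ensures $\Delta_L < 2\delta_L$ at all times between rebalances, and since \texttt{BALANCE} and \texttt{RECOMPUTE} rebuild $E(V,L)$ by Round-Robin they leave $\Delta_L \in \{\delta_L, \delta_L+1\}$, with the insert-before-delete order keeping intermediate degrees sandwiched between the old and new extremes — hence a global bound $\Delta_L \leq 3\delta_L$ (crudely) holds throughout. Then I must relate $\delta_L$ to $d_X$: each $x \in L$ has degree $d_X$ from $X$ plus its $V$-to-$L$ degree, so $\deg(x) \in [d_X,\ d_X + \Delta_L]$; combining with $\sum_{x}(\deg_{E(V,L)}(x)) = \sum_v \deg_L(v) = \Theta(m+n)$ spread over $N = \Theta(n)$ vertices, and with the \texttt{RECOMPUTE} invariant $m_t \leq m \leq 2m_t+n$ controlling $d_X = \Theta((m+n)/N)$, one gets that the $L$-degrees all lie in $[d_X,\ \alpha d_X]$ for an absolute constant $\alpha$. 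The blowup bound is immediate: $|L \cup R| = 2N = O(n)$ vertices, and the edge count is $|E(V,L)| + |X| = O(m+n) + O(N d_X) = O(m+n)$, using the \texttt{RECOMPUTE} triggers to keep $d_X N = O(m+n)$ even in the transient states where both $X$ and $X'$ coexist (only a constant-factor overhead).

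The main obstacle I anticipate is the bookkeeping for the \emph{transient} graphs inside \texttt{RECOMPUTE}: at those moments two expanders $X$ and $X'$ of possibly different regularities are simultaneously present (edges of $X'$ inserted, stale edges of $X$ not yet removed), and simultaneously two versions of $E(V,L)$ may coexist, so $L$-degrees can be as large as $d_X + d_{X'} + 2\Delta_L$ rather than $d_X + \Delta_L$. One must check that Lemma~\ref{lem:robust}'s hypotheses still hold — treating the ``active'' expander $X'$ as the one playing the role of $X$ in the lemma, absorbing the leftover $X$-edges and extra $V$-to-$L$ edges into the slack of the $\alpha$ parameter, while verifying condition~(1) is unaffected since extra $V$-to-$L$ edges only increase $\deg_L(v)$. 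This amounts to choosing the constants $\eps$, $\alpha$ generously (e.g.\ $\eps = 1/2$, $\alpha$ some fixed constant like $8$ or $16$) so that all the transient states are comfortably covered, and then invoking Lemma~\ref{lem:robust} to conclude conductance $\Omega(1)$ uniformly.
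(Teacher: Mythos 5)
Your proposal is correct and follows essentially the same route as the paper: verify the hypotheses of Lemma~\ref{lem:robust} at every intermediate state (including the transient states inside {\tt UPDATE}, {\tt BALANCE}, and {\tt RECOMPUTE}, where the leftover expander and stale $V$-to-$L$ edges are absorbed into the constant $\alpha$, and the {\tt RECOMPUTE} triggers keep $d_X = \Theta((m+n)/N)$), and bound the blowup directly by $|E(V,L)|+O(Nd_X)=O(m+n)$. Only your constant for condition~(1) is slightly optimistic --- just after an insertion but before {\tt UPDATE} one may have $\deg_L(v)=\deg_G(v)/2$ exactly, so one combines this with $\deg_L(v)\geq 3$ to get, e.g., $\deg_L(v)\geq \deg_G(v)/5+1$ as the paper does --- which is exactly the kind of generous constant adjustment you already anticipate.
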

\begin{proof}
    The blowup is clearly $O(m+n)$ because at every intermediate step, we add at most $|A| + |e(X')| = O(m+n)$ edges. 
    Let us prove that the conductance is $\Omega(1)$. Observe that:
    \begin{itemize}
        \item At any stage, we have $\deg_L(v) \geq \max(3,\deg_G(v)/2)$, which implies that $\deg_L(v) \geq \deg_G(v)/5+1$. 
        \item At any stage of $\tt{RECOMPUTE}$, the edge-expansion of the graph is at least $\phi d_{X'}/4$ because the degree $X$ (the expander we discard) is at least $\phi d_{X'}/4$. Note that adding edges to an 
        edge expander does not ruin edge expansion. 
        \item At any stage, the degrees in $L$ are within $[d_X, 4d_X]$ for some constant $\alpha$. This follows since $\delta_L \geq d_X$ and $\Delta_L \leq 4 \delta_L$. 
    \end{itemize}
    Therefore, by \cref{lem:robust}, the graph is an $\Omega(1)$-expander. 
\end{proof}

\noindent
Now, let us analyze the performance of the core gadget. 
\begin{proposition}
    The preprocessing time of the core gadget is $\tilde{O}(m+n)$, and the amortized update time is $O(1)$. In particular, for every sequence of $M$ updates to $G$, it computes in $\tilde{O}(m+n) + O(M)$ time a sequence of $O(M)$ updates to $G_{exp}$. 
\end{proposition}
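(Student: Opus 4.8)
The plan is to prove the two claims separately—preprocessing time and amortized update time—using a potential/credit argument for the latter. For the preprocessing bound, the static core gadget runs in $\tilde{O}(m+n)$ time as established in Section~\ref{sec:coregadget}, and setting up the sorted-degree data structure over $L$ takes $O(N) = O(n)$ time (using the bucket-based structure that supports $O(1)$ updates and successor queries, as in the footnote). So the preprocessing bound is immediate.

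For the amortized update time, I would set up an accounting scheme where each edge update deposits $O(1)$ credits, and show that every expensive operation is paid for entirely by previously deposited credits. The three sources of cost are: (i) the per-update bookkeeping (inserting/deleting $uv$ in $G^{exp}$, checking the thresholds), which is genuinely $O(1)$; (ii) calls to $\tt{UPDATE}(v)$, triggered when $\deg_G(v)$ reaches $2\deg_L(v)$, costing $O(\deg_L(v)) = O(\deg_G(v))$ since each successor-query-and-skip step is $O(1)$ and we skip at most $\deg_L(v)$ vertices of $N(v)$ before collecting $k+3$ new ones (here one must check $N \geq 2n + O(1)$ so that enough vertices remain in $L \setminus N(v)$); (iii) calls to $\tt{BALANCE}$ and $\tt{RECOMPUTE}$, each costing $\tilde{O}(m+n)$ (for $\tt{RECOMPUTE}$) or $O(m+n)$ (for $\tt{BALANCE}$, which just reruns Round-Robin and swaps $O(m+n)$ edges). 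The key combinatorial facts to establish are the amortization lemmas: between two consecutive $\tt{UPDATE}(v)$ calls, $\deg_G(v)$ must at least double, so $v$ receives at least $\deg_L(v)/2 = \Omega(\deg_G(v))$ insertions since the previous call—charge the $O(\deg_G(v))$ cost to those insertions at rate $O(1)$ each; between two consecutive global rebuilds ($\tt{RECOMPUTE}$ fires only when $m$ crosses $2m_t + n$ upward or $0.5 m_t$ downward, so $\Omega(m_t + n)$ updates must have occurred), charge the $\tilde{O}(m+n)$ cost to those $\Omega(m+n)$ updates.

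The main obstacle, and the step deserving the most care, is the $\tt{BALANCE}$ amortization: one must argue that $\tt{BALANCE}$ is triggered (i.e., $\Delta_L \geq 2\delta_L$) only after $\Omega(m+n)$ edge insertions have elapsed since the last rebalancing event. Right after a $\tt{BALANCE}$ or $\tt{RECOMPUTE}$, Round-Robin guarantees $\Delta_L \in \{\delta_L, \delta_L+1\}$ with all $L$-degrees near $(2m+3n)/N = \Theta(d_X)$. Each $\tt{UPDATE}(v)$ call adds $\deg_G(v)+3$ edges into $L$, always choosing current minimum-degree vertices of $L \setminus N(v)$; the worry is that systematically avoiding $N(v)$ could pile edges onto a few vertices. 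The claim to prove is that the greedy minimum-degree-outside-$N(v)$ rule keeps $\max_L \deg$ from exceeding $2\min_L \deg$ until $\Omega(m+n)$ insertions accumulate—intuitively because raising $\delta_L$ by one already requires $\Theta(N) = \Theta(n)$ edge-insertions into $L$, and the greedy rule cannot concentrate more than $O(\deg_G(v)) \le O(m_t+n)$ excess on any vertex per batch while batches are themselves amortized against $\Omega(m+n)$ insertions. Once this invariant is in hand, $\tt{BALANCE}$'s $O(m+n)$ cost is charged at rate $O(1)$ per insertion, and combining (i)--(iii) yields amortized $O(1)$ per update; the final sentence of the statement (that $M$ updates to $G$ produce $O(M)$ updates to $G_{exp}$ after $\tilde{O}(m+n)$ preprocessing) follows since each of the three operation types also emits only $O(1)$ amortized edge-updates to $G_{exp}$ by the very same charging argument.
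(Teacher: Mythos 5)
Your preprocessing bound and your treatment of {\tt UPDATE} and {\tt RECOMPUTE} match the paper's proof: the {\tt UPDATE}$(v)$ cost $O(\deg_L(v))$ is charged to the $\Omega(\deg_L(v))$ insertions incident to $v$ since its previous batch, and {\tt RECOMPUTE} is charged to the $\Omega(m_t+n)$ updates needed to move $m$ from $m_t$ to $2m_t+n$ or down to $0.5m_t$ (the paper phrases this with the potential $|m-m_t|$, which is the same accounting). The gap is exactly where you flag it: the {\tt BALANCE} amortization is asserted but not proved, and the sketch you give does not go through. The trigger is $\Delta_L \geq 2\delta_L$, so what must be bounded is how fast the \emph{maximum} degree $\Delta_L$ can climb; your intuition that ``raising $\delta_L$ by one requires $\Theta(N)$ insertions'' addresses the minimum degree, which is not the quantity at issue ($\delta_L$ growing only delays the trigger). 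Moreover, the claim that each batch ``is itself amortized against $\Omega(m+n)$ insertions'' is false: an {\tt UPDATE}$(v)$ batch is backed only by $\Omega(\deg_G(v))$ insertions, which can be far smaller than $m+n$, and since each batch places one edge on each of $k+3$ distinct minimum-degree vertices of $L\setminus N(v)$, an adversary who repeatedly updates low-degree vertices whose $L$-neighborhoods avoid the current low-degree part of $L$ could in principle push $\Delta_L$ up by one per batch; your argument gives no bound ruling this out.

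The missing idea, which is the heart of the paper's proof, is a finer accounting that tracks $N_{\Delta_L}$, the number of vertices of $L$ at the current maximum degree, and maintains a \emph{reserve} of uncharged insertions of size $N_{\Delta_L}$: every time a batch increases $N_{\Delta_L}$, one of its $k+3$ backing insertions is set aside rather than spent. The point is that $\Delta_L$ can increase during {\tt UPDATE}$(v)$ only if fewer than $k+3$ vertices of $L\setminus N(v)$ have degree below $\Delta_L$, i.e.\ $N-k-N_{\Delta_L}<k+3$; this forces either $k\geq N/4$ (so the increment of $\Delta_L$ is paid by the $k$ fresh insertions backing this batch) or $N_{\Delta_L}\geq N/2$ (so it is paid from the reserve). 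Hence each unit increase of $\Delta_L$ costs $\Omega(N)$ insertions, and since after a rebalance $\Delta_L\in\{\delta_L,\delta_L+1\}$ with $\delta_L\geq d_X=\Theta((m+n)/N)$, reaching $\Delta_L\geq 2\delta_L$ requires $\Omega(\delta_L)$ such increments, i.e.\ $\Omega(\delta_L N)=\Omega(m+n)$ insertions, which pays for the $O(m+n)$ cost of {\tt BALANCE}. Without this (or an equivalent) argument your amortization for {\tt BALANCE}, and hence the $O(1)$ amortized bound and the $O(M)$ bound on emitted updates, is not established.
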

\begin{proof}
    We prove this proposition for each of the above procedures separately that their amortized cost is $O(1)$:
    \begin{itemize}
        \item $\tt{UPDATE}$. The cost of this step is bounded by $O(k)$ (where $k = \deg_L(v)$) because there are at most $k + (k+3)$ successor queries to the data structure, 
        and at most $k+3$ edge insertions to $G^{exp}$. Note that $k$ might not be a constant. However, since $\deg_G(v) = 2k$ and $\deg_L(v) = k$, and after every call to $\tt{UPDATE(v)}$, we have $\deg_L(v) = \deg_G(v) + 3$, there are $k+3$ edge insertions of edges incident to $v$ that we can charge to:
        the edges incident to $v$ that were inserted to $G$ since the last call to $\tt{UPDATE(v)}$.
         We charge the cost of {\tt UPDATE} to them. Since there are at least $k$ many edges, and the cost of {\tt UPDATE} is $O(k)$, the amortized cost is $O(1)$. 
        \item {\tt BALANCE}. The cost of {\tt BALANCE} is $O(m+n)$. To bound the amortized cost, we will show that every call to {\tt BALANCE} is preceded by 
        $\Omega(n+m)$ edge insertions to $G$. 
        To see why, consider a graph obtained from a call to {\tt RECOMPUTE} (or preprocessing), in which for every vertex $v \in V$, we have $\deg_L(v) = \deg_G(v) + 3$. 
        We claim that during edge insertions that follow, every insertion causing an increase in $\Delta_L$ (through {\tt UPDATE}), except the first one, is preceded by at least $N/4$ edge insertions to $G$. 
        To see why, let us denote by $N_i$ the number of vertices of degree $i$ in $L$. 
        Consider a call to $\tt{UPDATE(v)}$ that increases $\Delta_L$ or $N_{\Delta_L}$ (or both). As we observed before, there are at least $k+3 = \deg_L(v)+3$ edges we can charge to. 
        We will either charge to those edges or reserve them to be charged later. Namely:
        \begin{itemize} 
            \item For every increase in $N_{\Delta_L}$: we reserve one uncharged edge to be charged later. Since {\tt{UPDATE}} can increase $N_{\Delta_L}$ by at most $k+3$, 
            the number of uncharged edges is sufficiently large. Hence, at any point in time, there is a reserve of $N_{\Delta_L}$ uncharged edges in the system.
            \item If {\tt UPDATE} increases $\Delta_L$: it means that $N-k-N_{\Delta_L} < k+3$, because the number of neighbors added to $v$, $k+3$, is greater than the number of vertices in 
            $L \setminus N(v)$ that have degree $< \Delta_L$, and there are at least $N-k-N_{\Delta_L}$ such vertices. Now, if $N_{\Delta_L} < N/2$, then $N/4 \leq  k$, so we charge the increase in $\Delta_L$ 
            to $k \geq N/4$ edge insertions. On the other hand, if $N_{\Delta_L} \geq N/2$, then we have a reserve of $\geq N/2$ uncharged edges. We charge the increase in $\Delta_{L}$ to this reserve. 
        \end{itemize}
        \noindent
        Therefore, since we apply {\tt BALANCE} only when $\Delta_L \geq 2 \delta_L$, starting with $\Delta_L \in \{ \delta_L, \delta_L+1 \}$, 
        there are at least $\delta_L N/ 4 = \Omega(m_t + n) = \Omega(m+n)$ edge insertions before that. 

        \item {$\tt{RECOMPUTE}$}. The cost of {\tt RECOMPUTE} is $O(n+m)$. Consider the potential function $|m-m_t|$. After a call to {\tt RECOMPUTE} it is always $0$. 
        Now, if the call to {\tt RECOMPUTE} followed an edge insertion, then $m = 2m_t + n$. If the call to {\tt RECOMPUTE} followed an edge deletion, then $m = 0.5m_t \geq n$. 
        In both cases, the difference between the potentials after and before the call is negative and proportional to the cost. Therefore, the amortized cost is $O(1)$. 
    \end{itemize}
\end{proof}

\subsection{Blowup-conductance tradeoff}
\label{sub:tradeoff}
Abboud and Wallheimer~\cite{abboud2023worst} observed that in some cases, e.g., for exponential-time problems and approximation problems, even a linear blowup might be too costly. Hence, they suggested a variant of their core gadget that gives a tradeoff between the number of vertices and the conductance of $G_{exp}$. Their idea can be adapted to our core gadget as follows. Given $0 < \eps \leq \delta\leq 1$, modify the core gadget by setting $N = \lceil \delta n (1+o(1)) \rceil$, and for every $v \in V$, instead of adding $\deg_G(v)+3$ edges from $v$ to $L$, add $\lceil \eps \deg_G(v) \rceil+3$ such edges. By \cref{lem:robust}, the resulting graph is an $\Omega(\eps)$-expander. Moreover, the blowup in the number of vertices is $2N \leq 2\delta n + O(1)$, and the blowup in the number of edges is $2 \eps m + 3n$. Rescaling appropriately, we obtain that given any $0 < \eps \leq \delta \leq1$, we can construct an $\Omega(\eps)$-expander with a blowup of $\delta n$ in the number of vertices and $\eps m + 3n$ in the number of edges. 

\subsection{Bipartiteness-preserving core gadget} 
In some instances, we would like our core gadget to preserve bipartiteness, e.g., when the problem is defined for bipartite graphs. Note that our core gadget does not preserve bipartiteness because the endpoints of any odd-length path in $G$ may share a common neighbor in $L$, thus closing an odd cycle and making 
$G_{exp}$ non-bipartite. Therefore, we suggest a variant of the core gadget that, given a bipartite graph $G$, outputs a bipartite, $\Omega(1)$-expander $G_{bexp} := (V_{bexp},E_{bexp})$. 

Given a bipartite graph $G = (A \cup B,E)$, where $|A| = |B| = n$, let us modify the core gadget as follows. Instead of adding edges from $A \cup B$ to $L$, add edges from $A$ to $L$ 
and from $B$ to $R$. Namely, for every $v \in A$, add $\deg_G(v) + 3$ edges to $L$, and for every $v \in B$, add $\deg_G(v) + 3$ edges to $R$. The rest of the construction stays the same. Namely, 
we construct a $d$-regular expander $X$ between $L$ and $R$ for the same $d$ as before. Clearly, the blowup in the number of edges and vertices is still linear, and the graph is bipartite, with sides $A \cup R$ and $B \cup L$. In addition:

\begin{claim} 
    \label{prop:bexp} 
    $G_{bexp}$ is a bipartite $\Omega(1)$-expander. 
\end{claim}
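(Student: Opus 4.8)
The plan is to mimic the proof of Lemma~\ref{lem:coregadget} (or, more conveniently, to reduce directly to the robust version Lemma~\ref{lem:robust}), while tracking the bipartition $(A \cup R, B \cup L)$ and checking that the same two-case volume argument still produces enough out-going edges. First I would observe that $G_{bexp}$ is indeed bipartite: edges inside $G$ go between $A$ and $B$; the new edges go either between $A$ and $L$ or between $B$ and $R$; and the expander $X$ goes between $L$ and $R$. So with the sides declared as $P := A \cup R$ and $Q := B \cup L$, every edge crosses, and bipartiteness holds. It remains to lower-bound the conductance.

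Next I would set up the cut analysis. Fix a non-empty cut $S \subseteq V_{bexp}$, and write $S_A := S \cap A$, $S_B := S \cap B$, $S_L := S \cap L$, $S_R := S \cap R$, and $S_V := S_A \cup S_B$. As in Lemma~\ref{lem:coregadget}, all degrees are at least $3$, so $\min(vol(S), vol(V_{bexp}\setminus S)) > 0$, and we may assume w.l.o.g.\ $|S_L \cup S_R| \leq N$. The key structural facts I would record are: (1) every $v \in A$ has $\deg_G(v) + 3 \geq \tfrac12 vol_G$-worth of its edges going to $L$, i.e.\ $e(S_A, L) \geq vol(S_A)/2$, and symmetrically $e(S_B, R) \geq vol(S_B)/2$; (2) $X$ is a $d$-regular $\phi d$-edge expander on $L \cup R$, and the degrees in $L$ and $R$ are within a constant factor of $d$ (here $L$-vertices have degree $2d$ or $2d-1$ and $R$-vertices have degree exactly $d$, so $vol(S_L \cup S_R) \leq 2d\,|S_L \cup S_R|$). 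Then I would split into the same two cases as before: Case 1, $vol(S_V) \leq 4 vol(S_L \cup S_R)$, where the out-going edges come from the expansion of $X$ applied to $S_L \cup S_R$, exactly as in Lemma~\ref{lem:coregadget}; Case 2, $vol(S_V) > 4 vol(S_L \cup S_R)$, where $vol(S) < \tfrac54 vol(S_V)$ and I argue there are many out-going $V$-to-$(L\cup R)$ edges.

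The one genuinely new point — and the step I expect to be the main obstacle — is Case 2, because now the "expansion'' edges of $V$ leave on two sides ($A \to L$ and $B \to R$) rather than one. The clean way to handle this is to note $vol(S_V) = vol(S_A) + vol(S_B)$, hence $e(S_A, L) + e(S_B, R) \geq vol(S_V)/2$, and then subtract the edges landing inside $S$: $e(S_A, S_L) + e(S_B, S_R) \leq vol(S_L \cup S_R) < \tfrac14 vol(S_V)$. Therefore $e(S_A, L \setminus S_L) + e(S_B, R \setminus S_R) \geq vol(S_V)/2 - vol(S_V)/4 = vol(S_V)/4 \geq vol(S)/5$, and all of these edges are out-going from $S$, so we are done with constant $\phi/5$ (or whatever constant the cleanest bookkeeping gives). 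The simplest exposition, which I would actually write, is to just invoke Lemma~\ref{lem:robust}: set $\eps = 1$, $d_X = d$, and $\alpha$ the constant bounding $\max_L \deg / d$; the hypothesis "every $v \in V$ has at least $\deg_G(v)+1$ neighbors in $L\cup R$'' is satisfied in the obvious side-dependent way, and reading its proof shows the argument only ever uses $e(S_V, (L\cup R)\setminus(S_L\cup S_R))$, which is exactly what the two-sided count above provides — so Lemma~\ref{lem:robust} yields an $\Omega(1)$-expander and hence the claim. I would phrase the final writeup as "the proof is identical to that of Lemma~\ref{lem:coregadget}, except in Case 2 we count out-going edges on both the $A$-side and the $B$-side,'' and present only that modified inequality in display form.
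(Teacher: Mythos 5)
Your proof is correct and follows essentially the same route as the paper, which simply re-runs the proof of Lemma~\ref{lem:coregadget} with $A \cup B$ playing the role of $V$ and $L \cup R$ playing the role of the balancing layer — your two-sided count $e(S_A, L\setminus S_L) + e(S_B, R\setminus S_R) \geq vol(S_V)/4$ is exactly that substitution spelled out. One small slip: in the bipartite gadget the $R$-vertices also receive the $B$-to-$R$ edges, so their degrees are not exactly $d$ but only $O(d)$; this does not affect your argument, since all you need is $vol(S_L \cup S_R) = O(d)\,|S_L \cup S_R|$.
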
 
\noindent
To see why this claim holds, let $A \cup B$ play the role of $V$, and let $L \cup R$ play the role of $L$ in the original proof of \cref{lem:coregadget}.

\section{Direct-WTERs for Max-Cut, Densest Subgraph, and Graphical OMv}
\label{sec:wters}

In this section, we develop further this line of research by providing a Direct-WTER for Max-Cut, a DD-WTER for Densest Subgraph, and a DD-WTER for Graphical OMv instances.

\subsection{Direct-WTER for Max-Cut}
\label{sub:maxcutwter}
In the Max-Cut problem, the goal is to compute the maximum cut in a graph $G$, which we denote by $MC(G) := \max_{S \subseteq V} e(S,V \setminus S)$. 
To make the graph an $\Omega(1)$-expander, simply applying the core gadget does not work because it might affect the maximum cut in $G_{exp}$ unpredictably. 
To this end, we add a gadget that ensures that any maximum cut in $G_{exp}$ will separate $L$ from $R$. Additionally, we modify the core gadget so that every vertex in $G$ 
will have the same number of neighbors in $L$ and $R$. These two gadgets together ensure that the vertices in $G$ do not get a preference to be in the part of $L$ or $R$, due to symmetry.
Consequently, the maximum cut in $G_{exp}$ induces a maximum cut in $G$. 

Let us now describe the reduction in more detail. Given $0 <\eps \leq 1$, apply the core gadget with a tradeoff between the conductance and the blowup, as described in \cref{sub:tradeoff}, 
to get an $\Omega(\eps^2)$-expander with a blowup of $2N = \eps n$ in the number of vertices. 
This is achieved  by picking $N = \eps/2 n$ (roughly), and adding for every $v \in V$, $\eps^2 \deg_G(v) + 3$ neighbors in $L$. 
Add a symmetric copy of the $V$-to-$L$ edges between $V$ and $R$ as well. 

Observe that for every vertex in $L \cup R$, the number of neighbors it has in $V$ is at most $d \leq \frac{2\eps^2 m + 4n}{N} = 4 \eps m/n + O(1) \leq 4 \eps n + O(1)$, 
where $d$ is also the degree of the expander between $L$ and $R$. 
Now, add two bi-cliques of size $N \times 3d$ as follows: add two sets of $F_L$ and $F_R$ containing $3d$ vertices each, and all $L$-to-$F_L$ and $R$-to-$F_R$ edges. 
The purpose of this gadget is to ensure that $L$ and $R$ are separated in a maximum cut. Denote the resulting graph by $G_{exp}$. See Figure \cref{fig:max-cut}.

\begin{figure}[h]
    \centering
    \includegraphics[scale=0.5]{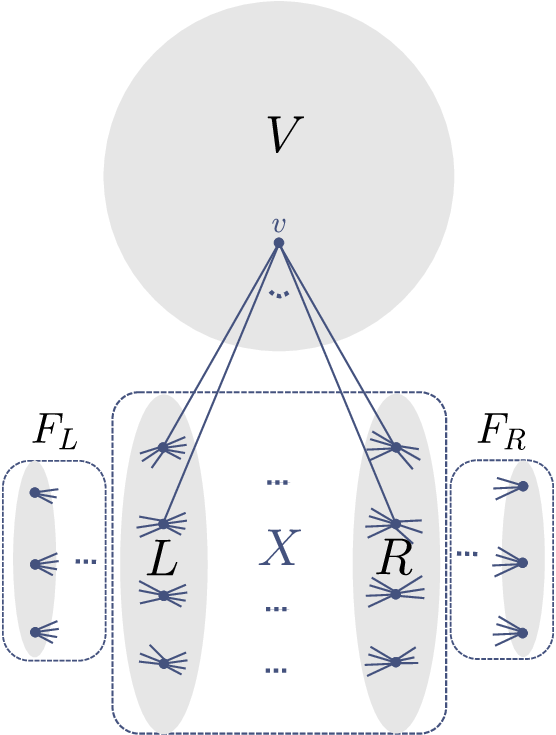}
    \caption{A Direct-WTER for the Max-Cut problem.}
    \label{fig:max-cut}
\end{figure}
\noindent

Observe that the blowup in the number of vertices is $2N + 6d \leq 25 \eps n + O(1)$. 
Moreover, we claim that the additional gadgets: the $V$-to-$R$ edges and the bi-cliques, do not ruin the graph's expansion, and that $G_{exp}$ is an $\Omega(\eps^2)$-expander.
To see why, observe that the induced graph on $L \cup F_R \cup F_L \cup R$ is a bipartite, $\Omega(d)$-edge expander, 
hence the proof of \cref{lem:robust} holds for it as well (by letting $L \cup R$ play the role of $L$).

Finallly, the next claim shows that the maximum cut in $G_{exp}$ encodes the maximum cut in $G$. 
\begin{claim}
    \label{claim:maxcutproof}
    The maximum cut in $G_{exp}$ is $MC(G_{exp}) = MC(G) + 7dN$. 
\end{claim}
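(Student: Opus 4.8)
The plan is to show the two inequalities $MC(G_{exp}) \geq MC(G) + 7dN$ and $MC(G_{exp}) \leq MC(G) + 7dN$ separately. For the lower bound, I would start from an optimal cut $S^\ast$ of $G$ and lift it to a cut of $G_{exp}$ by putting $L$ on one side and $R \cup F_L$ on the other (or the symmetric choice), and placing $F_R$ opposite to $R$ and opposite to... wait, I need to be careful about $F_L, F_R$. The intended configuration is: the cut separates $L$ from $R$; then $F_L$ should go with $R$'s side (to cut all $N \cdot 3d$ edges of the $L$-to-$F_L$ biclique) and $F_R$ should go with $L$'s side (to cut all $N\cdot 3d$ edges of the $R$-to-$F_R$ biclique); the $X$ edges between $L$ and $R$ contribute all $dN$ of them; and the $V$-to-$L$ and $V$-to-$R$ edges are handled by the symmetry observation — since every $v \in V$ has exactly the same number of neighbors in $L$ as in $R$, whichever side of $\{L,R\}$ we assign $v$ to, it cuts exactly its $V$-to-$L$-plus-$V$-to-$R$ degree's worth through one of the two sides, contributing a fixed amount independent of $S^\ast$, so the $V$-to-$(L\cup R)$ edges contribute a constant $C$. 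Counting: the two bicliques give $2 \cdot 3dN = 6dN$ cut edges, the $X$ edges give $dN$, for a total of $7dN$ from the gadget, plus $e_G(S^\ast, V\setminus S^\ast) = MC(G)$ from inside $G$, plus the constant $C$ from $V$-to-$(L\cup R)$ edges. I would then need to double-check that $C$ is already counted inside the $7dN$ (it is: the $V$-to-$L$ and $V$-to-$R$ edges are exactly the $2(\eps^2 m + 3n)$-ish edges from the core gadget, and these get absorbed — actually I should re-examine, because $7dN$ is meant to be the total non-$G$ contribution; the cleanest route is to just assert the gadget edges contribute exactly $7dN$ in any cut that separates $L$ from $R$ and places $F_L, F_R$ optimally, and that the $V$-to-$(L\cup R)$ edges always contribute their full count by the symmetry of neighbor allocation).

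For the upper bound, I would take an optimal cut $T^\ast$ of $G_{exp}$ and argue it must have the canonical structure. First, $F_L$ is an independent set all of whose edges go to $L$, so in any maximum cut each vertex of $F_L$ sits opposite to the majority side of $L$; but more strongly, since $|F_L| = 3d$ and every $L$-vertex has exactly $3d$ neighbors in $F_L$, a standard local-move argument shows the maximum cut separates $L$ entirely from $F_L$ and $R$ entirely from $F_R$. The key sub-claim is then that the maximum cut separates $L$ from $R$: if not, moving a vertex of $L$ (or $R$) across would gain on the $X$-edges and biclique edges while losing at most $d$ on $V$-to-$L$ edges, and since $X$ is an expander and the biclique edges dominate, such a move is improving, contradicting optimality — this needs the quantitative fact $3d > d + (\text{loss})$, which holds because each $L$-vertex has $3d$ biclique edges. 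Once the cut separates $L$ from $R$ (say $L$ and $F_R$ on one side, $R$ and $F_L$ on the other), the contribution of all gadget edges is exactly $6dN + dN = 7dN$ regardless of where $V$ goes, the $V$-to-$(L\cup R)$ edges contribute their fixed constant by symmetry, and the remaining cut edges are exactly $e_G(T^\ast \cap V, V \setminus T^\ast)$, which is at most $MC(G)$. Combining, $MC(G_{exp}) \leq MC(G) + 7dN$.

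The main obstacle I anticipate is the local-move / exchange argument establishing that a maximum cut of $G_{exp}$ must separate $L$ from $R$ and must separate each of $L,R$ from its attached biclique. This requires carefully bounding the net gain of a single-vertex flip: one must verify that for any $x \in L$, the number of its incident edges that would become cut (its $3d$ edges to $F_R$... no — $F_L$; plus its $d$ edges to $R$ via $X$) exceeds the number that would become uncut (its $V$-to-$L$ edges, of which there are at most $d$, plus edges to $F_R$ if those were cut) — roughly $3d + d$ gained versus at most $d$ lost, which is comfortably positive, but getting the bookkeeping exactly right (and handling the $F$-vertices simultaneously rather than sequentially, so the moves don't interfere) is the delicate part. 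The expansion of $G_{exp}$ itself (from the adapted Lemma~\ref{lem:robust}) is not really used for this claim; it is the biclique over-design (size $3d$, not $d$) that forces the structure, so I would emphasize that design choice. Everything else — counting cut edges in the canonical configuration and invoking the symmetric neighbor allocation to make the $V$-to-$(L\cup R)$ contribution a cut-independent constant — is routine arithmetic.
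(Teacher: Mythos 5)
Your overall plan — prove both inequalities, force an optimal cut of $G_{exp}$ into a canonical shape, use the symmetry $\deg_L(v)=\deg_R(v)$ to make the $V$-to-$(L\cup R)$ contribution cut-independent, then count — is the same as the paper's. But the step you yourself flag as the crux, forcing a maximum cut to separate $L$ from $R$, is argued incorrectly: the single-vertex exchange you propose fails quantitatively. Suppose (after the part you do get right: $F_L$ opposite $L$, $F_R$ opposite $R$, which follows from gaining $3d$ biclique edges against a loss of at most $\deg_V(x)+\deg_R(x)\le 2d$) that $L$ and $R$ carry the same color. Flipping a single $x\in L$ gains its $d$ edges to $R$ and at most $d$ edges to $V$, but it \emph{loses} the $3d$ edges to $F_L$ that are currently cut; the net change is at most $2d-3d<0$. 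Your inequality ``$3d > d + \mathrm{loss}$'' puts the biclique edges on the wrong side of the ledger, so single-vertex moves can never certify the separation (and the expansion of $X$, which you invoke here, is irrelevant — as you yourself note later). The paper's fix is to flip the \emph{block} $L\cup F_L$ simultaneously: the biclique edges keep their cut status, the $Nd$ edges of $X$ are gained, and at most $e(V,L)\le Nd$ edges from $V$ to $L$ are lost. Note also that this flip is only a non-strict improvement, so the correct conclusion is that \emph{some} maximum cut has the canonical form (a without-loss-of-generality normalization), not that optimality is ``contradicted'' — even a repaired exchange argument cannot show that every maximum cut separates $L$ from $R$.

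A second, smaller issue is the additive constant. Your symmetry observation (once $L$ and $R$ are on opposite sides, each $v\in V$ cuts exactly $\deg_L(v)=\deg_R(v)$ of its gadget edges, independently of where $v$ is placed) is exactly what the paper uses to conclude that the $V$-part of the cut must be a maximum cut of $G$; that part is fine. But the statement is an exact identity, and you explicitly leave the bookkeeping open (``just assert the gadget edges contribute exactly $7dN$''). In the canonical configuration the bicliques and $X$ contribute $N(|F_L|+|F_R|+d)=7dN$, and the $V$-to-$(L\cup R)$ term is a fixed quantity determined by the construction; a complete proof has to commit to how that fixed term enters the stated constant rather than defer it.
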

\begin{proof}[Proof of \cref{claim:maxcutproof}]
    Let $(S,V_{exp} \setminus S)$ be a maximum cut in $G_{exp}$. We call the vertices of $S$ \emph{red} and the vertices of $V_{exp} \setminus S$ \emph{blue}. 
    Consider the following observations: 
    \begin{enumerate}
    \item $F_L$ must be monochromatic because otherwise, we could improve the cut by giving $F_L$ the opposite color of the majority color in $L$. A similar observation applies to $F_R$ as well. 
    \item $L$ is monochromatic and has the opposite color to $F_L$. To see why, assume w.l.o.g. that $F_L$ is red, and for contradiction, assume that some $x \in L$ is also red. 
    By changing the color of $x$ to blue, we gain $|F_L| = 3d$ edges and lose at most $\deg_V(x) + \deg_R(x) \leq 2d$ edges, a contradiction to $S$ being a maximum cut.
    A similar observation applies to $F_R$ as well.
    \item $L$ and $R$ have different colors. To see why, assume w.l.o.g. that $L$ is red, and assume for contradiction that $R$ is also red. 
    Based on the previous observation, $F_L$ is blue. 
    By flipping the colors of $L$ and $F_L$, we gain the $L$-to-$R$ edges; there are $N \cdot d$ such edges, and lose the $L$-to-$V$ edges; there are at most $N \cdot d$ such edges. Hence, we can assume without loss of generality that this observation holds. 
    \end{enumerate}
    Now, we claim that $S \cap V$ is a maximum cut in $G$. To see why, observe that since $L$ and $R$ have different colors, and for every $v\in V$, we have $\deg_L(v) = \deg_R(v)$, 
    then replacing $S \cap V$ with any other cut in $V$ will only change the number of edges cut inside $G$. Therefore, we get $MC(G_{exp}) = MC(G) + N(|F_L| + |F_R| + d) = MC(G) + 7dN$. 
\end{proof}
\noindent

\subsection{DD-WTER for Densest Subgraph}
In the Densest Subgraph problem, we define the \emph{density} of non-empty set $S \subseteq V$ to be $\rho(S) := m_S/|S|$, where $m_S$ is the number of edges in the subgraph induced by $S$. The goal is to compute $\rho(G) := \max_{S \subseteq V} \rho(S)$. 
In our DD-WTER, we will make use of the following claim: 
\begin{claim} 
    \label{claim:densest} 
    Any set $S^*$ that maximizes $\rho$ in a graph with $m$ edges and $n$ vertices does not contain any vertex of degree less than $m/n$. 
\end{claim}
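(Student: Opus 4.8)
The plan is to argue by contradiction and show that removing a low-degree vertex from \emph{any} maximizer strictly increases the density, so no maximizer can contain such a vertex. Write $\rho^* := \rho(S^*) = \rho(G)$. The first step is to record that $\rho^* \geq m/n$, since $\rho(V) = m/n$ and $S^*$ is a maximizer. (If $m = 0$ the claim is vacuous, as no vertex has degree below $0$, so I would assume $m \geq 1$; this also forces $|S^*| \geq 2$, since a singleton has density $0 < m/n \leq \rho^*$.)

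Next, suppose toward a contradiction that some $v \in S^*$ has $\deg_G(v) < m/n$. Let $d_v$ denote the number of neighbors of $v$ lying inside $S^*$, so that $d_v \leq \deg_G(v) < m/n \leq \rho^*$. Deleting $v$ from $S^*$ removes exactly $d_v$ edges from the induced subgraph, hence
\begin{equation*}
\rho(S^* \setminus \{v\}) = \frac{m_{S^*} - d_v}{|S^*| - 1} = \frac{\rho^* |S^*| - d_v}{|S^*| - 1}.
\end{equation*}
Since $d_v < \rho^*$, the numerator exceeds $\rho^*|S^*| - \rho^* = \rho^*(|S^*|-1)$, and dividing by $|S^*| - 1 > 0$ yields $\rho(S^* \setminus \{v\}) > \rho^* = \rho(G)$, contradicting the definition of $\rho(G)$ as the maximum density over all non-empty subsets.

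I expect no genuine obstacle here: there is no case analysis beyond the trivial $m=0$ degeneracy, and the entire point is that the density improvement is \emph{strict}, which is precisely what upgrades the conclusion from ``some maximizer avoids low-degree vertices'' to ``every maximizer does.'' The one place that needs a moment's care is confirming $d_v < \rho^*$ rather than merely $d_v \le \rho^*$; this is where the strict hypothesis $\deg_G(v) < m/n$ (as opposed to $\leq$) is used, and it is also the reason the threshold in the statement cannot be relaxed to $\leq m/n$.
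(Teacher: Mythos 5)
Your proof is correct and follows essentially the same argument as the paper: assume a maximizer contains a vertex of degree below $m/n$, remove it, and use $\rho(S^*) \geq m/n$ to show the density strictly increases, a contradiction. The only difference is cosmetic — you track the internal degree $d_v$ exactly where the paper bounds the removed edges by $\deg(u)$ — and your explicit handling of the $m=0$ and singleton cases is a minor tightening of details the paper leaves implicit.
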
 
\begin{proof}[Proof of \cref{claim:densest}]
    Assume for contradiction that $S^*$ contains a vertex $u$ of degree $\deg(u) < m/n$. Define $S = S^* \setminus \{u\}$. Then, since $m/n = \rho(V) \leq \rho(S^*) = m_{S^*}/|S^*|$, we get: 
    \begin{equation*}
    \rho(S) \geq \frac{m_{S^*}-\deg(u)}{|S^*|-1} > \frac{m_{S^*}-\frac{m}{n}}{|S^*|-1} \geq \frac{m_{S^*} - \frac{m_{S^*}}{|S^*|}}{|S^*|-1} = \rho(S^*),
    \end{equation*}
    which contradicts $S^*$ being a maximizer.
\end{proof}

\noindent
In addition, we assume that the density in $G$ is sufficiently large, namely, $m > 42n$.\footnote{For clarity, we do not attempt to optimize this constant, although it can be reduced.}
To make the graph an $\Omega(1)$-expander, we apply the core gadget in a way that only introduces vertices of degree smaller than $m_{exp}/n_{exp}$, where $m_{exp}$ and $n_{exp}$ are the number of edges and vertices in $G_{exp}$, respectively. 
To this end, consider the graph obtained by applying the core gadget with a tradeoff between the conductance and the blowup, as described in \cref{sub:tradeoff}. 
Specifically, we pick the parameters such that in $G_{exp}$, every $v \in V$ is connected to $\lceil \eps \deg_G(v) \rceil + 3$ vertices in $L$, 
and $N = |L| = |R| = n(1+o(1))$. The conductance of this graph is $\Omega(\eps)$. The parameter $\eps$ is chosen to be a sufficiently small constant which will be determined later. 
\begin{claim} 
    \label{claim:densest2}
    The maximum density in $G_{exp}$ is $\rho(G_{exp}) = \rho(G)$. 
\end{claim}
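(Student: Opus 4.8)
The plan is to show the two inequalities $\rho(G_{exp}) \geq \rho(G)$ and $\rho(G_{exp}) \leq \rho(G)$ separately. The first is immediate: since $G$ is an induced subgraph of $G_{exp}$, any set $S \subseteq V$ achieving density $\rho(G)$ in $G$ achieves the same density in $G_{exp}$, so $\rho(G_{exp}) \geq \rho(G)$. The content is in the reverse direction, and the strategy is to argue that no densest set $S^*$ of $G_{exp}$ can gain anything by including vertices of $L \cup R$, so that an optimal $S^*$ can be taken to lie entirely inside $V$, whence $\rho(G_{exp}) = \rho(S^*) \leq \rho(G)$.

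First I would compute $m_{exp}$ and $n_{exp}$ and the average degree $m_{exp}/n_{exp}$ of $G_{exp}$. We have $n_{exp} = n + 2N = (3+o(1))n$, and $m_{exp} = m + (\text{number of }V\text{-to-}L\text{ edges}) + (\text{number of }X\text{-edges})$; the $V$-to-$L$ edges number $\sum_v(\lceil \eps\deg_G(v)\rceil+3) = (1+o(1))(\eps\cdot 2m + 3n)$ up to rounding, and $X$ is $d$-regular on $L\cup R$ with $d = \lceil \frac{(1+o(1))(2\eps m + 3n)}{N}\rceil = O(\eps m/n + 1)$, contributing $dN = O(\eps m + n)$ edges. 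So $m_{exp} = m + O(\eps m + n)$, and since $m > 42n$, for $\eps$ a sufficiently small constant we get $m_{exp}/n_{exp} \leq (m + O(\eps m + n))/(3n) < m/n$ — I want this average degree to be strictly smaller than $m/n = \rho(V)$. Then I would apply Claim~\ref{claim:densest} to $G_{exp}$: since $m_{exp}/n_{exp} < m/n = \rho_{G_{exp}}(V) \leq \rho(G_{exp})$, any maximizer $S^*$ of $\rho$ in $G_{exp}$ contains no vertex of degree less than $m_{exp}/n_{exp}$. The key step is then to show that every vertex of $L \cup R$ has $G_{exp}$-degree strictly less than $m_{exp}/n_{exp}$: a vertex $y \in R$ has degree exactly $d$, and a vertex $x \in L$ has degree $d + \deg_L^{-1}(x)$ where $\deg_L^{-1}(x)$, the number of $V$-neighbors of $x$, is at most $\lceil \frac{(1+o(1))(2\eps m+3n)}{N}\rceil = O(\eps m/n + 1)$ by the Round-Robin balancing; so every such vertex has degree $O(\eps m/n + 1)$, which for small constant $\eps$ (using $m > 42n$) is below $m_{exp}/n_{exp}$, which is $\Theta(m/n)$. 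Hence $S^* \cap (L \cup R) = \emptyset$, so $S^* \subseteq V$, the subgraph induced by $S^*$ in $G_{exp}$ is the same as in $G$, and $\rho(G_{exp}) = \rho(S^*) \leq \rho(G)$, completing the proof.

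The main obstacle I expect is chasing the constants so that everything lines up: I need one fixed constant $\eps > 0$ for which simultaneously (i) the average degree $m_{exp}/n_{exp}$ stays strictly above the degrees of all vertices in $L \cup R$, and (ii) this is consistent with the hypothesis $m > 42n$. The delicate point is that $m_{exp}/n_{exp} \approx (m + 2\eps m)/(3n)$ must exceed both $d \approx (2\eps m + 3n)/N \approx (2\eps m + 3n)/n$ and $d + (\text{max }V\text{-degree of }x\in L) \approx 2d$; since $d = \Theta(\eps m/n + 1)$ this needs $(m)/(3n) \gtrsim \eps m/n + 1$, i.e. $\eps$ bounded by a small absolute constant and $m/n$ bounded below — which is exactly where the $m > 42n$ assumption (and the freedom to shrink $\eps$) is used. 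One should also double-check the $o(1)$ rounding terms in $N$ and in the edge counts do not spoil strictness; this is routine but must be stated carefully, e.g. by absorbing them once $n$ is large enough and handling small $n$ trivially.
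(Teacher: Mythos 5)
Your proposal is correct and follows essentially the same route as the paper: bound the maximum degree in $L \cup R$ by roughly $2d = O(\eps m/n + 1)$, show it falls strictly below the average degree $m_{exp}/n_{exp} = \Theta(m/n)$ once $\eps$ is a small enough constant and $m > 42n$, invoke Claim~\ref{claim:densest} to exclude $L \cup R$ from any maximizer, and conclude the maximizer lies in $V$ so $\rho(G_{exp}) = \rho(G)$. Your side remark that $m_{exp}/n_{exp} < m/n$ is not needed to apply Claim~\ref{claim:densest}, and the constant-chasing you defer is exactly what the paper carries out explicitly (choosing $\eps < 1/16$), so nothing essential is missing.
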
 
\noindent

\begin{proof}[Proof of \cref{claim:densest2}]
    We will show that $m_{exp}/n_{exp} > \mu$, where $\mu$ is the maximum degree in $L$ (and therefore also in $L \cup R$).
    Hence, by \cref{claim:densest}, it will follow that any maximum density subgraph in $G_{exp}$ does not contain vertices from $L \cup R$, so 
    it must induce a maximum density subgraph in $G$. 

    Note that the number of $V$-to-$L$ edges is bounded by $2 \eps m + 3n\leq e(V,L) \leq 2\eps m + 4n$, and that the degree of the expander between $L$ and $R$ is $d =\lceil e(V,L)/N \rceil$. Hence, we have 
    $m_{exp} = m + e(V,L) + Nd \geq m + (2\eps m + 3n) + (2\eps m + 3n) = (1+4\eps)m + 6n$. 
    In addition, note that the maximum degree in $L$ is: 
    \[
    \mu \leq 2d \leq 2\left(\frac{e(V,L) }{N}+1\right) \leq \frac{4 \eps m + 9n + o(n)}{n}. 
    \]
    Now, observe that the inequality:
    \[
    \mu \leq
    \frac{4 \eps m + 9n + o(n)}{n}  <
    \frac{(1+4\eps)m + 6n}{3n + o(n)} \leq 
    \frac{m_{exp}}{n_{exp}}, 
    \]
    holds when $n$ is sufficiently large, $1-8\eps > 0$, 
    and $21n/(1-8\eps)< m$. For example, by picking $\eps < 1/16$, we get that 
    $\mu < m_{exp}/n_{exp}$ when $m > 42n$. 
\end{proof}

To adapt the Direct-WTER to a DD-WTER, we replace the core gadget with the dynamic core gadget. However, note that in the dynamic core gadget there are some variations in the degrees due to lazy updates and rebalances. Nonetheless, these variations can be compensated for by picking a smaller $\eps$, specifically $\eps = 1/44$ will suffice.

\subsection{DD-WTER for Graphical OMv instances}
\label{sub:omvwter}
In this subsection, we demonstrate how our core gadget from \cref{sec:coregadget} 
can be used to prove OMv-hardness to various OMv-hard problems by making the typical ``OMv-hard'' instances of a problem $\Omega(1)$-expanders. 
The definitions of the OMv and OuMv problems and the OMv Conjecture appear in \cref{app:problems}. 

A \emph{Graphical OMv} instance is constructed from an OuMv instance as follows. Given a $k \times k$ binary matrix $M$, 
construct bipartite graph $G_M := (A \cup B,E)$, where $A$ and $B$ are equally-sized parts, denoted by $A = \{a_1,\ldots,a_k\}$ and $B = \{b_1,\ldots,b_k\}$. The edges of $G_M$ are defined according to the $1$'s of the matrix, i.e., $E := \{a_ib_j | M[i,j]=1 \}$. 
Next, add some problem-specific gadgets to $G_M$: for many OMv-hard problems, such as $st$-SubConn, $st$-SP, and more, the gadgets consist of 
$O(k)$ vertices and $O(k)$ edges that are connected to $L \cup B$ in a certain (dynamic) way. 

\paragraph*{Making Graphical OMv instances $\Omega(1)$-expanders.} To make such instances $\Omega(1)$-expanders, we apply the bipartiteness-preserving core-gadget on $G_M$ before adding the problem-specific gadgets. 
In some cases, as we will soon demonstrate for the $st$-SP problem, such gadgets preserve expansion, or they can be easily adapted to preserve expansion. 
Therefore, this adaptation proves that their OMv-hard instances are $\Omega(1)$-expanders. 
Let us now demonstrate this technique for the problem of $st$-SP and prove \cref{prop:st-SP} which states that $st$-SP is OMv-hard on $\Omega(1)$-expanders. 

\begin{proof}[Proof of \cref{prop:st-SP}] 
    We will prove the proposition for an easier variant of the problem called $st$-SP ($3$ vs. $5$), where the goal is only to distinguish between $dist(s,t) = 3$ and $dist(s,t) \geq 5$. 
    Henzinger et al.~\cite{henzinger2015unifying} proved a lower bound to this problem via a reduction from OuMv: construct $G_M$, add vertices 
    $s$ and $t$ to $G_M$, and then update the edges between $s$-to-$A$ and $t$-to-$B$ according to the input vectors. It then followed that whenever 
    $u^TMv = 1$, then $dist(s,t)=3$, and whenever $u^T M v = 0$, $dist(s,t) \geq 5$. By picking $k = \sqrt{m}$, the graph has $O(m)$ edges, and the lower bound the follows was $m^{1/2-\eps}$ per update 
    and $m^{1-\eps}$ per query. We now modify their construction as follows. 

    Given $M$, apply the bipartiteness-preserving core gadget to $G_M$ before adding vertices $s$ and $t$. 
    Then, pick a non-edge in the expander $X$, i.e., $xy \notin E(X)$ for some $x \in L$ and $y \in R$ arbitrarily. 
    The purpose of this modification is to ensure that $s$ and $t$ are connected to the graph without introducing a path of length $<5$ between 
    them. 
    Now, we procceed as in~\cite{henzinger2015unifying}; namely, given vectors $u = (u_1,u_2,\ldots,u_k)$ and $v = (v_1,v_2,\ldots,v_k)$, 
    we update the graph by adding the edges $sa_i$ iff $u_i =1$, and $t b_j$ iff $v_j =1$. 
    Note that $u^T M v = 1$ iff $dist(s,t) = 3$, and otherwise $dist(s,t) \geq 5$. In addition, we claim that the graph is an $\Omega(1)$-expander. 
    This follows from the next claim. 
    \begin{claim}
        \label{claim:omv}
        If $G$ is a $\phi$-expander for some $\phi >0$, then adding a vertex to $G$ and connecting it arbitrarily to $\ell \geq 1$ 
        vertices, results in an $\phi/4$-expander. 
    \end{claim}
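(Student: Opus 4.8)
The plan is to reduce the conductance of the new graph $H$ to that of $G$ by analyzing any cut $S$ of $H$ and comparing its conductance with the conductance of $S \cap V(G)$ in $G$. Write $V_H = V(G) \cup \{v\}$, let $S \subseteq V_H$ be a non-empty cut with $\min(\operatorname{vol}_H(S), \operatorname{vol}_H(V_H \setminus S)) > 0$, and set $S' := S \cap V(G)$. The key observation is that attaching $v$ with $\ell$ edges changes volumes by a bounded amount: $\operatorname{vol}_H(S') = \operatorname{vol}_G(S') + (\text{edges from } v \text{ into } S')$, and $\deg_H(v) = \ell$, so the total volume increases by exactly $2\ell$ and the volume of any side changes by at most $\ell$ from moving $v$ across, plus the $v$-to-$S'$ edges counted on $S'$.

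First I would dispose of the degenerate cases: if $S = \{v\}$ or $S = V_H \setminus \{v\}$, then since $\ell \geq 1$ the cut has $e_H(S, V_H \setminus S) = \ell \geq 1$ while the smaller side has volume $\ell$ (for $S=\{v\}$) or we just need the complement to have positive volume — so $\phi_H(S) = 1 \geq \phi/4$ trivially (note $\phi \leq 1$). Otherwise both $S'$ and $V(G) \setminus S'$ are non-empty. Then I would argue that $e_H(S, V_H \setminus S) \geq e_G(S', V(G) \setminus S')$, since the $G$-edges across the cut are all still present, and that $\min(\operatorname{vol}_H(S), \operatorname{vol}_H(V_H \setminus S)) \leq 2 \min(\operatorname{vol}_G(S'), \operatorname{vol}_G(V(G)\setminus S'))$ or so — here the factor comes from the added edges. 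Actually the cleanest route: since $v$ contributes $\ell$ to the cut if it sits on the minority side relative to its neighbors, but contributes $\leq \ell$ extra volume anyway, one shows $e_H(S, V_H\setminus S) \geq \tfrac14 \operatorname{vol}_H(\text{smaller side})$ by splitting on whether most of $v$'s edges cross the cut. Combining $\phi_G \geq \phi$ with the volume-doubling bound and the fact that $v$'s own $\ell$ edges and the $G$-cut edges together dominate a quarter of the smaller volume gives $\phi_H(S) \geq \phi/4$.

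The main obstacle I expect is handling the case where the original cut $(S', V(G)\setminus S')$ has very small volume on its smaller side — say $\operatorname{vol}_G(S')$ is tiny — but $v$ together with many of its $\ell$ edges lands on that side, so $\operatorname{vol}_H$ of that side is dominated by the contribution of $v$. In that regime the bound $\phi_G \geq \phi$ gives essentially nothing, and one must instead use that $v$ itself is a single vertex with all $\ell$ of its edges available to be cut: if at most half of $v$'s edges stay inside the smaller side then $e_H(S,V_H\setminus S) \geq \ell/2 \geq \tfrac14(\ell + \operatorname{vol}_G(S'))$ once $\operatorname{vol}_G(S') \leq \ell$, and if more than half stay inside we instead move $v$ to the other side and re-run the argument, or bound via the $G$-cut edges directly. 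Carefully casing on "where $v$ goes" and "how many of $v$'s edges cross" — roughly three or four cases — is the technical heart; each case is a one-line inequality, but getting the constant $1/4$ to come out uniformly across all of them (rather than, say, $1/5$ or $1/6$) is where I would spend the bulk of the verification, possibly by choosing the worst case and checking it is tight.
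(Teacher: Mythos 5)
Your proposal follows essentially the same route as the paper's proof: split on whether $v$ lies in the chosen side of the cut, bound $vol_H(S)$ by $\ell + 2\,vol_G(S')$, and in the regime where $vol_G(S')$ is small combine the $\geq \ell/2$ crossing edges incident to $v$ with the $\phi\, vol_G(S')$ crossing edges guaranteed by $G$'s expansion (these are exactly the paper's two subcases $\ell \le 2\,vol_G(S_V)$ and $\ell > 2\,vol_G(S_V)$). Two small points to tighten in the write-up: invoke $G$'s expansion for the side whose $G$-volume is the smaller one (a harmless choice, since it suffices to bound the cut against either side's $H$-volume), and drop the ``move $v$ to the other side'' device, which is not available for a fixed cut --- the correct fallback, which you also mention, is that if most of $v$'s edges stay inside $S'$ then $vol_G(S') \ge \ell/2$, so $vol_H(S) \le 4\,vol_G(S')$ and the $G$-cut edges alone give the $\phi/4$ bound.
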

    \begin{proof}[Proof of \cref{claim:omv}]
        Denote the resulting graph by $H$ and the added vertex by $v$. 
        Consider a non-empty cut $S \subseteq V(H)$. Denote by $S_V = S \cap V$, and assume without loss of generality that $vol_G(S_V) \leq vol_G(V \setminus S_V)$ (we can assume so because it holds either for $S$ or 
        its complement). 
        This assumption is useful since we then have $e_G(S_V, V \setminus S_V) \geq \phi vol_G(S_V)$, as $G$ is a $\phi$-expander. 
        Now, we split into the following cases. If $S$ does not contain $v$, i.e. $S_V = S$, then $vol(S) \leq vol_G(S) + |S| \leq 2vol_G(S)$, therefore 
        $e(S,V(H) \setminus S) \geq e_G(S,V \setminus S) \geq \phi/2 \cdot vol(S)$. 
        If $S$ contains $v$, then $vol(S) \leq \ell + 2vol_G(S_V)$. Now, if $\ell \leq 2|S_V|$, then $vol(S) \leq 4vol_G(S_V)$, and as in the previous case we have 
        $e(S,V(H) \setminus S) \geq e_G(S_V,V \setminus S_V) \geq \phi/4 \cdot vol(S)$. 
        If $\ell > 2|S_V|$, then $e(\{v\},V \setminus S_V) \geq \ell/2$. Therefore, we get 
        \[
        e(S,V(H) \setminus S) \geq \ell/2 + \phi vol_G(S_V) \geq \phi/2 (\ell + vol_G(S_V)) \geq \phi/4 \cdot vol(S)
        \]
    \end{proof}

    In our setting, we add two vertices to $G_M$, and at all times, each of them is connected to at least one vertex in $G_M$. Therefore, since $G_M$ is a $\phi$-expander for some constant $\phi$, 
    then the resulting graph is a $\phi/16$-expander at all times according to this claim. 
    Hence, assuming the OMv Conjecture, there is no algorithm for $st$-SP ($3$ vs. $5$) on $\Omega(1)$-expanders, 
    whose preprocessing time is polynomial, update time $m^{1/2-\eps}$, and query time $m^{1-\eps}$, for any $\eps >0$. 
\end{proof}

\section{Derandomized Direct-WTERs for Additional Problems}
\label{sec:oldwters}
In this section, we show how the deterministic core gadget and its dynamic adaptation can be combined with additional gadgets from~\cite{abboud2023worst} to obtain deterministic Direct-WTERs and  
DD-WTERs to all problems that were considered in~\cite{abboud2023worst}, and for some other related problems. 
\begin{proposition}
The following problems admit deterministic Direct-WTERs: 
Maximum Matching, Minimum Vertex Cover, $k$-Clique Detection, $k$-Clique Counting, 
Max-Clique, Minimum Dominating Set, and $H$-Subgraph Detection ($m = O(m)$ and $H$ does not contain pendant vertices). 
\\
The following problems admit DD-WTERs: Maximum Matching, Bipartite Perfect Matching, $k$-Clique Detection, $k$-Clique Counting, and $H$-Subgraph Detection ($m = O(m)$ and $H$ does not contain pendant vertices). 
\end{proposition}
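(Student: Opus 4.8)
The plan is to reuse the problem-specific \emph{solution-control} gadgets of Abboud and Wallheimer~\cite{abboud2023worst} essentially verbatim, and to replace only their randomized core gadget by the deterministic one of Section~\ref{sec:coregadget} (and, for the dynamic statements, by the fully dynamic one of Section~\ref{sub:dyn}). Each of AW's reductions has the shape: (i) run the core gadget on $G$ to obtain an expander that contains $G$ as an induced subgraph and whose expansion layer has balanced degrees; (ii) attach a small deterministic gadget that pins down the value of $\mathcal{A}$ on the expansion part and forces any optimal/witnessing solution of $G_{exp}$ to restrict to one on $G$; (iii) read off $\mathcal{A}(G)$ from $\mathcal{A}(G_{exp})$ in near-linear time. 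The correctness of (ii)--(iii) in AW is purely combinatorial: it uses only that $G$ is an induced subgraph of $G_{exp}$, that each $v\in V$ has enough neighbours in the expansion layer, that these added degrees are balanced up to a constant factor, and that the expansion part is bipartite with a regular expander across it. All of these are exactly the structural guarantees of Lemma~\ref{lem:coregadget} and the remark following it, so the recovery proofs transfer with no change.

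Thus the work is, per problem, to re-verify three points against the new core gadget. \emph{Expansion:} attaching AW's gadgets perturbs the degrees of the expansion-layer and $V$-vertices by at most a constant factor and only adds bounded-degree or degree-balanced structures, so the output remains an $\Omega(1)$-expander by the robust core gadget, Lemma~\ref{lem:robust} (letting the relevant layers play the role of $L$ and $R$), exactly as in the Max-Cut reduction and Claim~\ref{prop:bexp}. \emph{Blowup and time:} the core gadget adds $O(n)$ vertices and $O(m+n)$ edges in $\tilde O(m+n)$ deterministic time, and each AW gadget is explicit and of near-linear size, so Definition~\ref{def:direct} is met (for the exponential-time problems in the list one may instead plug in the blowup--conductance tradeoff of Section~\ref{sub:tradeoff} if the $o(n)$-blowup refinement is wanted). \emph{Recovery:} immediate from the first paragraph. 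Concretely, for $k$-Clique Detection, $k$-Clique Counting and Max-Clique no gadget beyond the core is needed, since $L\cup R$ together with the bipartite expander $X$ has no triangle and no $V$-vertex lies on a triangle through the expansion layer (the $L$-neighbours of a vertex are independent and $V$ has no edge to $R$), so the cliques of size $\ge 3$ in $G_{exp}$ are exactly those of $G$ — this simultaneously yields the improved parameters advertised in the introduction. For Maximum Matching, Bipartite Perfect Matching, Minimum Vertex Cover and Minimum Dominating Set we reuse AW's matching/covering/domination gadgets on the expansion layer (for the bipartite variants, composed with the bipartiteness-preserving core gadget of Claim~\ref{prop:bexp}), using that a regular bipartite expander has a perfect matching; for $H$-Subgraph Detection we reuse AW's gadget that, under the stated hypotheses (no pendant vertex in $H$, and $G$ sparse), rules out copies of $H$ meeting the expansion layer.

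For the dynamic statements, the only new ingredient is that the fully dynamic core gadget of Section~\ref{sub:dyn} provides, at every point in time, an $\Omega(1)$-expander of blowup $O(m+n)$ that still contains $G$ as an induced subgraph with balanced expansion-layer degrees, turning a sequence of $M$ updates to $G$ into $O(M)$ updates to $G_{exp}$ in $\tilde O(m+n)+O(M)$ total time with amortized $O(1)$ per update. Since AW's solution-control gadgets are static with respect to the changing graph (or change only in a trivial, locally maintainable way), they compose with the dynamic core gadget without affecting the amortized bound, giving DD-WTERs for the listed problems; combining the degree-preserving property ($\Delta\mapsto 2\Delta+3$) with the bounded-degree lower bound of Henzinger, Paz and Sricharan~\cite{henzinger2022fine} then also gives the stated Maximum Matching corollary.

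I expect the main obstacle to be $H$-Subgraph Detection: one must ensure that neither the $d$-regular bipartite expander $X$ nor the $V$-to-$L$ edges creates a spurious copy of $H$. For cliques (and, more generally, non-bipartite $H$) this is handled for free by bipartiteness of the expansion part, but for a general pendant-free bipartite $H$ one has to argue, as in AW, that any copy of $H$ in $G_{exp}$ is forced to lie entirely inside $G$ — which is where the ``no pendant vertex'' and sparsity ($m=\tilde O(n)$) hypotheses enter, and where one must check that our explicit deterministic expander (Appendix~\ref{app:explicit}) can be taken (e.g.\ with large girth) or locally patched to meet the needs of that argument. A secondary, bookkeeping obstacle is that AW's original interface used a single random expansion layer with $\Theta(\log n)+\deg(v)$ neighbours per vertex, whereas ours is the two-layer $L,R$ construction; for each gadget one must confirm that the recovery argument only used ``enough balanced neighbours in the first layer'', which is precisely what Lemma~\ref{lem:robust} supplies.
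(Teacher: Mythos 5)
Your overall strategy (keep AW's solution-control gadgets, swap in the deterministic/dynamic core gadget, re-verify expansion via Lemma~\ref{lem:robust}) is the paper's strategy, but there is a concrete gap in the clique-type problems. You claim that for $k$-Clique Detection, $k$-Clique Counting and Max-Clique ``no gadget beyond the core is needed'' because no $V$-vertex lies on a triangle through the expansion layer. That is false: in the core gadget each $x\in L$ has degree roughly $2m/n+3$ towards $V$ (the Round-Robin allocation wraps around), so two vertices $u,v\in V$ that are adjacent in $G$ can share a common neighbour $x\in L$, creating a spurious triangle $\{u,v,x\}$ and, more generally, spurious $k$-cliques. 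The paper itself points out exactly this phenomenon when explaining why the core gadget is \emph{not} bipartiteness-preserving (endpoints of an odd path in $G$ may share an $L$-neighbour). This is why the paper's reduction for these problems is \emph{not} the bare core gadget: it first builds an independent copy $V_{ind}$ of $V$ with edges $\{uv_{ind}\mid uv\in E\}$ and attaches $L$ only to $V_{ind}$, so that $L$-vertices have all their neighbours in the two independent sets $V_{ind}$ and $R$ and no clique of size $\ge 3$ can touch $L\cup R$ or use two $V_{ind}$-vertices; expansion then has to be re-proved for this modified layering (Claim~\ref{claim:vind}), which does not follow verbatim from Lemma~\ref{lem:coregadget}. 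Note also that your claim would give a one-to-one correspondence of cliques, whereas the correct count is $(k+1)c_k$ — consistent with the $k+1$ factor advertised in the introduction, and another sign that the bare core gadget is not what is being used.

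A second, related omission concerns the exponential-time problems Max-Clique and Minimum Dominating Set, which appear in the statement. There the blowup must be $\eps n$, so one cannot afford a full copy $V_{ind}$; the paper instead uses AW's ``hitting set'' $Q\subseteq V$ of size $\eps n$ (every cut either has $f(\eps)$-fraction of its volume hit by $Q$ or is already $f(\eps)$-expanding), builds $Q_{ind}$, and applies the generalized core gadget, re-proving an $\Omega(f(\eps))$-expansion bound. Crucially, AW's construction of $Q$ is \emph{randomized}, so to get a deterministic Direct-WTER the paper must derandomize it (exhaustive search over subsets, an affordable $O(2^{\eps n})$ overhead for exponential-time problems). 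Your proposal never addresses how $Q$ (or any small substitute for $V_{ind}$) is obtained deterministically, yet this is the one place where AW's gadgets themselves, not just the core, use randomness. The remaining items in your plan (pendant vertices on $L\cup R$ for Matching/Vertex Cover, the bipartiteness-preserving variant for Bipartite Perfect Matching, edge subdivision for $H$-Subgraph Detection — no girth assumption on $X$ is needed because of the subdivision — and the dynamic adaptation with two updates per inserted/deleted edge for the clique layer) do match the paper.
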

\noindent
We remark that for the $k$-Clique Detection and $k$-Clique Counting problems, our additional gadget differs from the one in~\cite{abboud2023worst}. Hence, we explain them in more detail. 

\paragraph{Maximum Matching, Bipartite Perfect Matching, Minimum Vertex Cover.}
The Direct-WTER for Maximum Matching and Minimum Vertex Cover are as follows. Given $G$, apply the core gadget to obtain $G_{exp}$. Then, for every $w \in L \cup R$, add a pendant vertex connected 
to $w$ (that is, a degree-$1$ vertex $w'$ whose only neighbor is $w$). Denote by $MM(\cdot)$ the cardinality of the maximum matching in a graph. 
By~\cite[Fact 16]{abboud2023worst}, the cardinality of the maximum matching in the resulting graph is $MM(G) + 2N$. A similar claim holds for minimum vertex cover. 
By~\cite[Lemma 17]{abboud2023worst}, the graph is an $\Omega(1)$-expander. 
Notice for the Minimum Vertex Cover problem, which is an exponential time problem, we will need to use the blowup-conductance tradeoff, resulting in an $\Omega(\eps)$-expander with $n(1+\eps + o(1))$ vertices.

For the problem of Bipartite Perfect Matching, both the input and output of the reduction are bipartite graphs. 
Hence, we employ the bipartiteness-preserving core gadget on $G$ to obtain $G_{exp}$ and then add pendant vertices as before. 
Note that the resulting graph is bipartite and contains a perfect matching if and only if $G$ contains a perfect matching. 

The adaptation of this reduction to the dynamic setting for Maximum Matching is straightforward, as we need to replace the core gadget with the dynamic core gadget. For Bipartite Perfect Matching, we combine the bipartiteness-preserving variant of the core gadget to the dynamic setting by replacing the role of $L$ with $L \cup R$. Hence, Bipartite Perfect Matching admits a DD-WTER. 

\paragraph{$k$-Clique Detection, $k$-Clique Counting}
In the following reduction, we assume w.l.o.g. that the graph contains no isolated vertices, as isolated vertices can be easily dealt with. 
Begin by taking a copy of $V$, denoted $V_{ind} = \{ v_{ind} \mid v \in V \}$, in which all vertices are independent. In addition, add the following set of edges $\{ uv_{ind} | uv \in E\}$. 
Next, apply the core gadget on the graph, but apply it only to $V_{ind}$ instead of $V \cup V_{ind}$. 
Apply the core gadget on a graph whose vertex set is $V_{ind}$ $V$ and add edges only from $V_{ind}$ to $L$.
Denote the resulting graph by $\bar{G} := (\bar{V},\bar{E})$. See \cref{fig:clique}. 
Observe that for every $k \geq 3$, all $k$-cliques in $G$ appear in $\bar{G}$. In addition, any $k$-clique in $\bar{G}$ that is not contained in $G$ must contain $k-1$ vertices in $V$, and a single vertex $w_{ind} \in V_{ind}$. 
In this case, there is a corresponding $k$-clique in $G$, obtained by replacing $w_{ind}$ with $w$. From this observation, we get the following corollary:
\begin{corollary} 
Let $c_k$ denote the number of $k$-cliques in $G$. Then the number of $k$-cliques in $\bar{G}$ is $(k+1)c_k$. 
\end{corollary}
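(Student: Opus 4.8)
The plan is to count $k$-cliques in $\bar{G}$ by partitioning them into two types according to which vertices of the construction they use. Recall that $\bar{G}$ consists of: the original graph $G$ on vertex set $V$; a copy $V_{ind}$ of $V$ whose vertices are pairwise non-adjacent; the ``mirror'' edges $\{uv_{ind} \mid uv \in E\}$ connecting $V$ to $V_{ind}$; and the core-gadget layers $L, R$ together with the regular expander $X$, where edges are added only from $V_{ind}$ to $L$. First I would observe that since $L \cup R$ induces a bipartite graph and no vertex of $V$ touches $L \cup R$, any clique of size $k \geq 3$ in $\bar{G}$ can use at most one vertex of $L \cup R$, and if it uses one such vertex $x \in L$ it can contain at most one further vertex (a $V_{ind}$-neighbor of $x$), so it has size at most $2 < k$; hence no $k$-clique meets $L \cup R$. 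Likewise, since $V_{ind}$ is independent, a $k$-clique contains at most one vertex of $V_{ind}$. So every $k$-clique in $\bar{G}$ lies entirely within $V \cup V_{ind}$ and contains either zero or exactly one vertex of $V_{ind}$.

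Next I would analyze the two cases. A $k$-clique with zero vertices in $V_{ind}$ is exactly a $k$-clique of $G$; there are $c_k$ of these. A $k$-clique with exactly one vertex $w_{ind} \in V_{ind}$ consists of $w_{ind}$ together with $k-1$ vertices of $V$ that are pairwise adjacent in $G$ and all adjacent to $w_{ind}$; by the definition of the mirror edges, $u$ is adjacent to $w_{ind}$ iff $uw \in E$, so these $k-1$ vertices lie in $N_G(w)$ and form a clique there. Equivalently, $\{w\} \cup (\text{those } k-1 \text{ vertices})$ is a $k$-clique of $G$ containing $w$. This sets up a correspondence: each $k$-clique of $\bar{G}$ of the second type maps to a pair $(C, w)$ where $C$ is a $k$-clique of $G$ and $w \in C$ is a distinguished vertex (the one being ``mirrored''). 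I would check this map is a bijection onto the set of such pairs: given $(C,w)$, replacing $w \in C$ by $w_{ind}$ yields a valid $k$-clique of $\bar{G}$ (the remaining $k-1$ vertices of $C$ are in $N_G(w)$ and pairwise adjacent, and $w_{ind}$ is adjacent to each of them via a mirror edge), and the two directions are clearly inverse to each other.

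Finally I would count: the number of pairs $(C,w)$ with $C$ a $k$-clique of $G$ and $w \in C$ is exactly $k \cdot c_k$, since each of the $c_k$ cliques contributes $k$ choices of $w$. Adding the $c_k$ cliques of the first type gives $c_k + k c_k = (k+1)c_k$ total $k$-cliques in $\bar{G}$, which is the claim.

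I do not expect any serious obstacle here; the only points requiring mild care are (i) verifying that cliques of size $k \geq 3$ genuinely cannot use the expander layers $L \cup R$ (which rests on bipartiteness of $X$ and on $V$ having no edges into $L\cup R$), and (ii) confirming the case ``exactly one vertex in $V_{ind}$'' gives a clean bijection with distinguished-vertex pairs rather than some over- or under-count — in particular that distinct choices of $w_{ind}$ or of the $V$-part really do give distinct cliques and exhaust all possibilities. Both are routine once the structure of $\bar{G}$ is laid out.
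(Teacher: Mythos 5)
Your proof is correct and follows essentially the same route as the paper: partition the $k$-cliques of $\bar{G}$ by whether they contain zero or one vertex of $V_{ind}$, and observe that the latter are in bijection with pairs (a $k$-clique $C$ of $G$, a distinguished vertex $w \in C$), giving $c_k + k\,c_k = (k+1)c_k$. You merely spell out details the paper leaves implicit (that no clique of size $k \geq 3$ can touch $L \cup R$, and that the replacement $w \leftrightarrow w_{ind}$ is a genuine bijection), which is fine.
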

\noindent 
Therefore, this reduction is applicable for $k$-Clique Detection and $k$-Clique Counting. 

Let us now prove that this additional gadget of constructing $V_{ind}$ before applying the core gadget does not ruin the expansion by much. 
\begin{claim} 
\label{claim:vind}
$\bar{G}$ is an $\Omega(1)$-expander. 
\end{claim}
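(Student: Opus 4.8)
The plan is to reduce Claim~\ref{claim:vind} to the already-proved Lemma~\ref{lem:robust} (or to the proof of Lemma~\ref{lem:coregadget}) by viewing the construction through the right choice of layers. The graph $\bar G$ consists of: the original graph $G$ on $V$; the copy $V_{ind}$ with edges $\{uv_{ind}\mid uv\in E\}$; and the core gadget $L\cup R\cup X$ attached only to $V_{ind}$. The natural move is to let $V\cup V_{ind}$ together play the role of ``$V$'' in the core-gadget analysis, and $L\cup R$ play the role of the expansion layers. Then I would check the three hypotheses of Lemma~\ref{lem:robust}. Hypothesis (2) and (3) are inherited verbatim from the core gadget applied to $V_{ind}$: $X$ is $d_X$-regular and $\phi d_X$-edge-expanding, and the degrees in $L$ lie in $[d_X, O(d_X)]$ by the Round-Robin balancing (since $\sum_{v_{ind}}(\deg_G(v)+3)=2m+3n$). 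The delicate point is hypothesis (1): every vertex of the combined set $V\cup V_{ind}$ must have $\Omega(1)$ fraction of its degree going to $L$. This fails for vertices in $V$, which have no neighbors in $L$ at all.

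So the cut analysis has to be redone slightly rather than cited as a pure black box. The right framing is a three-layer argument: think of $V$ as a layer that hangs off $V_{ind}$, and $V_{ind}$ as the layer that actually connects to $L$. First I would note every vertex $u\in V$ has $\deg_{\bar G}(u)=2\deg_G(u)$ (its $\deg_G(u)$ original neighbors in $V$, plus $\deg_G(u)$ neighbors in $V_{ind}$), so exactly half of $u$'s edges go into $V_{ind}$; hence for any cut $S$ with $S_V:=S\cap V$ one gets $e(S_V,V_{ind}\setminus S)\ge \mathrm{vol}(S_V)/2-\mathrm{vol}(S\cap V_{ind})$, the analogue of the $V$-to-$L$ bound in Equation~\eqref{eq:conductance2}. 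Second, every vertex $v_{ind}\in V_{ind}$ has $\deg_{\bar G}(v_{ind})=\deg_G(v)+(\deg_G(v)+3)=2\deg_G(v)+3$, of which $\deg_G(v)+3>\deg_{\bar G}(v_{ind})/2$ go to $L$; so $V_{ind}$ satisfies hypothesis (1) of Lemma~\ref{lem:robust} with $\eps=\Omega(1)$ relative to the sub-construction on $V_{ind}\cup L\cup R$.

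Concretely I would take a non-empty cut $S\subseteq\bar V$, write $S_V=S\cap V$, $S_{ind}=S\cap V_{ind}$, $S_L=S\cap L$, $S_R=S\cap R$, assume w.l.o.g.\ $|S_L\cup S_R|\le N$, and split into cases by which part carries most of the volume. Case~(a): $\mathrm{vol}(S_L\cup S_R)$ is a constant fraction of $\mathrm{vol}(S)$ — then the $X$-expansion gives $\Omega(1)$ conductance exactly as in the first bullet of Lemma~\ref{lem:coregadget}. Case~(b): $\mathrm{vol}(S_{ind})$ dominates — then the $V_{ind}$-to-$L$ edges do the job, as in the second bullet, using that more than half of each $V_{ind}$-vertex's edges go to $L$. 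Case~(c): $\mathrm{vol}(S_V)$ dominates — then I use the half-of-edges-to-$V_{ind}$ observation: either many of those edges cross into $V_{ind}\setminus S$ (giving cut edges directly), or $\mathrm{vol}(S_{ind})$ is itself large, reducing to Case~(b). Combining the three cases with appropriate constants yields $\phi_{\bar G}=\Omega(1)$.

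The main obstacle is bookkeeping the constants cleanly across the three-way case split, in particular handling Case~(c) without circularity: one must be careful that ``$\mathrm{vol}(S_V)$ large forces $\mathrm{vol}(S_{ind})$ large or forces many cut edges'' is quantified so the three thresholds are mutually consistent. A clean way to avoid this is to fold $V$ into the argument by observing $\mathrm{vol}_{\bar G}(V\cup V_{ind})$-vertices always send at least a $\tfrac14$-fraction of their volume ``downward'' toward $L$ along length-$\le 2$ paths, i.e.\ to first argue $\bar G$ restricted appropriately is an edge expander and then invoke Fact~\ref{prop:expansion}; but the explicit case analysis above, mirroring the proof of Lemma~\ref{lem:coregadget} with $V_{ind}$ inserted as an intermediate layer, is the most transparent route and is what I would write out.
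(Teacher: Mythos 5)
Your proposal is correct and follows essentially the same route as the paper: treat $V_{ind}$ as the intermediate layer, use that each $V$-vertex sends half its edges to $V_{ind}$ and each $V_{ind}$-vertex sends more than half its edges to $L$, and case-split on where the cut's volume is concentrated, mirroring the proof of Lemma~\ref{lem:coregadget}. The only difference is organizational: the paper folds your cases (a) and (b) into a single case by invoking the already-proved expansion of the sub-gadget on $V_{ind}\cup L\cup R$, so it needs just one threshold ($vol(S_V)$ versus $4\,vol(S_{ind})$) instead of your three-way split.
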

\noindent

The proof of this claim follows along the lines of the proof of \cref{lem:coregadget}. In particular, we split into two cases as follows. If a large volume is concentrated in $V$, then there 
must be many outer edges from $V$ to $V_{ind}$. We then employ \cref{lem:coregadget} to argue that there are many outer edges overall. If there is not much volume in $V$, the \cref{lem:coregadget} suffice to prove the claim. 
\begin{proof}
Consider any cut $S = S_V \cup S_{ind} \cup S_L \cup S_R$, where $S_V \subseteq V, S_{ind} \subseteq V_{ind}, S_L \subseteq L, S_R \subseteq R$, such that $|S_L \cup S_R| \leq N$. 
Since there are no isolated vertices, then $vol(S) > 0$. Observe that by \cref{lem:coregadget}, we have 
\[
e(S_V,\bar{V} \setminus S_V) \geq e(S_{ind}, L \setminus S_L) + e(S_L,R \setminus S_R) \geq \phi vol(S_{ind} \cup S_L \cup S_R), 
\]
for some constant $\phi > 0$. We now split into two cases as follows.
\begin{itemize}
\item $vol(S_V) \leq 4vol(S_{ind})$. In this case, we have $vol(S) \leq 5 vol(S_{ind} \cup S_L \cup S_R)$. 
Hence, by the above observation, we have $e(S,\bar{V} \setminus S) \geq \phi vol(S) /5$, and we are done.
\item $vol(S_V) > 4vol(S_{ind})$. In this case, we have: 
\begin{equation*}
e(S_V,V_{ind} \setminus S_{ind}) = e(S_V,V_{ind}) - e(S_V,S_{ind}) > vol(S_V)/2 - vol(S_V)/4 = vol(S_V)/4.
\end{equation*}
Hence, we get 
$e(S,\bar{V} \setminus S) \geq vol(S_V)/4 + \phi vol(S_{ind} \cup S_L \cup S_R) \geq \phi/4 vol(S)$, and we are done. 
\end{itemize}
\end{proof}

Adapting this reduction to the dynamic setting is by replacing the core gadget with the dynamic one, and making two edge updates to $uv_{ind},u_{ind}v$ for every edge update $uv$ in $G$. 
We remark that to deal with isolated vertices in the dynamic setting, we can ignore deletions that remove the last edge incident to a vertex. 
Such edges can not participate in a $k$-Clique, hence the correctness is preserved. 

\begin{figure}[h]
\centering
\includegraphics[scale=0.5]{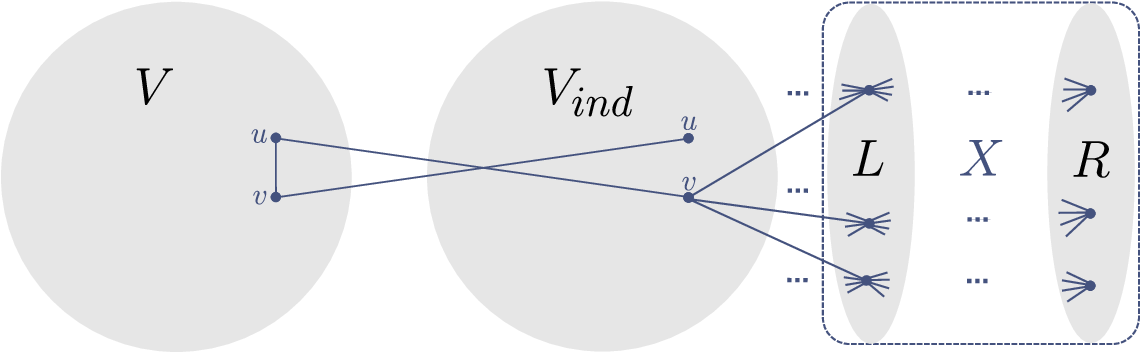}
\caption{Our Direct-WTER for $k$-Clique Detection, $k$-Clique Counting, 
and Max Clique. The dashed edges correspond to edges in $G$.}
\label{fig:clique}
\end{figure}

\paragraph{$H$-Subgraph Detection ($m = O(n)$ and $H$ does not contain pendant vertices)}
Apply the core gadget to obtain a graph $G_{exp}$. Then, subdivide each edge in $E(V,L)$ and $E(L,R)$ into a path of length $|V(H)|/2$. 
By~\cite[Claim 21]{abboud2023worst}, any copy of $H$ in the resulting graph must be a copy of $H$ in $G$. 
By~\cite[Claim 22]{abboud2023worst}, if $|V(H)| = O(1)$ then the resulting graph is an $\Omega(1)$-expander. 
As observed in~\cite{abboud2023worst}, the blowup in the number of vertices is $\Omega(m)$. Hence, it is only applicable to graphs with $m = O(n)$ edges. 
Adapting this reduction to the dynamic setting is done by replacing each edge insertion or deletion in $E(V,L) \cup E(L,R)$ by insertion of a path, resulting in a slowdown of $O(|V(H)|) = O(1)$. 

\paragraph{Max-Clique and Minimum Dominating Set}
The gadget employed for these problems is similar to the one we used for $k$-Clique Detection. Namely, we will construct the set $V_{ind}$ and then apply the core gadget. 
However, since we now deal with exponential problems, there is a need to reduce the size of $V_{ind}$ to $\eps n$. 
In~\cite[Section 4.6]{abboud2023worst}, the authors show a randomized algorithm for constructing a subset of vertices $Q \subseteq V$ of size $\eps n$, 
that ``hits'' all sparse cuts in $G$. Namely, for any cut $S \subseteq V$, either $vol_G(S \cap Q) \geq f(\eps) \cdot vol(S)$, 
or $e_G(S,V \setminus S) > f(\eps) vol(S)$, where $f(\eps) = \frac{\eps^2}{10\log(1/\eps)}$.
The only randomized component of this construction is to pick a random subset $Q \subseteq V$ of cardinality $\eps n$, such that for every vertex $v \in V$ 
for which $\deg_G(v) > \frac{\log(1/\eps)}{\eps}$, $Q$ hits an $\eps$-fraction of the neighbors of $v$.This component can be derandomized, e.g., by exhaustive search, 
incurring a (affordable) slowdown of $O(2^{\eps n})$ time. 

For the problem of Max-Clique, the reduction is by constructing $Q_{ind} = \{ u_{ind} | u \in Q\}$ and $E(V,Q_{ind}) = \{ vu_{ind} | u \in Q \}$, 
and then applying the generalized core gadget with respect to $Q_{ind}$ to obtain a conductance-blowup tradeoff. Namelt, we connect every $w_{ind} \in Q_{ind}$ to $\lceil \eps \deg(w_{ind})\rceil+3$ neighbors in $L$, and set $N = \lceil\eps n \rceil+o(n)$. Denote the resulting graph by $\bar{G}$. 
It follows from our previous observations that any $k$-clique in $\bar{G}$ corresponds to a $k$-clique in $G$. Hence, 
the maximum clique in $\bar{G}$ has the same size as the maximum clique in $G$. The blowup in the number of vertices in $\bar{G}$ is $2 \eps n + o(n)$. In addition:
\begin{claim} 
$\bar{G}$ is an $\Omega(f(\eps))$-expander. 
\end{claim}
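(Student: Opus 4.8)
The plan is to re-run the two-case ``volume dichotomy'' from the proofs of Lemma~\ref{lem:coregadget} and Claim~\ref{claim:vind}, with the hitting property of $Q$ taking over the role that, in Claim~\ref{claim:vind}, was played by the fact that \emph{every} vertex of $G$ has a copy in the intermediate layer. Let $\Gamma$ denote the subgraph of $\bar G$ induced on $Q_{ind}\cup L\cup R$. Since $Q_{ind}$ is an independent set, each $w_{ind}$ is joined to $\lceil\eps\deg_G(w)\rceil+3$ vertices of $L$ by Round-Robin (so the $L$-degrees lie in a constant-factor window), and $X$ is a $d$-regular $\phi d$-edge expander on $L\cup R$, Lemma~\ref{lem:robust} applies to $\Gamma$ with $Q_{ind}$ in the role of ``$V$'' (whose internal degree is $0$, so condition~(1) holds with room to spare); hence $\Gamma$ is a $\phi_0$-expander for an absolute constant $\phi_0>0$. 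Given a cut $S\subseteq\bar V$, write $S=S_V\cup S_Q\cup S_L\cup S_R$ for its four parts, put $W:=vol(S_Q\cup S_L\cup S_R)$, and assume $vol(S)\le vol(\bar V\setminus S)$ (here, unlike for the basic core gadget, one cannot normalize instead by $|S_L\cup S_R|\le N$, since when $G$ is dense $L\cup R$ carries far less volume than $V\cup Q_{ind}$).

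\textbf{Volume concentrated in $\Gamma$.} If $vol(S_V)\le cW$ for a small constant $c$, then $vol(S)\le(1+c)W$, and the edges leaving $S$ contain both the $\Gamma$-internal edges cut by $(S_Q\cup S_L\cup S_R)$ and the copy edges from $S_Q$ to $V\setminus S_V$. Using $\phi_0$-expansion of $\Gamma$ for the first and $e(S_Q,V\setminus S_V)\ge e(S_Q,V)-vol(S_V)$ for the second --- splitting on whether $vol_\Gamma(S_Q\cup S_L\cup S_R)$ or $e(S_Q,V)$ carries the bulk of $W$ --- one gets $e(S,\bar V\setminus S)=\Omega(vol(S))$, which is certainly $\Omega(f(\eps))\,vol(S)$. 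The one subtlety is that $\Gamma$-expansion only yields $\phi_0\cdot\min(\cdot,\cdot)$ crossing edges, so one must also check that $vol_\Gamma(S_Q\cup S_L\cup S_R)$ does not exceed roughly half of $\Gamma$'s total volume; this is where the normalization $vol(S)\le vol(\bar V\setminus S)$ enters.

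\textbf{Volume concentrated in $V$.} If $vol(S_V)>cW$ then $vol(S)<(1+1/c)\,vol(S_V)\le(2+2/c)\,vol_G(S_V)$. Apply the hitting property to the cut $S_V$ of $G$. If $e_G(S_V,V\setminus S_V)>f(\eps)\,vol_G(S_V)$, these edges already cross $S$ in $\bar G$ and we are done. Otherwise $vol_G(S_V\cap Q)\ge f(\eps)\,vol_G(S_V)$; write $P:=S_V\cap Q$ with copies $P_{ind}$. The copies in $P_{ind}\setminus S$ send their $G$-edges into $P$'s neighborhood inside $S_V$, so those edges cross $S$; the copies in $P_{ind}\cap S_Q$ instead contribute $\Omega(\eps)\cdot vol_G(P)$ worth of $\Gamma$-internal crossing edges via $\phi_0$-expansion of $\Gamma$ (each such $u_{ind}$ having $\approx\eps\deg_G(u)$ edges into $L$). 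Either way $e(S,\bar V\setminus S)=\Omega(\eps)\cdot vol_G(P)=\Omega(\eps f(\eps))\,vol(S)$. To sharpen this to the stated $\Omega(f(\eps))$, one refines the second alternative by separating $S_V$ into its vertices of $G$-degree below and above $\log(1/\eps)/\eps$: the auxiliary guarantee from the construction of $Q$ (every high-degree vertex keeps an $\eps$-fraction of its neighbors in $Q$) makes the high-degree part behave exactly as $V$ did in Claim~\ref{claim:vind}, while the low-degree part adds at most $O(\log(1/\eps)/\eps)\cdot|S_V|$ to $vol_G(S_V)$ and is absorbed into the bound.

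\textbf{Main obstacle.} The delicate regime is the last one. A vertex of $S_V$ need not have a copy at all, and the copies of $P=S_V\cap Q$ may all lie on the $S$-side, so a cut of $G$ reaches the well-connected core $\Gamma$ only ``through $Q$'' and with only an $\eps$-fraction of $Q_{ind}$'s degree landing in $L$; reconciling these two losses with the target $\Omega(f(\eps))$ is exactly what dictates the form $f(\eps)=\eps^{2}/(10\log(1/\eps))$ and why both properties of $Q$ --- hitting sparse cuts \emph{and} preserving an $\eps$-fraction of high-degree neighborhoods --- are needed. Everything outside this regime is a routine adaptation of the volume argument of Lemma~\ref{lem:coregadget}.
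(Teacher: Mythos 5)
Your overall strategy is the paper's: a volume dichotomy between the $Q_{ind}\cup L\cup R$ part and the $V$ part, the hitting property of $Q$ applied to $S_V$, expansion of the augmented layer via Lemma~\ref{lem:robust}, and the copy edges $S_V$-to-$Q_{ind}$. The genuine gap is created by your decision to drop the normalization $|S_L\cup S_R|\leq N$ and replace it with $vol(S)\leq vol(\bar V\setminus S)$. Your expansion tool is the conductance of the induced graph $\Gamma$ on $Q_{ind}\cup L\cup R$, which only yields $\phi_0\cdot\min\bigl(vol_\Gamma(S\cap\Gamma),\,vol_\Gamma(\Gamma\setminus S)\bigr)$ crossing edges, and the global normalization does \emph{not} control this minimum: precisely because $\Gamma$ carries only an $O(\eps)$-fraction of the total volume when $G$ is dense, a cut $S$ can contain \emph{all} of $Q_{ind}\cup L\cup R$ and still satisfy $vol(S)\leq vol(\bar V\setminus S)$. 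In your first case this is quietly rescued by the term $e(S_Q,V\setminus S_V)\geq e(S_Q,V)-vol(S_V)$, but in your second case it is fatal: when the hitting property returns $vol_G(S_V\cap Q)\geq f(\eps)vol_G(S_V)$ and the copies $P_{ind}$ all lie in $S_Q$ with $S\supseteq\Gamma$ (or $S\cap\Gamma$ exceeding half of $\Gamma$'s volume), the claimed ``$\Omega(\eps)\,vol_G(P)$ worth of $\Gamma$-internal crossing edges'' is simply zero, and since $vol(S_V)$ is large in this case, $e(S_Q,V)-vol(S_V)$ gives nothing either. Such cuts must be handled by applying the hitting property to the \emph{complement} side $V\setminus S_V$ (so that $e(S_Q,V\setminus S_V)$ is forced to be large), or, as the paper does, by normalizing so that the side analyzed has $|S_L\cup S_R|\leq N$, in which case the complement of your bad cut is the one analyzed and it has empty $S_Q\cup S_L\cup S_R$. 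Your parenthetical claim that this normalization ``cannot'' be used is incorrect: it is legitimate because one proves $e(S,\bar V\setminus S)\geq c\,vol(S)$ for whichever side satisfies $|S_L\cup S_R|\leq N$, and this always dominates $\min(vol(S),vol(\bar V\setminus S))$; the relative volume of $L\cup R$ is irrelevant to that logic, and the paper's proof (like those of Lemma~\ref{lem:coregadget} and Claim~\ref{claim:vind}) relies on exactly this.

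A secondary issue is your final ``sharpening'' step. Your main argument, as you acknowledge, delivers only $\Omega(\eps)\,vol_G(P)=\Omega(\eps f(\eps))\,vol(S)$ in the delicate regime, and the proposed fix --- splitting $S_V$ by the degree threshold $\log(1/\eps)/\eps$ and asserting that the low-degree part ``adds at most $O(\log(1/\eps)/\eps)\cdot|S_V|$ and is absorbed'' --- is not an argument: when the low-degree vertices carry most of $vol_G(S_V)$ (e.g., all degrees near the threshold), nothing is absorbed and the auxiliary neighborhood property of $Q$ gives no leverage, so the stated $\Omega(f(\eps))$ bound is not reached by your route. The paper's proof does not go through any such degree split; it obtains its bound directly from the hitting property together with the three-way case analysis on $vol(S_V)$ versus $vol(S_Q\cup S_L\cup S_R)$, so this entire refinement layer of your proposal is both unsubstantiated and unnecessary to its structure.
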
 
The proof splits into the following cases. Either the cut $S_V$, is hit by $Q$, in which case there are many edges to $Q_{ind}$, and the proof continues by combining the proofs of 
\cref{claim:vind} and \cref{lem:robust}. 
Otherwise, by the properties of $Q$, there must be many outer edges inside $V$ that account for $vol(S_V)$. Then, the proof follows from the previous results. 
\begin{proof} 
Consider any cut $S = S_V \cup S_{Q} \cup S_L \cup S_R$, where $S_V \subseteq V, S_{Q} \subseteq Q_{ind}, S_L \subseteq L, S_R \subseteq R$, such that $|S_L \cup S_R| \leq N$. 
Clearly, $vol(S) > 0$, assuming there are no isolated vertices in $G$. Observe that by \cref{lem:robust}, we have that 
\[
e(S_Q \cup S_L \cup S_R) \geq \eps \phi vol(S_Q \cup S_L \cup S_R) 
\]
for some constant $\phi > 0$. 
We split into the following cases: 
\begin{itemize} 
\item $vol_G(S_V \cap Q) < f(\eps) vol_G(S_V)$. In this case, by the properties of $Q$ we have that $e(S_V, V \setminus S_V) \geq f(\eps) vol_G(S_V) \geq f(\eps) vol(S_V)/2$. 
Hence, 
\begin{equation*}
\begin{split}
&e(S,\bar{V} \setminus S) \geq f(\eps) vol(S_V)/2 + e(S_Q,L \setminus S_L) + e(S_L,R \setminus S_R) \geq \\
&f(\eps) vol(S_V)/2 + \phi vol(S_Q \cup S_L \cup S_R) \geq \phi f(\eps) vol(S) /2. 
\end{split}
\end{equation*}

\item $vol_G(S_V \cap Q) \geq f(\eps) vol_G(S_V)$. In this case, by the construction we have that $e(S_V,Q) \geq f(\eps) vol_G(S_V) \geq f(\eps) vol(S_V)/2$. 
We split into two more cases, as follows. If $vol(S_V) < 4 vol(S_Q)/f(\eps)$, then by the above observation we have:
\[ 
e(S_Q \cup S_L \cup S_R) \geq \phi vol(S_Q \cup S_L \cup S_R) > \phi f(\eps) vol(S)/4.
\] 
Otherwise, we have that $e(S_V,Q \setminus S_Q)$ must be large:
\[
e(S_V,Q \setminus S_Q) = e(S_V,Q) - e(S_V,S_Q) \geq  f(\eps) vol(S_V)/2 - f(\eps) vol(S_V)/4 \geq f(\eps) vol(S_V)/4. 
\]
As in the previous case, we get $e(S,\bar{V} \setminus S) \geq f(\eps) \phi vol(S_V)/4$.
\end{itemize}
\end{proof} 

The reduction for Minimum Dominating Set is the same, but with the additional gadget of adding pendant vertices to the vertices in $L$. 
Denote the cardinality of the Minimum Dominating Set by $MDS(\cdot)$. 
By~\cite[Fact 23]{abboud2023worst}, it follows that the cardinality of the minimum dominating set in $\bar{G}$ is $MDS(G) + |L| = MDS(G) + N$. 
By~\cite[Lemma 17]{abboud2023worst}, the graph is an $\Omega(\eps)$ expander.

\section{Distributed Core Gadget}
\label{sec:distributed}

In the distributed CONGEST model, the input graph plays the additional role of serving as the communication network over which nodes can send messages of length $O( \log n )$. 
Expander decompositions in CONGEST have been utilized to obtain major breakthroughs, e.g., for Triangle Enumeration~\cite{chang2019distributed}. 
We remark that the current techniques for Direct-WTERs cannot be used in the CONGEST model. The reason is that those techniques augment the graph with additional edges, but in CONGEST, 
the edges are also communication links that cannot be added (they could be simulated, but that would incur large congestion). 
This is not so surprising since, in this model, expanders tend to be provably easier than worst-case graphs. 
For example, constructing a spanning tree in $n^{o(1)}$ CONGEST rounds~\cite{ghaffari2017distributed}, while there is a $\Omega(\sqrt{n}/\log n)$ lower bound for general graphs~\cite{peleg2000near}.

We focus on other related distributed models that may have been less studied in this context but may also benefit from expander decompositions. 
One such model is the CONGESTED-CLIQUE model, in which the input graph is separated from the communication network, which is a clique. Namely, in every round, 
each node can send $O( \log n )$ bits to every other node. Another related model which we consider is the MPC (Massively Parallel Computation) model. 
In MPC, the input graph is distributed across $M$ machines arbitrarily, each machine has a local space of $S$ words (where a word has a size of $O(\log n)$ bits), 
and we assume that the total amount of space is $M \cdot S = \Theta(n+m)$. In every MPC round, each machine can send a message consisting of $S$ words to another machine and reads at most $S$ 
words sent by other machines. We focus on \emph{sublinear-MPC}, where $S = n^{\delta}$ for some constant $0 < \delta < 1$. 

We initiate the study of WTERs in these two models. 
Our results imply that the core gadget can be adapted to both models, resulting in Direct-WTERs in CONGESTED-CLIQUE and Sublinear-MPC for all problems that were considered in the static and dynamic settings. 
Namely: 
\begin{proposition}
The following problems admit a deterministic Direct-WTER in CONGESTED-CLIQUE and Sublinear-MPC: 
Maximum Matching, $k$-Clique Detection, $k$-Clique Counting, Densest Subgraph ($m \geq 48n$), Minimum Dominating Set, Minimum Vertex Cover, Max-Cut, 
$H$-Subgraph Detection (where $m = \tilde{O}(n)$ and $H$ does not contain pendant vertices). 
\end{proposition}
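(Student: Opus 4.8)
The plan is to observe that all of these reductions output \emph{exactly} the same graph $G_{exp}$ as the Word-RAM reductions of Sections~\ref{sec:coregadget}--\ref{sec:oldwters} (the deterministic core gadget, its conductance--blowup tradeoff variant of Section~\ref{sub:tradeoff}, and the problem-specific gadgets: pendant vertices, the independent copy $V_{ind}$ for the clique problems, the length-$O(1)$ subdivisions for $H$-subgraph detection, the bicliques $F_L,F_R$ for Max-Cut, and the derandomized hitting set for the exponential-time problems). Consequently, the expansion guarantees (Lemma~\ref{lem:coregadget}, Lemma~\ref{lem:robust}), the $O(m+n)$ blowup, and the solution-preservation arguments all carry over verbatim; the only thing that must be re-established is that the construction of $G_{exp}$, and the recovery of $\mathcal{A}(G)$ from $\mathcal{A}(G_{exp})$, can be carried out in $O(1)$ rounds in CONGESTED-CLIQUE and in Sublinear-MPC. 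The single nontrivial ingredient we rely on is that the expander $X$ of Appendix~\ref{app:explicit} is \emph{strongly explicit}: given $(N,d)$ the $r$-th neighbor of vertex $j$ of $X$ is computable locally, with no communication, in $\mathrm{poly}\log$ time.

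\paragraph{CONGESTED-CLIQUE.}
We let node $i$ host the original vertex $v_i$, the $i$-th vertices $x_i\in L$ and $y_i\in R$, and the $O(1)$ auxiliary vertices of index $i$ created by the extra gadgets; since $|V_{exp}|=O(n)$ this is a valid assignment. Aggregate $n$ and $m=\tfrac12\sum_v\deg_G(v)$ in $O(1)$ rounds so that every node learns $N$ and $d=\lceil (2m+3n)/N\rceil$. A single round of all-to-all prefix sums over the values $\deg_G(v_i)+3$ lets node $i$ learn the starting offset of $v_i$ in the Round-Robin scan, hence that $N(v_i)\cap L$ is a cyclic block of $\deg_G(v_i)+3$ consecutive vertices of $L$; node $i$ announces the corresponding edges to their hosts. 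Since $\sum_j\deg_L(x_j)=2m+3n$ and every relevant vertex degree is $O(m/n+1)=O(n)$, Lenzen's routing theorem delivers all announcements in $O(1)$ rounds; the same applies when each host of $x_j$ (resp.\ $y_j$) locally computes and announces its $d=O(n)$ neighbors in $X$, and when installing the remaining gadget edges (each family has $O(m+n)$ edges and per-vertex degree $O(n)$). Recovery is $O(1)$ rounds of arithmetic and filtering, e.g.\ $MM(G)=MM(G_{exp})-2N$ with the matching of $G$ being the restriction to edges inside $V$, $c_k(G)=c_k(\bar G)/(k+1)$, $MC(G)=MC(G_{exp})-7dN$ with cut $S\cap V$, and analogously for the vertex-cover, dominating-set, clique and $H$-subgraph variants.

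\paragraph{Sublinear-MPC.}
Here it suffices to produce the $O(m+n)$ edge records of $G_{exp}$, which fit in the total space $\Theta(n+m)$, using the standard sublinear-MPC primitives---sorting, segmented prefix sums, and broadcasting of $O(1)$ aggregates---each of which runs in $O(1/\delta)=O(1)$ rounds. Sorting vertices by id and taking prefix sums of $\deg_G(v)+3$ turns each degree slot into an edge record of $E(V,L)$ without any machine having to hold a full high-degree adjacency list. For $X$ we index its $Nd$ edges globally and split this range evenly across machines; the machine owning edge index $\ell$ decodes it as ``the $r$-th $X$-neighbor of $x_j$'' and emits both endpoints via the strongly-explicit formula in $O(1)$ local time---this sidesteps the fact that $d$ may be $\Theta(n)\gg S=n^{\delta}$. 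The gadget edges are generated the same way, and recovery is again $O(1)$-round arithmetic/filtering; for the exponential-time problems we invoke the tradeoff of Section~\ref{sub:tradeoff} together with the exhaustive-search derandomization of the hitting-set gadget from Section~\ref{sec:oldwters}, whose overhead is computational (not a round bottleneck), exactly as in the Word-RAM case.

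\paragraph{Main obstacle.}
The crux is precisely the strong explicitness of $X$ (neighbor adjacency computable with zero communication), which we take from Appendix~\ref{app:explicit}; once granted, the rest is an off-the-shelf application of CONGESTED-CLIQUE routing and of the MPC sorting/prefix-sum toolkit. The only genuinely model-specific wrinkle is that the regular degree $d$ can grow to $\Theta(n)$ and thus overflows a single machine's space $n^{\delta}$ in MPC; distributing the edge list of $X$ by global edge index rather than by vertex removes this obstruction, and all remaining steps reduce to the standard constant-round primitives.
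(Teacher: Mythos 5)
Your proposal is correct and follows essentially the same route as the paper: simulate the deterministic core gadget in $O(1)$ rounds by exploiting the arithmetic structure of the Round-Robin allocation (offsets from degree prefix sums) and the strong explicitness of $X$, then note that the problem-specific gadgets and recovery steps carry over. The only deviations are implementation-level: you compute the Round-Robin offsets via all-to-all broadcast and local prefix sums where the paper routes all degrees to a single special node, and in Sublinear-MPC you invoke off-the-shelf sorting/prefix-sum primitives (plus distributing $X$ by global edge index) where the paper explicitly builds the $n^{\delta}$-ary aggregation tree for a ``parallel Round-Robin''---these are interchangeable and equally valid.
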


\paragraph{CONGESTED-CLIQUE}
We will refer to the basic components of the input graph as vertices and to the basic components of the communication network as nodes. 
At the beginning, every node corresponds to a vertex. The goal is to simulate the graph $G_{exp}$. 
Some nodes in the network will play two roles: the role of their original vertex (some $v \in V$) and the role of some vertex in $L \cup R$. 
In the first round, a special node is chosen to receive all the degrees of all the vertices in the graph. The special node simulates Round-Robin locally and computes indices $\{ i_v \}_{v \in V}$ corresponding to the starting positions given by the Round-Robin for every vertex in $V$ (i.e. $i_v$ is the position such that vertex $v$ is connected in $L$ to $x_{i_v},\ldots,x_{i_v+\deg_G(v)+2}$). It 
then sends them back in the next round so that every node corresponding to $v \in V$ has a list of neighbors that $v$ has in $G_{exp}$.
In addition, the special node computes the number of edges in $G$ and sends them to the nodes playing the role of $L \cup R$. Each of these nodes can compute the expander $X$ locally. 
In the final round, each node corresponding to $v \in V$ sends its ID to the nodes corresponding to its neighbors in $L$. 
The number of rounds this algorithm needs is $O(1)$, and the result of this algorithm is that every node that corresponds to $v \in V_{exp}$ knows the edges incident at $v$ in $G_{exp}$. 

\paragraph{Sublinear-MPC}
The goal is to compute the edges of $G_{exp}$ in $O(1)$ rounds so that, in the end, they are distributed across the machines. 
We will assume that the total space in the system suffices to store $G_{exp}$, which has a linear size in $G$. 
If we try to adapt the above algorithm to the MPC model, the first difficulty encountered is that we can not designate a special machine to receive all the degrees and simulate Round-Robin. 
That is because each machine is limited to receive at most $n^{\delta}$ words in every round. Hence, we execute a ``parallel Round-Robin'' as follows. 
We will assume that each vertex has a corresponding machine, which knows its degree in $G$ (multiple vertices can correspond to the same machine, which stores their degrees while respecting the space constraint). This can be computed deterministically in $O(1)$ rounds, following from~\cite{goodrich2011sorting}. 

To parallelize the Round-Robin algorithm, we simulate a communication tree of size $O(n)$, degree $n^{\delta}$, and depth $O(1/\delta)$, whose leaves correspond to $V$, 
and every node in the tree stores the sum of degrees of all the vertices in its subtree. 
To this end, the leaves of the tree are simulated using the machines that store the degrees (at most $O(M)$), 
Moreover, we will designate additional $O(n^{1-\delta})$ machines that will act as internal nodes; hence, the total number of machines is $O(M + n^{1-\delta}) = O(M)$. 
The values (degree sums) of the tree are computed in $O(1/\delta)$ rounds, starting from the leaves bottom-up. In particular, Clearly, the sum at the root is $2m$. 

In the next phase, we compute the indices $\{i_v\}_{v \in V}$ corresponding to the starting points given by the Round-Robin algorithm. This is done top-bottom: the root is given the degree-sums 
of its children and simulates Round-Robin on ``super-nodes'' corresponding to nodes in disjoint subtrees, whose degrees are the degree-sums of the subtrees. 
It then sends the starting indices to the super-nodes. Each super-node then splits its degree-sum to the sum of degrees of its children and repeats the process, starting the Round-Robin simulation 
from its starting index computed by its parent. This process terminates when the leaves, corresponding to $V$, receive the set of indices $\{i_v\}_{v \in V}$. This takes $O(1/\delta)$ rounds. The proof of correctness that the computed indices indeed correspond to the indices given by the Round-Robin algorithm is by induction on the levels of the tree. 

The system now has a set of machines that each store a subset of degrees of vertices of $V$ and their starting indices of neighbors in $L$, such that each vertex in $V$ corresponds to some machine. 
To compute the edges of $G_{exp}$ explicitly:
\begin{itemize}
\item We designate $O(M)$ machines that receive starting indices and degrees and compute the $V$-to-$L$ edges accordingly. This can be done because there are at most $O(m+n)$ such edges.
\item We designate $O(M)$ machines that compute and store the edges of $X$. This can be done since $X$ has size $O(m+n)$. 
\end{itemize}
\noindent
The resulting state of the system is that every edge of $G_{exp}$ is stored in some machine.

\section{Acknowledgements}
We would like to thank Thatchaphol Saranurak for clarifying a typo in the mentioned lower bound for the Densest Subgraph problem~\cite[Table 5]{henzinger2015unifying}. 

\bibliography{references}

\appendix

\section{OMv-hardness for Densest Subgraph in Denser Graphs} 
\label{app:densest}

In this appendix, we provide a modification to a reduction by Henzinger et al.~\cite{henzinger2015unifying}, from OuMv to the dynamic Densest Subgraph problem. 
The goal of this modification is to increase the number of edges in the output graph to be at least $C \cdot n$ for any given integer $C \geq 1$, thus showing that the Densest Subgraph $(m \geq Cn)$ 
variant of the problem is OMv-hard. Since our DD-WTER for the Densest Subgraph problem only works for sufficiently dense graphs, namely, graphs with $m > 42n$ edges, this implies that the lower bound of~\cite{henzinger2015unifying} applies even on $\Omega(1)$-expanders. 

\paragraph{The original proof of~\cite{henzinger2015unifying}.}
Let $M$ be a $N \times N$ binary matrix given as input for OuMv. 
In~\cite{henzinger2015unifying}, the authors define a \emph{bit gadget} for every $i,j$, such that if $M[i,j]=1$, then the corresponding 
bit gadget is a path on $6N$ vertices, and if $M[i,j]=0$, then it is an independent set. 
In addition, they define $N$ row gadgets and $N$ column gadgets as independent sets on $3$ vertices or triangles. One vertex of every row/column gadget is connected to the endpoints of 
$N$ corresponding bit gadgets. The row/column gadget dynamically changes between a triangle and an independent set according to the input vectors (as defined in the OuMv problem).
Denote this graph by $G$. We remark that the maximum density in $G$ is bounded by $2$.
\begin{observation}
The maximum density in $G$ is $\rho(G) \leq 2$. 
\end{observation}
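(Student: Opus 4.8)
The plan is to show that $G$ is $2$-degenerate, i.e.\ that every non-empty induced subgraph of $G$ contains a vertex of degree at most $2$; this immediately yields $m_S \le 2|S|$ for every non-empty $S$ by the standard peeling argument, and hence $\rho(G) = \max_{S} m_S/|S| \le 2$.

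First I would classify the vertices of $G$ into three types: (i) the vertices lying inside bit gadgets; (ii) the non-special vertices of the row and column gadgets; and (iii) the \emph{special} vertex of each row/column gadget, i.e.\ the unique vertex of that $3$-vertex gadget that is joined to the endpoints of its $N$ corresponding bit gadgets. For type (i): a bit gadget is either a path on $6N$ vertices (in which every vertex has at most two neighbours inside the gadget, the two endpoints having exactly one) or an independent set; an endpoint of a bit gadget is in addition joined to exactly one special vertex — the row special vertex on one side and the column special vertex on the other — while every non-endpoint vertex of a bit gadget has no edge leaving the gadget. Hence every type-(i) vertex has degree at most $2$ in $G$. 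For type (ii): such a vertex lies in a $3$-vertex gadget that is a triangle or an independent set and has no other incident edges, so its degree is at most $2$. Thus the only vertices of $G$ that can have degree larger than $2$ are the special vertices of type (iii).

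Next I would observe that the special vertices form an independent set: the neighbours of a special vertex are the (at most two) other vertices of its own $3$-vertex gadget together with some bit-gadget endpoints, none of which is the special vertex of another gadget. Consequently, for any non-empty $S \subseteq V(G)$ the graph $G[S]$ has a vertex of degree at most $2$: if $S$ contains a vertex of type (i) or (ii), that vertex already has degree at most $2$ in all of $G$; otherwise $S$ consists only of special vertices, so $G[S]$ is edgeless and every vertex of it has degree $0$. This establishes $2$-degeneracy. To finish, I would run the peeling induction on $|S|$: pick $v \in S$ with $\deg_{G[S]}(v) \le 2$, delete it, and apply the induction hypothesis to get $m_S \le 2 + m_{S\setminus\{v\}} \le 2 + 2(|S|-1) = 2|S|$, with the trivial base case $|S|=1$; dividing by $|S|$ gives $\rho(S) \le 2$ for all non-empty $S$, hence $\rho(G) \le 2$.

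I do not expect a genuine obstacle here: the whole argument is a structural check, and the only point that requires care is the bookkeeping in the classification — in particular, confirming that each bit-gadget endpoint is joined to exactly one special vertex (so that its total degree is $2$ and not more) and that special vertices are pairwise non-adjacent. If one prefers to avoid the induction, an equivalent route is to exhibit an orientation of $G$ in which every vertex has out-degree at most $2$: orient each path and each triangle consistently (out-degree $\le 1$) and orient every bit-gadget–to–special edge towards the special vertex; then $m_S = \sum_{v\in S}\mathrm{outdeg}_{G[S]}(v) \le 2|S|$ for every $S$.
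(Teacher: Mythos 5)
Your proof is correct and rests on the same structural fact as the paper's: every vertex outside the special row/column vertices has degree at most $2$, and the special vertices are pairwise non-adjacent, which is exactly the paper's observation that every edge of $G$ is incident to a vertex of degree at most $2$. The only difference is bookkeeping — the paper counts directly ($m_S \le 2x \le 2|S|$, where $x$ is the number of degree-$2$ vertices in $S$), whereas you package the same fact as $2$-degeneracy and run a peeling induction; both yield $\rho(G) \le 2$.
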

\noindent
In particular, $|E(G)| \leq 2|V(G)|$. To see why this observation is true, note that every edge in this graph is incident to a degree $2$-vertex. 
Therefore, for any $S \subseteq V(G)$, we have $m_S/|S| \leq 2 x/x = 2$, where $x$ is the number of vertices of degree $2$ in $S$.

Let us now describe a self-reduction that given any instance $G$ of Densest Subgraph, such that $\rho(G) < C+1/2$ for some integer $C \geq 1$, it produces a graph $G_C$ with 
$|E(G_C)| \geq C |V(G_C)|$ edges. 
This self-reduction adapts trivially to the dynamic setting, and in particular, it can be applied to the dynamic graph described above (which has maximum density $\rho(G) \leq 2$), 
to produce OMv-hard instances with $m \geq Cn$ edges, for any constant integer $C \geq 2$, in particular $C > 42$. 

\paragraph{Our modification.}
Given some integer $C \geq 1$, attach a $(2C+1)$-clique to every vertex in $G$. Namely, for every $v \in V(G)$, construct a clique containing $v$ and additional $2C$ vertices. 
Denote the resulting graph by $G_C$. Note that the number of vertices in $G_C$ is $n = |V(G)| (2C+1)$, and the number of edges is $ m \geq |V(G)| \binom{2C+1}{2} = C n$. 
The correctness of this modification follows from the next proposition. 
\begin{proposition}
\label{prop:densest}
Let $G$ be a graph such that $\rho(G) < C+1/2$. Then the graph $G_C$, obtained from $G$ by attaching a $(2C+1)$-clique to every vertex of $G$, has maximum density 
$\rho(G_C) =\rho(G)/(2C+1) + C$. 
\end{proposition}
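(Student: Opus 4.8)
The plan is to prove the two inequalities $\rho(G_C)\ge \rho(G)/(2C+1)+C$ and $\rho(G_C)\le \rho(G)/(2C+1)+C$ separately. For the lower bound I would fix a nonempty $T\subseteq V(G)$ with $\rho(T)=\rho(G)$ and take $S$ to be $T$ together with all $2C$ fresh vertices of the $(2C+1)$-clique attached to each $v\in T$. Since the only edges of $G_C$ incident to the fresh vertices of $v$'s clique lead to $v$ or to other fresh vertices of the same clique, we get $|S|=(2C+1)|T|$ and $m_S=m_T+|T|\binom{2C+1}{2}=\rho(G)|T|+(2C+1)C|T|$, using $\binom{2C+1}{2}=(2C+1)C$; dividing yields density exactly $\rho(G)/(2C+1)+C$, so $\rho(G_C)$ is at least this value.

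For the upper bound --- the heart of the argument --- I would take an arbitrary nonempty $S\subseteq V(G_C)$ and show $\rho(S)\le a$ where $a:=\rho(G)/(2C+1)+C$. Writing $Q_v$ for the set of $2C$ fresh vertices attached to $v$, let $t_v:=|S\cap(\{v\}\cup Q_v)|$ and $S_0:=S\cap V(G)$. Because distinct cliques interact only through the original graph $G$, the induced counts decompose cleanly as $|S|=\sum_{v\in V(G)}t_v$ and $m_S=m_{S_0}+\sum_{v\in V(G)}\binom{t_v}{2}$, where $m_{S_0}$ is the number of edges of $G$ induced on $S_0$. Using the defining bound $m_{S_0}\le \rho(G)\,|S_0|$, it then suffices to check that $\rho(G)|S_0|+\sum_v\binom{t_v}{2}\le a\sum_v t_v$. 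Setting $f(t):=\binom{t}{2}-at$ and $g(t):=f(t)+\rho(G)$, and noting that $|S_0|$ counts exactly those $v$ with $v\in S$ (equivalently, exactly the cliques that should contribute a ``$g$'' term), this reduces to proving $\sum_{v\in S_0}g(t_v)+\sum_{v\notin S_0}f(t_v)\le 0$.

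I expect the main work to be the termwise verification that both sums are nonpositive. For $v\notin S_0$ we have $0\le t_v\le 2C$, and $f(t)=t\bigl(\tfrac{t-1}{2}-a\bigr)\le t\bigl((C-\tfrac12)-C\bigr)\le 0$ since $a\ge C$. For $v\in S_0$ we have $1\le t_v\le 2C+1$; the function $g$ is convex (its leading coefficient is $\tfrac12$), so its maximum over $[1,2C+1]$ is attained at an endpoint, and a direct evaluation gives $g(2C+1)=0$ and $g(1)=\rho(G)-a=\tfrac{C\,(2\rho(G)-(2C+1))}{2C+1}$, which is strictly negative precisely because of the hypothesis $\rho(G)<C+\tfrac12$. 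Hence $g(t_v)\le 0$ for all relevant $t_v$, the displayed inequality follows, and combining it with the lower bound gives the claimed equality. The only genuine obstacle is getting the per-clique bookkeeping right --- in particular the ``$+\rho(G)$'' correction for cliques whose base vertex lies in $S$ --- and arranging the convexity computation so that the assumption $\rho(G)<C+\tfrac12$ is used exactly where it is needed, namely to force $g(1)<0$; everything else is routine counting.
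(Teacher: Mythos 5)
Your proof is correct, and it takes a genuinely different route from the paper. The paper argues structurally about a maximizer $S$ of $\rho$ in $G_C$: it reduces w.l.o.g.\ to a connected maximizer of density at least $C$, then uses two local exchange steps (deleting a clique vertex of $S$, respectively adding a missing clique vertex) to show each attached $(2C+1)$-clique meets $S$ in either $1$ or $2C+1$ vertices, and finally invokes $\rho(G)<C+\tfrac12$ to exclude the intersection of size $1$; this pins down $S$ as a densest subgraph of $G$ together with all of its attached cliques, from which the formula follows. You instead prove the upper bound $\rho(S)\le \rho(G)/(2C+1)+C$ uniformly for \emph{every} nonempty $S$, via the per-clique decomposition $|S|=\sum_v t_v$, $m_S=m_{S_0}+\sum_v\binom{t_v}{2}$, the bound $m_{S_0}\le\rho(G)|S_0|$, and a termwise convexity estimate on the quadratic $g(t)=\binom{t}{2}-at+\rho(G)$, with the hypothesis entering exactly as $g(1)\le 0$ (your bookkeeping and the endpoint evaluations $g(2C+1)=0$, $f(t)\le 0$ for $t\le 2C$ all check out); the matching lower-bound construction is the same as the paper's. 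What each buys: your argument dispenses with the maximizer analysis, the w.l.o.g.\ connectivity reduction, and the two perturbation cases, and it isolates transparently where $\rho(G)<C+\tfrac12$ is used (indeed only the non-strict inequality is needed for the upper bound); the paper's argument, at the cost of more case analysis, additionally yields the structural description of the optimal set (a densest subgraph of $G$ plus all its attached cliques), which your averaging argument does not directly provide but which is not needed for the stated equality.
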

Hence, one can easily compute the maximum density of $G$ from the maximum density of $G_C$. Therefore, the complexity of dynamic Densest Subgraph $(m \geq C n)$ is at least the 
complexity (up to $n^{o(1)}$ factors) of dynamic Densest Subgraph when restricted to graphs with density $< C + 1/2$, e.g., the graphs output by the reduction in~\cite{henzinger2015unifying}. 
In particular, we obtain the next corollary. 
\begin{corollary}
There is no dynamic algorithm for Densest Subgraph ($m \geq C n$) whose update time is $n^{1/3-o(1)}$ and query time $n^{2/3-o(1)}$, unless the OMv Conjecture fails. 
\end{corollary}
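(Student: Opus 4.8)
The plan is to compose the clique-attachment self-reduction of Proposition~\ref{prop:densest} with the known conditional lower bound of Henzinger et al.~\cite{henzinger2015unifying} for dynamic Densest Subgraph on general graphs. Recall that their reduction from OuMv produces a dynamic graph $G$ on $n := |V(G)|$ vertices which, by the Observation above, satisfies $\rho(G) \le 2$ at all times, and for which---assuming the OMv Conjecture---no dynamic algorithm achieves update time $n^{1/3-o(1)}$ together with query time $n^{2/3-o(1)}$; this is the bound from~\cite[Corollary 3.26]{henzinger2015unifying}, obtained from the $\Theta(N)$-updates-and-$O(1)$-queries-per-round structure of the OuMv reduction together with the fact that the graph it builds on an $N \times N$ matrix has $\Theta(N^3)$ vertices. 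I would first fix the target density parameter: take any constant integer $C \ge 2$; then $\rho(G) \le 2 < C + 1/2$, so Proposition~\ref{prop:densest} is applicable to $G$ throughout its update sequence.

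Next I would make the self-reduction dynamic. Since the vertex set of a fully dynamic instance is fixed, I attach a $(2C+1)$-clique to every vertex of $G$ once, during preprocessing (which costs $O(C^2 n) = O(n)$ time), obtaining $G_C$ with $n' := (2C+1)\,n$ vertices and $m' \ge Cn'$ edges---hence a legitimate instance of Densest Subgraph $(m \ge Cn)$. Every edge update to $G$ is then forwarded as a single identical edge update to $G_C$, and whenever $\rho(G)$ is needed we issue one density query to $G_C$ and recover $\rho(G) = (2C+1)\bigl(\rho(G_C) - C\bigr)$ in $O(1)$ time by Proposition~\ref{prop:densest}. Consequently, a dynamic algorithm for Densest Subgraph $(m \ge Cn)$ with update time $T_u(n')$ and query time $T_q(n')$ yields a dynamic algorithm for the Henzinger et al. instances with update time $T_u(n') + O(1)$ and query time $T_q(n') + O(1)$.

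Finally I would do the parameter bookkeeping: as $C$ is a constant, $n' = \Theta(n)$, so $(n')^{1/3-o(1)} = n^{1/3-o(1)}$ and $(n')^{2/3-o(1)} = n^{2/3-o(1)}$; an algorithm on $G_C$ meeting these bounds would therefore violate the Henzinger et al. lower bound and refute OMv, which is exactly the claimed statement. The one point that needs care is pinning down the base lower bound correctly: the paper itself warns that the figure quoted in the introductions of~\cite{henzinger2015unifying} and~\cite{henzinger2022fine} is erroneous, so the argument must be anchored on~\cite[Corollary 3.26]{henzinger2015unifying}, and one must also re-verify that the OuMv reduction keeps $\rho(G) \le 2$ at \emph{every} step of its update sequence (not merely at the start), since this is precisely what licenses invoking Proposition~\ref{prop:densest} with $C \ge 2$ at each query. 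Given that, the corollary follows immediately.
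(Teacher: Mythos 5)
Your proposal is correct and follows essentially the same route as the paper: compose the $(2C+1)$-clique-attachment self-reduction (which trivially dynamizes, with queries answered via the density formula of Proposition~\ref{prop:densest}) with the OuMv-based lower bound of Henzinger et al., using that the hard instances satisfy $\rho(G)\leq 2 < C+1/2$ at all times and that the constant-factor vertex blowup preserves the $n^{1/3-o(1)}$/$n^{2/3-o(1)}$ bounds. Your explicit bookkeeping (choice of $C\geq 2$, forwarding updates, the $O(1)$-time recovery of $\rho(G)$) matches what the paper leaves implicit.
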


\begin{proof}[Proof of \cref{prop:densest}]
Let $S \subseteq V(G_C)$ be a set which maximizes $\rho$ in $G_C$, i.e. $\rho(S) = \rho(G_C)$. 
We can assume w.l.o.g. that $S$ induces a connected component because all connected components induced by $S$ must have the same density. In addition, we have $\rho(S) \geq C$, 
since any $(2C+1)$-clique has density $C$. The heart of the proof is the following observation: every $(2C+1)$-clique intersected by $S$ must be fully contained in $S$. 
Essentially, we begin by proving a general claim that holds for all graphs: If some set of vertices intersects a clique but does not contain it, 
then, we can increase its density by either adding or removing a single vertex from the clique to the set, 
provided that the intersection occurs at two vertices or more.
This implies that any $(2C+1)$-clique intersected by $S$ has either one or $2C+1$ vertices in $S$, as otherwise, $S$ would not be a maximizer of $\rho$. 
Then we use the fact that $\rho(G) < C + 1/2$ to show that the former is not possible, i.e., any $(2C+1)$-clique intersected by $S$ is contained in $S$.
This shows that $S$ has the form of a densest subgraph in $G$ and attached $(2C+1)$-cliques to every vertex in contains in $G$, yielding the claimed density.

Formally, let us denote by $1 \leq k \leq 2C+1$ the size of the intersection between $S$ and a $(2C+1)$-clique which is attached to some vertex $v \in V(G)$. 
We want to show first that $k=1$ or $k=2C+1$. Assume for the sake of contradiction that $1 < k< 2C+1$. 
Note that $S$ must contain $v$, as otherwise it would not induce a connected component and have density $\geq C$. 

Since $1 < k$, there must exist some vertex $u \neq v$ in the intersection of $S$ and the attached clique of $v$. 
Moreover, since $S$ is a maximizer of $\rho$, then the subset $S'= S \setminus \{u\}$ has density less than or equal to $\rho(S)$. Hence, we have:
\[
\rho(S') = \frac{m_S -(k-1)}{|S| -1} \leq \frac{m_S}{|S|}, 
\]
which is equivalent to $m_S \leq (k-1)|S|$. 
On the other hand, since $k < 2C+1$, there must be some vertex $w \notin S$ which belongs to the attached clique of $v$. Consider the set $S'' =  S \cup \{w\}$.
Its density is also less than or equal to $\rho(S)$, hence we have:
\[
\rho(S'') = \frac{m_S + k}{|S| + 1} \leq \frac{m_S}{|S|}, 
\]
which is equivalent to $m_S \geq k |S|$, a contradiction to the previous inequality. Thus, it must be the case that $k=1$ or $k=2C+1$.

Let us now show that $k = 2C+1$. For the sake of contradiction, assume that $k=1$, i.e., $S$ contains only $v$. 
Consider the set $S'$, obtained from $S$ by including all the $2C$ remaining vertices of the clique. It has density: 
\[
\rho(S') = \frac{m_S +\binom{2C+1}{2}}{|S| + 2C},
\]
which is strictly larger than $m_S/|S|$, provided that $m_S/|S| < C + 1/2$. Let us prove then that $m_S/|S| < C + 1/2$. 

Denote by $v_S = |S \cap V(G)|$ and by $e_S$ the number of edges of $G$ induced by $S$. 
Then, by the previous claim, we have that $|S| = v_S + t \cdot 2C$ and $m_S = e_S + t \binom{C+1}{2}$, for some $t \geq 0$ representing the number of attached cliques that are fully
contained by $S$. Hence, we have that:
\[
\frac{m_S}{|S|} = \frac{e_S + t \binom{C+1}{2}}{v_S + t \cdot 2C} < C+1/2,
\]
where the inequality follows from our assumption on $\rho(G)$, namely $e_S/v_S \leq \rho(G) < C+1/2$. 
Hence, $\rho(S') > \rho(S)$, which contradics $S$ being a maximizer of $\rho$. Therefore $k = 2C+1$. 

We showed that $S$ is a subset of vertices of $G$ and their corresponding $(2C+1)$-cliques.
In particular $|S| = v_S (2C+1) $, and $m_S = e_S + v_S \binom{2C+1}{2}$. Therefore: 
\[
\rho(S) = \frac{e_S + v_S \binom{2C+1}{2}}{v_S(2C+1)} = \frac{e_S}{v_S(2C+1)} + C.
\]
\noindent 
Note that since $e_S/v_S \leq \rho(G)$, then $\rho(S) \leq \rho(G)/(2C+1) + C$. However, it must be the case that $v_S/e_S = \rho(G)$, 
since this value can be attained by picking any densest subgraph in $G$ and then adding all the attached cliques incident to it. Therefore, $\rho(S) = \rho(G)/(2C+1)+C$. 
\end{proof}

\section{Definition of Dynamic Direct-WTERs}
\label{app:ddwter}
\begin{definition}[Dynamic Direct-WTER]
\label{def:DD-WTER}
Let $\mathcal{A}$ be a dynamic graph problem, on a dynamic graph $G_0,G_1,\ldots$, where $G_t := (V,E_t)$ is the graph after the $t^{th}$ edge update. Let us denote $n := |V|$ and $m_t := |E_t|$. 
A Dynamic Direct-WTER to $\mathcal{A}$, is a dynamic algorithm, that: (1) given $G_0$, it outputs in $\tilde{O}(n+m_0)$ time an $\Omega(1)$-expander 
$G^{exp}_{i_0} := (V_{exp},E^{exp}_{i_0})$ with $i_0 := 0$, and (2) given an edge update to $G_{t-1}$, it computes in amortized $\tilde{O}(1)$ time a sequence of edge updates to $G^{exp}_{i_{t-1}}$, 
whose length is $i_t - i_{t-1}$ for some $i_t > i_{t-1}$, resulting in a graph $G^{exp}_{i_t}$, such that for every $t \in \mathbb{N}$ we have the following guarantees:
\begin{itemize}
\item The length of the sequence of updates to $G^{exp}_{t-1}$ is \emph{amortized} $O(1)$, i.e., $i_t = O(t)$. 
\item Every intermediate graph in $G^{exp}_{i_{t-1}}, G^{exp}_{i_{t-1}+1},\ldots,G^{exp}_{i_t}$ is an $\Omega(1)$-expander. In addition, the blowup in the number of vertices and edges is $|V^{exp}| \leq K$ and $|E^{exp}_{i_{t-1}+i}| = {O}(M)$ for some $K := L(n,m_0)$ and $M := M(n,m_t)$. 
\item The solution $\mathcal{A}(G_{t})$ can be computed in $\tilde{O}(1)$ time from the solution $\mathcal{A}(G^{exp}_{i_t})$. 
\end{itemize}
\end{definition}
\noindent 
Observe that if problem $\mathcal{A}$ admits a DD-WTER with linear blowup, i.e. $K = O(n)$ and $M = O(N+m_t)$, then the amortized complexity of the problem on $\Omega(1)$-expanders is equivalent to its amortized complexity on general graphs, 
up to $poly(\log n)$ factors, and assuming $m = \Omega(n)$. We remark that our definition is w.r.t. the amortized complexity and not worst-case complexity of problems. 
The reason for this is, that making our Direct-WTERs work both in \emph{deterministic and worst-case} time per update proved more challenging. Instead, we keep our focus on the deterministic regime, 
and show Direct-WTERs that work in deterministic amortized $\tilde{O}(1)$ time per update.

\section{Explicit Expander Construction}
\label{app:explicit}
In the appendix, we present an explicit, deterministic construction of bipartite $\Omega(1)$-expanders, which builds upon known constructions from the literature. 
For this part only, we will need to introduce the notion of spectral expanders, which are more common in the literature about explicit constructions. 

\paragraph{Graph spectrum.}
Consider a $d$-regular, $n$-vertex graph $G$.
Denote by $\lambda_1 \geq \lambda_2 \geq \cdots \geq \lambda_n$ the spectrum of its adjacency matrix. 
It is known that $\lambda_1 = d$, and if $G$ is not bipartite then $\lambda_n > -d$.
There is also a close relationship between $\lambda_2$ and the edge-expansion of $G$, 
which follows by the following theorem.
\begin{theorem}[Cheeger's inequality]
\[
\frac{d-\lambda_2}{2} \leq h_G \leq \sqrt{2 d ( d- \lambda_2)}
\]
\end{theorem}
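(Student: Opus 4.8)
The plan is to translate everything to the Laplacian $L := dI - A$, whose eigenvalues are $0 = d-\lambda_1 \le d-\lambda_2 \le \cdots \le d-\lambda_n$; writing $\mu := d-\lambda_2$ for the spectral gap, the claim becomes $\mu/2 \le h_G \le \sqrt{2d\mu}$. I will use the Courant--Fischer formula $\mu = \min_{x \perp \mathbf 1,\, x\ne 0}\big(\sum_{(u,v)\in E}(x_u-x_v)^2\big)\big/\big(\sum_v x_v^2\big)$ together with the elementary identity $\sum_{u<v}(x_u-x_v)^2 = n\sum_v x_v^2 - (\sum_v x_v)^2$, which let me rewrite $\mu = \min_{x \text{ non-constant}} n\sum_{(u,v)\in E}(x_u-x_v)^2 \big/ \sum_{u<v}(x_u-x_v)^2$, since the latter ratio is invariant under adding a constant to $x$. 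For the easy bound $\mu \le 2h_G$, I would take $S$ with $|S|\le n/2$ achieving $h_G$ and plug in $x = \mathbf 1_S$: the numerator becomes $n\cdot e(S, V\setminus S)$ and the denominator $|S|(n-|S|)$, so $\mu \le n\,e(S,V\setminus S)/(|S|(n-|S|)) \le 2\,e(S,V\setminus S)/|S| = 2h_G$, using $n-|S| \ge n/2$.

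The bound $h_G \le \sqrt{2d\mu}$ is the rounding direction. Let $f$ be an eigenvector of $L$ with eigenvalue $\mu$; after possibly replacing $f$ by $-f$ we may assume $|\{v : f_v > 0\}| \le n/2$. Set $g_v := \max(f_v, 0)$, supported on a set $V_+$ with $|V_+|\le n/2$. The first key step is the truncation lemma: $R(g) := \sum_{(u,v)\in E}(g_u-g_v)^2\big/\sum_v g_v^2 \le \mu$. To prove it, write $\sum_{(u,v)\in E}(g_u-g_v)^2 = \sum_{v\in V_+} g_v\,(Lg)_v$; for $v \in V_+$ we have $g_v = f_v$, and since $g_u \ge f_u$ for every $u$, $(Lg)_v = d f_v - \sum_{u\sim v} g_u \le d f_v - \sum_{u\sim v} f_u = (Lf)_v = \mu f_v = \mu g_v$, so the whole sum is at most $\mu\sum_v g_v^2$.

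The second key step is threshold rounding. Rescale so $\max_v g_v^2 = 1$, draw $t$ uniformly from $[0,1)$, and set $S_t := \{v : g_v^2 > t\} \subseteq V_+$; then $S_t \ne \emptyset$ (as $g \ne 0$) and $|S_t|\le n/2$ always. We have $\e[|S_t|] = \sum_v g_v^2$ and $\p[(u,v)\in E(S_t, V\setminus S_t)] = g_u^2 - g_v^2$ for each edge oriented so $g_u \ge g_v$, hence
\[
\e\big[e(S_t, V\setminus S_t)\big] = \sum_{(u,v)\in E}(g_u^2-g_v^2) \le \Big(\sum_{(u,v)\in E}(g_u-g_v)^2\Big)^{1/2}\Big(\sum_{(u,v)\in E}(g_u+g_v)^2\Big)^{1/2} \le \sqrt{2d\,R(g)}\,\sum_v g_v^2,
\]
where the middle step is Cauchy--Schwarz and the last uses $\sum_{(u,v)\in E}(g_u+g_v)^2 \le 2\sum_{(u,v)\in E}(g_u^2+g_v^2) = 2d\sum_v g_v^2$ (by $d$-regularity) and $\sum_{(u,v)\in E}(g_u-g_v)^2 = R(g)\sum_v g_v^2$. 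Thus $\e[\,e(S_t,V\setminus S_t) - \sqrt{2d\,R(g)}\,|S_t|\,] \le 0$, so some $t\in[0,1)$ makes this quantity nonpositive; for that $t$ the set $S:=S_t$ satisfies $1 \le |S| \le n/2$ and $e(S, V\setminus S)/|S| \le \sqrt{2d\,R(g)} \le \sqrt{2d\mu}$, which gives $h_G \le \sqrt{2d(d-\lambda_2)}$.

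The step I expect to be the main obstacle --- the one that is most often swept under the rug --- is the truncation lemma $R(g)\le\mu$. The tempting shortcut of bounding $(g_u-g_v)^2 \le (f_u-f_v)^2$ edge-by-edge only controls the numerator against $f^\top L f = \mu\,f^\top f$, which is useless because $\sum_v g_v^2 \le \sum_v f_v^2$ runs in the wrong direction; one genuinely needs the coordinatewise use of $Lf = \mu f$ restricted to $\mathrm{supp}(g)$, as above. Everything else is Courant--Fischer, Cauchy--Schwarz, and an averaging argument, so the write-up should be short.
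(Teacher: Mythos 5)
Your argument is correct: the easy direction via the shift-invariant form of the Rayleigh quotient with $x=\mathbf 1_S$, the truncation lemma proved coordinatewise from $Lf=\mu f$ on the support of $g$, and the threshold (sweep-cut) rounding with Cauchy--Schwarz and $d$-regularity is exactly the standard proof of the discrete Cheeger inequality, with the right constants $\mu/2 \le h_G \le \sqrt{2d\mu}$. The paper itself does not prove this statement---it invokes Cheeger's inequality as a known classical fact in Appendix \ref{app:explicit}---so there is no paper proof to compare against; your write-up supplies the standard argument. One pedantic point to make explicit: take $f\perp\mathbf 1$ (automatic when $\mu>0$; if $\mu=0$ the graph is disconnected and the upper bound is trivial), which guarantees that after the sign flip the truncated vector $g$ is nonzero, so the sweep sets $S_t$ are nonempty.
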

In particular, if $\lambda_2$ is sufficiently small, say $\lambda_2 \leq 0.1 d$, then $G$ 
is a $\Omega(d)$-edge expander and by \cref{prop:expansion} also an $\Omega(1)$-expander. 

In our core gadget (\cref{sec:coregadget}), we will utilize a \emph{strongly explicit} construction of expanders by Noga Alon~\cite[Proposition 1.1]{alon2021explicit}. 
This is a deterministic construction that produces simple, $d$-regular graphs on $n$ vertices, 
such that $|\lambda_i| = O( \sqrt{d} )$ for all $i=\in[2,n]$. 
By strongly explicit, we mean not only the ability to construct the entire graph deterministically, 
but also the capability to determine the $i^{th}$ neighbor of vertex $u$ in $\tilde{O}( 1 )$ time, given $i \in [d]$ and $u \in [n]$.
In particular, we can construct the entire graph in $\tilde{O}(nd)$ time. 
For any sufficiently large value of $n$ and $d\geq 3$, this construction is applicable with $n' = n(1+o_n(1))$ vertices, where the $o_n(1)$ term tends to zero as $n$ tends to infinity. 
For the sake of simplicity, we will assume that $n' = n$, 
as this will not impact our final results. 

In fact, we require a strongly explicit construction of a bipartite expander graph since Alon's graphs are inherently non-bipartite. To achieve this, we adapt the construction through a technique known as a \emph{$2$-lift}. This adaptation yields a strongly explicit bipartite graph that ensures a bounded second-largest eigenvalue.

Let's consider a strongly explicit graph $G = (V,E)$, where $|\lambda_i| \leq \lambda$ for all eigenvalues $i\in[2,n]$. We establish two copies of $V$, labeled as $V_1$ and $V_2$. Edges $(u_1, v_2) \in V_1 \times V_2$ are introduced if and only if the corresponding edge $(u, v) \in E$. This operation results in a $d$-regular bipartite graph that retains strong explicitness.

According to~\cite[Lemma 3.1]{bilu2006lifts}, the eigenvalues of this new bipartite graph correspond to those of $G$ and their additive inverses. Notably, the second largest eigenvalue remains bounded by $\lambda$ through this transformation.

\section{Problem Definitions}
\label{app:problems}
In this appendix, we provide the definitions and some background on each problem that appears in the paper. 

\paragraph{Maximum Matching and Bipartite Perfect Matching.} 
In the Maximum Matching problem, the goal is to compute (in the dynamic setting, maintain) the maximum cardinality of any \emph{matching}, that is, a
set of edges $M \subseteq E$ in which no two edges intersect. 
Relevant background for this problem in the static setting, including background on its expander-case and average-case complexities, can be found in~\cite{abboud2023worst}. 
Relevant background for this problem in the dynamic setting, including background on its worst-case lower bounds and the bounded-degree case, 
can be found in~\cite{henzinger2015unifying}. 

In the Bipartite Perfect Matching problem, the graph is bipartite, and both of its parts have the same size $n$. The goal is to decide whether the maximum matching has cardinality $n$ or not. A recent algorithm for this problem solves it in near-linear time~\cite{chen2022maximum}, but it remains hard in the dynamic setting. 
Conditional lower bounds and background for this problem can be found in~\cite{abboud2014popular} and~\cite{henzinger2015unifying}. 

\paragraph{$k$-Clique Detection, $k$-Clique Counting, and Max-Clique.}
In the $k$-Clique Detection problem, the goal is to decide whether a graph contains a $k$-clique as a subgraph. 
In the $k$-Clique Counting problem, the goal is to count the number of copies of a $k$-cliques in a graph. 
Both of these problems are fundamental ``hard'' problems in fine-grained complexity. See for reference~\cite{vassilevska2009efficient} and~\cite{ABW18}.
In the Max-Clique problem, the goal is to compute that largest $k$ such that $G$ 
contains a $k$-clique as a subgraph. 
This is one of the most well-known NP-hard problems. See~\cite{xiao2017exact} for the exponential complexity of this problem (presented for the equivalent MIS problems). 
I

\paragraph{Densest Subgraph.}
In the Densest Subgraph problem, we define the \emph{density} of a subset $\emptyset \neq S \subseteq V$ as $\rho(S) := e(S)/|S|$, 
where $e(S)$ is the number of edges in the subgraph induced by $S$. The goal is to find the maximum density of any subset in the graph. 
The goal is to compute the maximum {density} of any induced subgraph in a graph. 
In the static setting, this problem can be solved via $O(\log n )$ max-flow computations,
~\cite{goldberg1984finding}, which results in a near-linear time algorithm~\cite{chen2022maximum}. 
However, it remains hard in the dynamic setting. See~\cite{henzinger2015unifying} for some background and a lower bound. 
In the Densest Subgraph ($m \geq \alpha n$) problem, we restrict the input graph to have at least $\alpha n$ edges. 

\paragraph{$H$-Subgraph Detection for $H$ without pendant vertices.}
See~\cite{abboud2023worst} for the definition and background for this problem. 

\paragraph{Max-Cut.}
In this problem, the goal is to compute $\max_{S \subseteq V} e(S,V \setminus S)$.
This problem is one of the most well-known NP-hard problems. The fastest exponential-time algorithm for it is by Williams~\cite{williams2005new}. 

\paragraph{$st$-SP.}
Given a dynamic graph with two distinguished vertices $s$ and $t$, maintain the length of a shortest path between them in the graph. 
See~\cite{abboud2014popular} and~\cite{henzinger2015unifying} for background on this problem. 

\paragraph{Online Matrix-vector Multiplication Problem (OMv).} [\cite{henzinger2015unifying,williams2007matrix}]
In this problem, an algorithm is given an $n$-by-$n$ boolean matrix $M$, preprocesses is in polynomial time, and then it receives a sequence of $n$ vectors 
$v_1,v_2,\ldots,v_n \in \{0,1\}^n$ one by one. The goal is to compute the boolean product $Mv_i$ for every $i$ before seeing the next vector in the sequence. 
See~\cite{henzinger2015unifying} for background on this problem, where they also propose the OMv Conjecture:
\begin{conjecture}[OMv Conjecture~\cite{henzinger2015unifying}]
\label{conj:omv}
For any constant $\eps >0$, there is no $O(n^{3-\eps})$-time randomized algorithm that solves OMv with an error probability of at most $1/3$.
\end{conjecture}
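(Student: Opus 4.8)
The plan is to study a density-maximizing set $S \subseteq V(G_C)$ directly and determine its structure. First I would reduce to the case that $G_C[S]$ is connected: since distinct connected components of $G_C[S]$ span no edges between them, $\rho(S)$ is a vertex-weighted average of the densities of those components, so if $S$ maximizes $\rho$ then so does each component, and we may replace $S$ by one of them. I would then record two elementary facts. First, $\rho(G_C)\geq C$, because the attached $(2C+1)$-clique $K_v$ of any $v\in V(G)$ induces a copy of $K_{2C+1}$, of density $\binom{2C+1}{2}/(2C+1)=C$. Second, whenever $S$ meets some $K_v$ it must contain $v$ itself: otherwise connectedness forces $S\subseteq K_v\setminus\{v\}$, a clique on $2C$ vertices of density at most $C-1/2<C$, contradicting $\rho(S)\geq C$. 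In particular $S\cap V(G)\neq\emptyset$, say $|S\cap V(G)|=v_S\geq 1$.

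The heart of the argument is to pin down $k:=|S\cap K_v|$ for each $v$, showing $k\in\{0,1,2C+1\}$. Suppose $2\leq k\leq 2C$. Deleting some $u\in (S\cap K_v)\setminus\{v\}$ (which has exactly $k-1$ neighbors in $S$) cannot raise the density of the maximizer, giving $m_S\leq (k-1)|S|$; adding some $w\in K_v\setminus S$ (which would gain $k$ neighbors in $S$) also cannot, giving $m_S\geq k|S|$ — contradiction. Next I would rule out $k=1$: if $S\cap K_v=\{v\}$, adjoining the remaining $2C$ vertices of $K_v$ adds $\binom{2C+1}{2}$ edges and $2C$ vertices, so it strictly increases density exactly when $\rho(S)<\binom{2C+1}{2}/(2C)=C+1/2$. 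Thus it suffices to verify $\rho(S)<C+1/2$. At this point the case analysis (together with the fact that $v\notin S\Rightarrow K_v\cap S=\emptyset$) tells me $S$ consists of a set $S_0:=S\cap V(G)$ of $v_S$ vertices of $G$ plus, for exactly $t$ of them, their full attached clique, so $|S|=v_S+2Ct$ and $m_S=e_S+t\binom{2C+1}{2}$, where $e_S$ is the number of $G$-edges induced by $S_0$. Since $e_S\leq\rho(G)v_S<(C+1/2)v_S$ and $t\binom{2C+1}{2}=(C+1/2)\cdot 2Ct$, adding these gives $m_S<(C+1/2)|S|$, i.e.\ $\rho(S)<C+1/2$, as required. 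Hence every $K_v$ met by $S$ is fully contained in $S$, so $t=v_S$.

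Finally I would compute the optimum. With $t=v_S$ we get $|S|=v_S(2C+1)$ and $m_S=e_S+v_S\binom{2C+1}{2}$, so $\rho(S)=\frac{e_S}{v_S(2C+1)}+\frac{\binom{2C+1}{2}}{2C+1}=\frac{e_S/v_S}{2C+1}+C$. This is increasing in $e_S/v_S$, hence maximized by choosing $S_0$ to be a densest subgraph of $G$, which gives $e_S/v_S=\rho(G)$; conversely that value is attained by extending any densest subgraph of $G$ together with all attached cliques incident to it. Therefore $\rho(G_C)=\rho(G)/(2C+1)+C$.

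I expect the main obstacle to be the elimination of the case $k=1$ in the second paragraph. It is not self-contained: it uses the structural description of $S$ (each $K_v$ either lies inside $S$ or meets $S$ only at $v$) that is only available after establishing $k\in\{0,1,2C+1\}$, and it then hinges on a mediant-type inequality driven by the strict hypothesis $\rho(G)<C+1/2$. The care needed is to present this in the right order so that the argument is visibly non-circular — the bound $\rho(S)<C+1/2$ must be derived from a structure that \emph{a priori} permits $k=1$ pieces, and only then is their existence ruled out.
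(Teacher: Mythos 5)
There is a fundamental mismatch here: the statement you were asked about is the \emph{OMv Conjecture} (Conjecture~\ref{conj:omv}), which is a hardness hypothesis imported from Henzinger et al.~\cite{henzinger2015unifying}. It asserts that no $O(n^{3-\eps})$-time randomized algorithm solves online matrix--vector multiplication; it is an open conjecture, the paper offers no proof of it (nor could it be expected to), and it is only ever \emph{used} as an assumption to derive conditional lower bounds. Your proposal does not engage with this statement at all --- nothing in it concerns matrix--vector products, online algorithms, or ruling out subcubic running times.

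What you have actually written is a proof of a different result in the paper, namely Proposition~\ref{prop:densest}: that attaching a $(2C+1)$-clique to every vertex of a graph $G$ with $\rho(G) < C + 1/2$ yields $G_C$ with $\rho(G_C) = \rho(G)/(2C+1) + C$. As a proof of \emph{that} proposition your argument is essentially correct and follows the paper's own proof step for step --- the reduction to a connected maximizer, the observation $\rho(S) \geq C$, the add/remove-a-vertex argument giving $m_S \leq (k-1)|S|$ and $m_S \geq k|S|$ to force $k \in \{1, 2C+1\}$, the elimination of $k=1$ via the bound $\rho(S) < C + 1/2$ derived from the structural decomposition $|S| = v_S + 2Ct$, $m_S = e_S + t\binom{2C+1}{2}$, and the final computation. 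Your remark about presenting the $k=1$ elimination non-circularly is a fair point about exposition (the paper's own write-up has the same ordering subtlety, and in fact contains a typo writing $\binom{C+1}{2}$ where $\binom{2C+1}{2}$ is meant). But none of this constitutes progress on, or a proof of, the OMv Conjecture: establishing that conjecture would require an unconditional lower bound against all randomized algorithms for OMv, which is far beyond the scope of any densest-subgraph gadget analysis. The gap, concretely, is that you proved the wrong statement.
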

\noindent
In addition, the authors also prove that the OMv problem is essentially equivalent to the OuMv problem, where instead of a sequence of vectors, the algorithm receives 
a sequence of pairs of vectors $(u_1,v_1),\ldots,(u_n,v_n)$, and the goal is to compute the boolean product $u_i^T M v_i$ for every $i$.

\end{document}